\documentclass[11pt]{article}
\pagestyle{plain}
\usepackage{amssymb, amsmath, amsthm, url, times}
\usepackage[noadjust]{cite}

\usepackage[margin=1in]{geometry}
\usepackage{amsfonts,amsmath,color,float,graphicx,verbatim}

\usepackage{enumerate}
    

\title{Hardness Amplification and the Approximate Degree 
of Constant-Depth Circuits}
\date{}

\author{Mark Bun \thanks{Harvard University, School of Engineering and Applied Sciences. Supported by an NDSEG Fellowship and NSF grant CNS-1237235.} \\ \texttt{mbun@seas.harvard.edu}
   \and 
Justin Thaler \thanks{Simons Institute for the Theory of Computing at UC Berkeley. Parts of this work were performed while the author was a graduate student at Harvard University, School of Engineering and Applied Sciences. This work was supported by an NSF Graduate Research Fellowship, NSF grants CNS-1011840 and CCF-0915922, and a Research Fellowship from the Simons Institute for the Theory of Computing.}   \\ \texttt{jthaler@fas.harvard.edu}}

\usepackage{amsfonts,amsmath,color,float,graphicx,verbatim}

\usepackage{enumerate}

\usepackage{thmtools}
\usepackage{thm-restate}




\newcommand{\R}{\mathbb{R}}

\newcommand{\E}{\mathbb{E}}

\newcommand{\sgn}{\widetilde{\mathrm{sgn}}}
\newcommand{\PP}{\operatorname{PP}}

\newcommand{\poly}{\mathrm{poly}}

\newcommand{\bits}{\{-1,1\}}

\newcommand{\eat}[1]{}

\newcommand{\omb}{{\sc ODD-MAX-BIT}}
\newcommand{\ombmath}{${\sc ODD-MAX-BIT}$}
\newcommand{\ED}{{\sc Element Distinctness}}
\newcommand{\tto}{{\sc 2-to-1}}
\newcommand{\ttov}{{\sc 2-to-1-VS-Almost-2-to-1}}
\newcommand{\tdeg}{\deg_{\pm}}
\newcommand{\AND}{\operatorname{AND}}
\newcommand{\OR}{\operatorname{OR}}
\newcommand{\ANDOR}{\operatorname{AND-OR}}
\newcommand{\ORAND}{\operatorname{OR-AND}}

\newcommand{\etal}{{et al.}}

\usepackage{color}

\newcommand{\ignore}[1]{}
\newcommand{\provisionallyremove}[1]{}

\newcommand{\eps}{\varepsilon}

\newcommand{\disc}{\operatorname{disc}}
\newcommand{\acz}{AC$^{\mbox{0}}$}

\newcommand{\secref}[1]{Section \ref{#1}}

\renewcommand{\eqref}[1]{Eq.~(\ref{#1})}
\newcommand{\lemref}[1]{Lemma~\ref{#1}}
\newcommand{\corref}[1]{Corollary~\ref{#1}}
\newcommand{\thmref}[1]{Theorem \ref{#1}}

\newcommand{\appref}[1]{Appendix~\ref{#1}}

\newcommand{\odeg}{\widetilde{\operatorname{odeg}}}
\newcommand{\adeg}{\widetilde{\operatorname{deg}}}
\newcommand{\sel}{\operatorname{Sel}}

\newtheorem{theorem}{Theorem}
\newtheorem{lemma}[theorem]{Lemma}
\newtheorem{proposition}[theorem]{Proposition}
\newtheorem{corollary}[theorem]{Corollary}

\newtheorem{fact}[theorem]{Fact}

\newtheorem{claim}[theorem]{Claim}

\newenvironment{customthm}[1]
  {\innercustomthm}
  {\endinnercustomthm}

\begin{document}

\maketitle

\vspace{-6mm}
\begin{abstract}
We establish a generic form of hardness amplification for the approximability of constant-depth Boolean circuits by polynomials. Specifically, we show that if a Boolean circuit cannot be 
pointwise approximated by low-degree polynomials to within constant error 
in a certain one-sided sense, then an OR of disjoint copies of that circuit cannot be pointwise approximated even with very high error.
As our main application, we show that for every sequence of degrees $d(n)$, there is an explicit depth-three circuit $F: \bits^n \to \bits$ of polynomial-size such that any degree-$d$ polynomial cannot pointwise approximate $F$ to error better than $1-\exp\left(-\tilde{\Omega}(nd^{-3/2})\right)$. 
As a consequence of our main result, 
we obtain an $\exp\left(-\tilde{\Omega}(n^{2/5})\right)$ upper bound on the the discrepancy of a function in \acz, and an $\exp\left(\tilde{\Omega}(n^{2/5})\right)$ lower bound on the threshold weight of \acz, 
improving over the previous best results of $\exp\left(-\Omega(n^{1/3})\right)$ and
$\exp\left(\Omega(n^{1/3})\right)$ respectively. 

Our
techniques also yield a new lower bound of $\Omega\left(n^{1/2}/\log^{(d-2)/2}(n)\right)$ on the approximate
degree of the AND-OR tree of depth $d$, which is tight up to polylogarithmic
factors for any constant $d$,
as well as new bounds for read-once DNF formulas.
In turn, these results imply new lower bounds on the communication and circuit complexity of these classes, 
and demonstrate strong limitations on existing PAC learning algorithms.

\end{abstract}

\section{Introduction}
The $\eps$-approximate degree of a Boolean function $f: \{-1,1\}^n \rightarrow \{-1, 1\}$,
denoted $\adeg_\eps(f)$,
is the minimum degree of
a real polynomial that approximates $f$ to error $\eps$ in the 
$\ell_{\infty}$ norm. Approximate degree has pervasive applications
in theoretical computer science. For example, lower bounds on
approximate degree underly 
many tight lower bounds on quantum query complexity (e.g., \cite{aaronsonshi, beals, ambainis, klauck, sherstov-direct}), and
have been used to resolve several long-standing open questions in communication complexity \cite{patternmatrix, patternmatrixfollowup, ada, spalek, bvdw, shizhu, sherstov-disjoint, sherstov-disjoint2} (see the survey paper by Sherstov \cite{sherstovsurvey}). Meanwhile,
upper bounds on approximate degree underly 
many of the fastest known learning algorithms,
including PAC learning DNF and read-once formulas \cite{ks, spalekandor}, agnostically learning disjunctions \cite{kalaiagnostic},
and PAC learning in the presence of irrelevant information \cite{ksomb, thalercolt}.  

Despite the range and importance of these applications, large gaps remain in our understanding of approximate degree. The approximate degree of any \emph{symmetric} Boolean function has been understood since Paturi's 1992 paper \cite{paturi}, but once we move beyond symmetric functions, few general results are known.

In this paper, we perform a careful study of the approximate 
degree of constant-depth Boolean circuits. In particular,
we establish a generic form of hardness amplification for the pointwise approximation of
small depth circuits by low-degree polynomials: we show that if a Boolean circuit $f$ cannot be 
pointwise approximated to within constant error 
in a certain one-sided sense by polynomials of a given degree, then the circuit $F$ obtained by taking an OR of disjoint copies of $f$ cannot be  approximated even with error exponentially close to 1. Notice that if $f$ 
is computed by a circuit of polynomial size and constant depth, then so is $F$.

Our proof extends
a recent line of work \cite{sherstovFOCS, lee, bunthaler, sherstovnew}
that seeks to prove approximate degree lower bounds by constructing explicit 
\emph{dual polynomials}, which are dual solutions
to a linear program that captures the approximate degree of any function. Specifically, we show that given a dual polynomial 
demonstrating that $f$ cannot be approximated to within constant
error, we can construct a dual polynomial demonstrating
that $F$ cannot be approximated even with error exponentially close
to 1.

As the main application of our
hardness amplification technique, for any $d>0$
we exhibit an explicit function $F: \bits^n \to \bits$ computed by a polynomial size circuit of depth three for which any degree-$d$ polynomial cannot pointwise approximate $F$ to error $1-\exp\left(-\tilde{\Omega}(nd^{-3/2})\right)$. We then use this result
to obtain new bounds on two quantities
that play central roles in learning
theory, communication complexity, and circuit complexity:
\emph{discrepancy} and \emph{threshold weight}.
Specifically, we prove a new upper bound of $\exp\left(-\tilde{\Omega}(n^{2/5})\right)$ for the discrepancy
of a function in \acz, and a new lower bound of $\exp\left(\tilde{\Omega}(n^{2/5})\right)$ for the threshold weight of \acz. As a second application, our hardness amplification result allows us to resolve, up to polylogarithmic factors, the approximate degree of AND-OR trees of arbitrary constant depth. Finally, our techniques also yield new lower bounds for read-once DNF formulas.

\section{Overview of Results and Techniques}
\label{sec:overview}
This section provides an overview of our results and the techniques we use to establish them.
We defer detailed proofs to later sections.

\subsection{Hardness Amplification}

Recall that the $\eps$-approximate degree of a Boolean function $f$
is the minimum degree of a real polynomial that pointwise approximates $f$ to error $\eps$. Another fundamental
measure of the complexity of $f$ is its \emph{threshold degree},
denoted $\deg_\pm(f)$. The threshold degree of $f$ is
the least degree of a real polynomial that agrees in sign with $f$ at all Boolean inputs.

Central to our results is a measure of the complexity of a Boolean function that we call \emph{one-sided approximate degree}. This quantity, which we denote by $\odeg_{\eps}(f)$, is an intermediate complexity measure that lies between $\eps$-approximate degree and threshold degree. Unlike approximate degree and threshold degree, one-sided approximate degree
treats inputs in $f^{-1}(1)$ and inputs in $f^{-1}(-1)$ asymmetrically. 

More specifically, $\odeg_\eps(f)$ captures the least degree of a \emph{one-sided approximation} for $f$. Here,
a one-sided approximation $p$ for $f$ is a polynomial that approximates $f$ to error at most $\eps$
at all points $x \in f^{-1}(1)$,  and satisfies the threshold condition $p(x) \leq -1+\eps$ at all points $x \in f^{-1}(-1)$.
Notice that $\odeg_{\eps}(f)$ is always \emph{at most}  $\adeg_{\eps}(f)$, but can be much smaller. Similarly, $\odeg_{\eps}(f)$ is always \emph{at least} $\deg_\pm(f)$, but can be much larger.


One-sided approximate degree is the
complexity measure that we amplify for constant-depth circuits:
given a depth $k$ circuit $f$ on $m$ variables that has one-sided approximate degree greater than $d$, we show how to generically transform $f$ 
into a depth $k+1$ circuit $F$ on $t \cdot m$ variables such that $F$ cannot be pointwise approximated by degree $d$ polynomials
even to error $1-2^{-t}$.\footnote{Follow-up work by Sherstov \cite{sherstovmp} has established a lower bound on the \emph{threshold degree} of $F$. Specifically, he has shown that there is some constant $c$ such that $\deg_\pm(F) > \min\{ct, d\}$. See Section \ref{sec:discussion} for further discussion of this result.}

\begin{theorem}\label{thm:big}
Suppose $f: \bits^m \to \bits$ has one-sided approximate degree $\odeg_{1/2}(f) > d$. Denote by $F: \bits^{m \cdot t} \to \bits$ the block-wise composition $\OR_t(f, \dots, f)$, where $\OR_t$ denotes the OR function on $t$ variables. Then $F$ cannot be pointwise approximated by degree-$d$ polynomials even to within error $1-2^{-t}$ by degree-$d$ polynomials. That is, the $(1-2^{-t})$-approximate degree of $F$ is greater than $d$.
\end{theorem}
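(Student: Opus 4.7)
The plan is to prove the theorem by constructing, via LP duality, an explicit dual witness $\Psi : \bits^{mt} \to \R$ certifying that $\adeg_{1-2^{-t}}(F) > d$. First I will dualize the hypothesis: strong LP duality applied to the one-sided approximate degree LP yields a one-sided dual witness $\psi : \bits^m \to \R$ satisfying $\|\psi\|_1 = 1$, $\langle \psi, p\rangle = 0$ for every polynomial $p$ of degree at most $d$, $\langle \psi, f\rangle > 1/2$, and the sign condition $\psi(x) f(x) \geq 0$ for every $x \in f^{-1}(-1)$ (equivalently, $\psi \leq 0$ on $f^{-1}(-1)$). Setting $\beta := \sum_{x \in f^{-1}(-1)} |\psi(x)|$, orthogonality to the constant polynomial combined with $\|\psi\|_1 = 1$ and $\langle \psi, f\rangle > 1/2$ force $\beta > 1/4$, a quantitative handle I will use throughout.

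The main step is the construction of $\Psi$. The natural candidate $\Psi_0(\mathbf{x}) = \prod_{i=1}^t \psi(x_i)$ is automatically orthogonal to all polynomials of degree at most $d$ in the $mt$ variables: any such polynomial expands as a sum of product monomials $c_\alpha \prod_i x_i^{\alpha_i}$ with $\sum_i |\alpha_i| \le d$, so each $|\alpha_i| \le d$ and some factor contributes $\langle \psi, x^{\alpha_i}\rangle = 0$. However, $\Psi_0$ achieves only correlation $\Theta(\beta^t)$ with $F$, exponentially far from the target $1 - 2^{-t}$. To amplify, I will use a Sherstov-style dual block construction, building $\Psi(\mathbf{x}) = 2^t\, \phi(\sign(\psi(x_1)), \dots, \sign(\psi(x_t)))\,\prod_i |\psi(x_i)|$ where $\phi : \bits^t \to \R$ is an appropriate degree-zero dual witness for $\OR_t$ concentrating its negative mass on the all-$(-1)$ input. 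The one-sided condition ensures that whenever $f(x_i) = -1$ one has $\sign(\psi(x_i)) = -1$, so the outer $\phi$ correctly detects the $F^{-1}(-1)$ region; a small correction, supported on $(f^{-1}(-1))^t = F^{-1}(-1)$, may be added to compensate for inputs where $\psi$ is negative on $f^{-1}(1)$.

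The third step verifies the two properties of $\Psi$. Orthogonality to degree-$d$ polynomials follows by expanding $\phi$ of signs as a linear combination of product functions and applying the block argument sketched above — each term vanishes by property (2) of $\psi$. For the correlation, I decompose $\langle \Psi, F\rangle = \sum_\mathbf{x}\Psi(\mathbf{x}) - 2 \sum_{\mathbf{x} : F(\mathbf{x}) = -1}\Psi(\mathbf{x})$: the first sum vanishes by orthogonality to constants, and the second factorizes across blocks into a product involving $\beta$ and terms supplied by $\phi$. A parallel bound on $\|\Psi\|_1$ — where the one-sidedness prevents cancellations that would otherwise inflate the norm — yields the ratio $\langle\Psi,F\rangle / \|\Psi\|_1 \geq 1 - 2^{-t}$ provided the correction is calibrated correctly.

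The main obstacle is achieving the exponentially strong correlation bound in the construction. The dual block method is tightest when applied to a \emph{pure} dual witness with $\sign(\psi) = f$ everywhere, but our hypothesis supplies only the weaker one-sided condition. The delicate point is that $\sign\psi$ is unconstrained on $f^{-1}(1)$, so there can be inputs with $\sign(\psi(x_i)) = -1$ yet $f(x_i) = 1$ — ``leakage'' that threatens both orthogonality and sign alignment with $F$. The one-sidedness on $f^{-1}(-1)$ is precisely strong enough to confine this leakage to $F^{-1}(1)$, where it can be absorbed by the outer $\phi$. Balancing these constraints — preserving orthogonality, controlling $\|\Psi\|_1$, and attaining near-perfect correlation with $F$ — is the combinatorial heart of the argument.
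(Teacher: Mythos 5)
Your overall architecture matches the paper's: you LP-dualize the one-sided hypothesis to obtain $\psi$ with pure high degree exceeding $d$, one-sided sign, and correlation $>1/2$ with $f$; you form the Sherstov--Lee block composition $\zeta(\mathbf{x}) = 2^t\,\phi(\sgn\psi(x_1),\dots,\sgn\psi(x_t))\prod_i|\psi(x_i)|$ with $\phi$ supported on $\{\mathbf{1},-\mathbf{1}\}$ and sign-biased toward $-\mathbf{1}$; and you identify that one-sidedness is what keeps the sign vector $\mathbf{1}$ perfectly aligned with $F=1$ (no false positives). Your quantitative observation $\beta := \sum_{x \in f^{-1}(-1)}|\psi(x)| > 1/4$ is correct and is exactly the lever the paper uses (it is what makes each ``flip probability'' $< 1/2$).

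However, there is a concrete error in your sketch that would derail the correlation bound as you have planned it. The identity $(f^{-1}(-1))^t = F^{-1}(-1)$ is false: for $\OR_t$ (with $-1$ denoting TRUE), $F(\mathbf{x})=-1$ iff \emph{some} $f(x_i)=-1$, so $F^{-1}(-1)$ is the \emph{complement} of the product set $(f^{-1}(1))^t$, and is not itself a product. Consequently your plan to factorize $\sum_{\mathbf{x}:F(\mathbf{x})=-1}\zeta(\mathbf{x})$ block-by-block fails as written. The repair is either (i) use $\langle\zeta,\mathbf{1}\rangle=0$ to rewrite $\langle\zeta,F\rangle = 2\sum_{F(\mathbf{x})=1}\zeta(\mathbf{x})$, which \emph{does} factorize over the product set $F^{-1}(1)=(f^{-1}(1))^t$: conditioning on the sign vector, $z=\mathbf{1}$ contributes $2^t\cdot\frac12\cdot(1/2)^t = 1/2$ (all positive mass of $\psi$ lies in $f^{-1}(1)$ by one-sidedness), $z=-\mathbf{1}$ contributes $-2^{t-1}\alpha^t$ with $\alpha := 1/2-\beta < 1/4$ the negative mass of $\psi$ on $f^{-1}(1)$, and all other $z$ contribute $0$, giving $\langle\zeta,F\rangle = 1 - 2^t\alpha^t > 1-2^{-t}$; or (ii) condition on $z=(\sgn\psi(x_i))_i$ directly, as the paper does. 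Relatedly, the ``small correction supported on $(f^{-1}(-1))^t$'' is a red herring: the unmodified $\zeta$ already achieves the stated correlation with $\|\zeta\|_1=1$, and a correction supported on $(f^{-1}(-1))^t$ would in any case live nowhere near the problematic inputs (those with $\psi(x_i)<0$ but $f(x_i)=1$, which lie in $F^{-1}(1)$). Once these two points are fixed, your argument coincides with the paper's.
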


\noindent \textbf{Remark:} Theorem \ref{thm:big} demonstrates that $\odeg(f)$ admits a form of hardness amplification within \acz, which does not generally hold for the ordinary approximate degree.
Indeed, Theorem \ref{thm:big} fails badly if the condition $\odeg_{1/2}(f)> d$ is replaced
with the weaker condition $\adeg_{1/2}(f) > d$ (in fact, $f=\OR_m$ is a counter-example; see the discussion in Section \ref{sec:highlow} for details). 
\medskip

A \emph{dual formulation} of one-sided approximate degree
was previously exploited by Gavinsky and Sherstov
to separate the multi-party communication
versions of NP and co-NP \cite{gavinskysherstov}, 
as well as
by the current authors \cite{bunthaler} and independently by Sherstov \cite{sherstovnew} to resolve the approximate degree of the two-level AND-OR tree.
In this paper, we introduce the primal formulation of
one-sided approximate
degree, 
which allows us to express Theorem \ref{thm:big} as a hardness amplification result. 
We also argue for the importance of one-sided approximate degree as a Boolean function complexity measure in its own right.

\eat{
Indeed, subsequent work by Sherstov (discussed further in Section \ref{sec:sherstov}) has strengthened Theorem \ref{thm:big} to show that one-sided approximate degree is the natural complexity to amplify into \emph{threshold degree} lower bounds. Moreover, Kanade and Thaler \cite{kanadethaler} have already shown algorithmic uses of one-sided approximate degree in the framework of \emph{reliable agnostic learning}. In particular, they show that upper bounds on one-sided approximate degree  imply fast algorithms for the model of negative reliable learning, which captures learning in settings where false negatives are costlier than false positives. In light of these recent developments, we believe that the notion of one-sided approximate degree will continue to enable progress on the analysis of Boolean functions.
}




\medskip
\noindent \textbf{Prior Work on Hardness Amplification for Approximate Degree.} 
For the purposes of this discussion, we informally consider
a hardness amplification result
for approximate degree to be any
statement of the following form:
Fix two functions $f: \{-1, 1\}^{m} \rightarrow \{-1, 1\}$ and $g:\{-1, 1\}^t \rightarrow \{-1, 1\}$.
Then the composed function $g(f, \dots, f):\{-1, 1\}^{m\cdot t} \rightarrow \{-1, 1\}$ is strictly harder to approximate
in the $\ell_{\infty}$ norm by low-degree polynomials than is the original function $f$.

We think of such a result as establishing
that application of the outer function $g$ to $t$ disjoint copies of $f$ 
amplifies the hardness of $f$. Here we consider polynomial degree to be a resource, and 
``harder to approximate'' can refer either to
the amount of resources required for the approximation, to the error of the approximation,
or to a combination of the two. 

Two particular kinds 
of hardness amplification results for approximate degree have received
particular attention.
\emph{Direct-sum} theorems focus on amplifying the degree required
to obtain an approximation, but do not focus on amplifying the error. For example,
a typical direct-sum theorem  identifies
conditions on $f$ and $g$ that guarantee 
that $\adeg_{\eps}(g(f, \dots, f)) \geq \adeg_{\eps}(g) \cdot \adeg_{\eps}(f)$. 
In contrast, a \emph{direct-product} theorem focuses on amplifying both the
error and the minimum degree required to achieve this error. An \emph{XOR lemma} is a special case of either type of theorem where the combining function $g$ 
is the XOR function. Ideally, an XOR lemma of the direct-product form
establishes that 
there exists a sufficiently small
constant $\delta > 0$ such that $\adeg_{1-2^{-\delta t}}(\text{XOR}_t(f, \dots, f)) \geq t \cdot \adeg_{1/3}(f)$.
That is, an XOR lemma establishes that approximating the XOR of $t$ disjoint copies of $f$ requires a $t$-fold blowup in degree
relative to $f$, even if one allows error exponentially close to 1.

O'Donnell and Servedio \cite{os} proved an XOR lemma for \emph{threshold degree}, establishing
that $\text{XOR}_t(f, \dots, f)$ has threshold degree $t$ times the threshold degree of $f$.
In later work, Sherstov \cite{sherstovFOCS} proved a direct sum result for approximate degree
that holds
whenever the combining function $g$ has low block-sensitivity. His techniques also capture
O'Donnell and Servedio's XOR lemma for threshold degree as a special case. 
In \cite{sherstov-direct}, Sherstov 
 proved a number of hardness amplification results for approximate degree. 
 Most notably, he proved an optimal XOR lemma, as well
 as  a direct-sum theorem
 that holds whenever the combining function has close to maximal approximate degree (i.e., approximate degree $\Omega(t)$). 
 Sherstov used his XOR lemma to prove direct product theorems for quantum query complexity, and 
in subsequent work \cite{sherstov-disjoint}, to show direct product theorems for the multiparty communication of set disjointness. 

\paragraph{Comparison to Prior Work.} 
In this paper, we are interested in establishing approximate degree lower bounds
for constant-depth circuits over the basis $\{\AND, \OR, \text{NOT}\}$. For this purpose,
it is essential to consider combining functions (such as OR, see Theorem \ref{thm:big}) 
that are themselves in \acz, ruling out
the use of XOR as a combining function. 
Our hardness amplification result (Theorem \ref{thm:big}) 
is orthogonal to direct-sum theorems: direct-sum theorems focus on amplifying
degree but not error, while Theorem \ref{thm:big} focuses on amplifying
error but not degree. Curiously, Theorem \ref{thm:big} is nonetheless 
a critical ingredient in our proof of
a direct-sum type theorem for AND-OR trees of constant depth (Theorem \ref{thm:andor}).

\paragraph{Proof Idea.}

As discussed in the introduction, our proof of Theorem \ref{thm:big} relies on a dual characterization of one-sided approximate degree (cf. Theorem \ref{thm:oprelim}). 
Specifically, for any $m$-variate Boolean function $f$ satisfying $\odeg_{1/2}(f) > d$, 
there exists a dual object $\psi: \{-1, 1\}^m \rightarrow \mathbb{R}$ that witnesses this fact --- we refer to $\psi$ as a ``dual polynomial'' for $f$. As
we show in Theorem \ref{thm:oprelim}, $\psi$ satisfies three important properties: (1) $\psi$ has high correlation with $f$, (2)
$\psi$ has zero correlation with all polynomials of degree at most $d$, and (3) $\psi(x)$ agrees in sign with $f(x)$ for all $x \in f^{-1}(-1)$. 
We refer to the second property by saying that $\psi$ has \emph{pure high degree} $d$, and we refer to the third
property by saying that  $\psi$ has \emph{one-sided error}.

Our proof proceeds by taking a dual witness $\psi$
to the high one-sided approximate degree of $f$,
and a certain dual witness
$\Psi$ for the function $\OR_t$,
and combining them 
to obtain a dual witness $\zeta$ for the fact that 
$\adeg_{1-2^{-t}}(\OR_{t}(f, \dots, f)) > d$.
Our analysis of the combined dual witness crucially
exploits two properties: 
first, that $\psi$ has one-sided error and second, that
the vector whose entries are all equal to $-1$ has very large (in fact, maximal) Hamming distance
from the unique input in $\OR_t^{-1}(1)$.

Our method of combining the two dual witnesses
was first introduced by Sherstov \cite[Theorem 3.3]{sherstovFOCS} and independently by Lee \cite{lee}. This method has also been exploited
by the present authors in \cite{bunthaler} to resolve the approximate degree
of the two-level AND-OR tree, and by Sherstov \cite{sherstov-direct}
to prove direct sum and direct product theorems for polynomial
approximation. However, as discussed above, prior work used this method of combining dual witnesses 
exclusively to
amplify the \emph{degree} in the resulting lower bound; 
in contrast, we use the combining method in the proof of Theorem \ref{thm:big} to amplify the \emph{error} 
in the resulting lower bound. 

From a technical perspective, the primary novelty in the proof of Theorem \ref{thm:big} lies in our choice of an appropriate (and simple) dual witness $\Psi$
for $\OR_t$, and the subsequent analysis
of the correlation of the combined witness $\zeta$ with $\OR_t(f, \dots, f)$. By our choice of $\Psi$, we are able to show that 
 $\zeta$ has correlation with $\OR_t(f, \dots, f)$ that is \emph{exponentially} close to $1$, yielding a lower bound even on the degree of approximations with very high error.

\eat{For purposes of this discussion, we consider a hardness amplification result
for approximate degree to be any
statement of the following form: 
Fix two functions $f: \{-1, 1\}^{m} \rightarrow \{-1, 1\}$ and $F:\{-1, 1\}^t \rightarrow \{-1, 1\}$,
and assume the $\eps$-approximate degree of $f$ is $d$. Then 
the ``combined'' function $F(f, \dots, f)$ has $\eps'$-approximate degree
$d'$, for some $\epsilon' \geq \eps$ and $d' \geq d$.

Two particular kinds 
of hardness amplification results for approximate degree have received
considerable attention.
A \emph{direct sum} theorem for approximate degree is any result of the above form
in which $\epsilon'=1/3$. That is, a direct sum theorem
asserts that $F(f, \dots, f)$ requires large degree to approximate
to \emph{constant} error.
An \emph{XOR lemma} for approximate degree is a result of
the above form in which the ``combining'' function $F$ is the XOR function, $\eps=1/3$
$\epsilon'=1 - 2^{-t}$, and $d=\Omega(td)$. That is, an XOR lemma says

that approximating the XOR of $t$ disjoint copies of $f$ requires a $t$-fold blowup in degree
relative to $f$, even if one allows error exponentially close to 1.

Sherstov \cite{sherstovFOCS} proved a direct-sum theorem for approximate degree
that holds whenever the combining function $F$ has low \emph{block-sensitivity},
or the inner function $f$ has 

Sherstov's result generalized an earlier result of O'Donnell and Servedio \cite{os},
who proved an
XOR lemma for \emph{threshold-degree}. 
Later work by Sherstov \cite{sherstov-direct} established an XOR lemma 
for approximate degree. Sherstov used this result to prove direct product theorems for quantum query complexity, and 
(in subsequent work \cite{sherstov-disjoint}) to show direct product theorems for the multiparty communication of set disjointness. 

One can view our hardness amplification result (Theorem \ref{thm:big}) as analogous to Sherstov's XOR Lemma, except that our result applies when OR,
rather than XOR, is the ``combining'' function. Notice that 
that if $f$ is in \acz\ and OR is the combining function, then the combined function is also in \acz. 
This is not the case when the combining function is XOR, as XOR itself is not in \acz,
and thus XOR lemmas 

}
\subsection{Lower Bounds For \acz} 
\label{sec:introacz}

\subsubsection{A New One-Sided Approximate Degree Lower Bound for \acz}
\label{sec:intro-odeg-acz}
Our ultimate goal is to use Theorem \ref{thm:big} to construct a function $F$ in
\acz\ that is hard to approximate by low-degree polynomials even with error exponentially
close to 1. However, in order to apply Theorem \ref{thm:big}, 
we must first identify an \acz\ function $f$ such that $\odeg_{1/2}(f)$ is large.

To this end, we identify fairly general conditions guaranteeing that the one-sided approximate
degree of a function is \emph{equal} to its approximate degree, up to a logarithmic factor. To express our result, let
$[N]=\{1, \dots, N\}$, and let $m, N, R$ be a triple of positive integers such that $R \geq N$, and
$m = N \cdot \log_2 R$. In most cases, we will take $R = N$.
We specifically consider Boolean functions $f$ on $\{-1, 1\}^m$ that interpret their input $x$ as the values of a function $g_x$ mapping $[N] \rightarrow [R]$. That is, we break $x$ up into $N$ blocks each of length $\log_2 R$, and regard each block $x_i$ as the binary representation of $g_x(i)$. Hence, we think of $f$ as computing some \emph{property} $\phi_f$ of
functions $g_x: [N] \rightarrow [R]$. We say that a property $\phi$ is \emph{symmetric} if for all $g: [N] \to [R]$, all permutations $\sigma$ on $[R]$,
and all permutations $\pi$ on $[N]$, it holds that $\phi(g) = \phi(\sigma \circ g \circ \pi)$.
\eat{We say that a function $g: [N] \rightarrow [R]$ satisfies the \emph{small-range} condition if in fact
$g(i) \in [N] \subseteq [R]$
for all $i \in [N]$.}

\begin{restatable}{theorem}{thmsymmetric} \label{thm:symmetric}
Let $f: \{-1, 1\}^m \rightarrow \{-1, 1\}$ be a Boolean function corresponding to a symmetric property $\phi_f$
of functions $g_x : [N] \rightarrow [R]$. 
\eat{Suppose that for all pairs $x, y \in f^{-1}(-1)$ such that $g_x, g_y$ both satisfy the small-range condition, there is a pair of permutations $\sigma, \pi$ such that 
$g_x = \sigma \circ g_y \circ \pi$.}
Suppose that for every pair $x, y \in f^{-1}(-1)$, there is a pair of permutations $\sigma$ on $[R]$ and $\pi$ on $[N]$ such that 
$g_x = \sigma \circ g_y \circ \pi$. Then $\odeg_\eps(f) \ge \frac{1}{\log_2 R} \cdot \adeg_{\eps}(f)$ for all $\eps > 0$. 
\end{restatable}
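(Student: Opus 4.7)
The plan is to start with a one-sided $\varepsilon$-approximating polynomial $p$ of minimum degree $d = \odeg_\varepsilon(f)$ and convert it into a genuine two-sided $\varepsilon$-approximation $q$ of degree at most $d \cdot \log_2 R$. The conversion is by symmetrization: define
\[
\bar p(x) = \mathbb{E}_{\sigma \in S_R,\, \pi \in S_N}\bigl[\, p(y_{\sigma,\pi}) \,\bigr],
\]
where $y_{\sigma,\pi}$ is the string that encodes the function $\sigma \circ g_x \circ \pi : [N] \to [R]$.

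First I would track the degree of $\bar p$. Permuting blocks (the action of $\pi$) is a pure relabeling of variables and costs nothing. In contrast, a permutation $\sigma \in S_R$ acts on each block $x_i \in \{-1,1\}^{\log_2 R}$ as a bijection of the Boolean hypercube, and each coordinate of the image $\sigma(x_i)$ is a multilinear polynomial in $x_i$ of degree at most $\log_2 R$. Substituting these expressions into $p$ and then averaging preserves the bound, giving $\deg \bar p \le d \cdot \log_2 R$.

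Next I would verify the approximation. Because $\phi_f$ is symmetric, $f(y_{\sigma,\pi}) = f(x)$ for every $\sigma,\pi$. Hence for $x \in f^{-1}(1)$ each one-sided-approximation bound $|p(y_{\sigma,\pi}) - 1| \le \varepsilon$ averages to $|\bar p(x) - 1| \le \varepsilon$, while for $x \in f^{-1}(-1)$ each bound $p(y_{\sigma,\pi}) \le -1 + \varepsilon$ averages to $\bar p(x) \le -1 + \varepsilon$. Now I would invoke the hypothesis that every pair in $f^{-1}(-1)$ is permutation-equivalent: this forces $\bar p$ to take a single value $c \le -1 + \varepsilon$ on all of $f^{-1}(-1)$. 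If $c \ge -1-\varepsilon$, then $\bar p$ is already an $\varepsilon$-approximation to $f$. Otherwise, apply the affine transformation $q(y) = \tfrac{2(y-c)}{1-c} - 1$, which maps $c \mapsto -1$ and $1 \mapsto 1$ with slope $\tfrac{2}{1-c} < 1$; this preserves the error on $f^{-1}(1)$ and makes it exactly $0$ on $f^{-1}(-1)$. In either case, $\deg q \le d \cdot \log_2 R$, so $\adeg_\varepsilon(f) \le \log_2 R \cdot \odeg_\varepsilon(f)$ as desired.

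The main obstacle is Step 2, the degree accounting for the $S_R$-symmetrization: unlike symmetrization over block permutations, symmetrization over value permutations acts by arbitrary bijections of $\{-1,1\}^{\log_2 R}$, which are not linear and therefore inflate multilinear degree by a factor of $\log_2 R$. This is precisely the source of the logarithmic slack in the conclusion, and it is also what forces the hypothesis to give a single equivalence class on $f^{-1}(-1)$: without that single-orbit condition, $\bar p$ could take wildly different values on $f^{-1}(-1)$ and the affine rescaling trick would no longer reduce the two-sided error to $\varepsilon$.
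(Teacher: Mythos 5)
Your proof is correct and follows the same overall blueprint as the paper's: symmetrize $p$ over $S_R \times S_N$ to obtain $\bar p$, verify that $\bar p$ remains a one-sided $\eps$-approximation, use the single-orbit hypothesis on $f^{-1}(-1)$ to conclude $\bar p$ is constant there, and finish with the affine rescaling. The one place you genuinely depart from the paper is the degree bound $\deg \bar p \le (\log_2 R)\,\deg p$. You argue it directly: for each fixed $(\sigma,\pi)$, the map $x \mapsto y_{\sigma,\pi}$ is a coordinate-wise polynomial substitution (block relabeling from $\pi$, plus a per-block bijection of $\{-1,1\}^{\log_2 R}$ from $\sigma$, each output coordinate of which is a multilinear polynomial of degree $\le \log_2 R$), so $p(y_{\sigma,\pi}(\cdot))$ has degree at most $d\log_2 R$, and averaging over $(\sigma,\pi)$ preserves this. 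The paper instead changes variables to the indicator functions $T_{ij}(x)$, invokes a symmetrization lemma of Ambainis to show that averaging over $\pi$ yields a degree-$d$ polynomial $Q$ in the count variables $z_j = \sum_i T_{ij}$, averages $Q$ over $\sigma$, and only then converts each $z_j$ back to a degree-$\log_2 R$ polynomial in $x$. Your substitution argument is more elementary and self-contained, avoids the Ambainis machinery entirely, and yields the same bound; the paper's route does reveal extra structure (that $p^{\mathrm{sym}}$ factors through the type vector $z$), but none of that structure is used in the proof of this theorem, so nothing is lost by your shortcut.
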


\paragraph{Proof Idea.}
It is enough to show that any one-sided $\eps$-approximation $p$ to $f$ can be transformed into an actual $\eps$-approximation $r$ to $f$ in a manner that does not increase the degree by too much (i.e., in a manner guaranteeing that $\deg(r) \leq (\log_2 R)\deg(p)$). 

Our transformation from $p$ to $r$ consists of two steps. In the first step,
we turn $p$ into a ``symmetric'' polynomial $p^{\text{sym}}(x) := \mathbb{E}_{y \sim x}[p(y)]$ where $y \sim x$ if $g_y = \sigma \circ g_x \circ \pi$ for some permutations $\sigma$ on $[R]$ and $\pi$ on $[N]$. It follows from work of Ambainis \cite{ambainis} (see Lemma \ref{lem:ambainis-sym}) that the map $p \mapsto p^{\text{sym}}$ increases the degree of $p$ by a factor of at most $\log_2 R$. In the second step,
we argue that there is an affine transformation $r$ of $p^{\text{sym}}$ that is an actual $\eps$-approximation to $f$, completing
the construction.

The existence of the affine transformation $r$ of $p^{\text{sym}}$ follows from two observations: (1) if $p$ is a one-sided approximation for $f$, then so is $p^{\text{sym}}$
(this holds because $\phi_f$ is symmetric),
and (2) $p^{\text{sym}}$ takes on a constant value $v$ on $f^{-1}(-1)$, i.e., $p^{\text{sym}}(x)=v$ for all $x \in f^{-1}(-1)$ (this holds because $x \sim y$ for every pair of inputs $x, y \in f^{-1}(-1)$). These observations imply that even if $p^{\text{sym}}$ is a very poor approximation to $f$ on $f^{-1}(-1)$, we can still obtain a good pointwise approximation $r$ by applying an affine transformation to the range of $p^{\text{sym}}$ that maps $v$ to $-1$ and moves all values closer to $1$.
Section \ref{sec:symmetric} contains the details.

\eat{For all functions $f$ satisfying the conditions of the Theorem \ref{thm:symmetric},
we use a symmetrization argument to generically transform a dual witness $\psi$ for the high approximate degree of $f$
into a dual witness $\psi'$ with one-sided error. Hence, $\psi'$ is in fact a dual witness
for the high \emph{one-sided} approximate degree of $f$. 

Our proof crucially exploits a result of Ambainis \cite{ambainis},
who shows that symmetric properties $\phi_f$ of functions $[N] \rightarrow [R]$, 
the approximate degree of $f$ is unchanged even if $f$ is restricted to inputs satisfying the small-range property. That is, for symmetric properties,
the ``small-range'' case, with $R=N$, is just as hard to approximate by low-degree polynomials as is the large-range case. 
Further details are deferred to Section \ref{sec:symmetric}.}

\medskip

In our primary application of Theorem \ref{thm:symmetric}, we let $f: \{-1, 1\}^m \rightarrow \{-1, 1\}$ be the \ED\ function
(defined in Section \ref{sec:prelims}). Aaronson and Shi \cite{aaronsonshi} showed
that the approximate degree of \ED\ is $\Omega( \left(m/\log m\right)^{2/3})$. \ED\ is computed by a CNF of polynomial size,
and Aaronson and Shi's result remains essentially the best-known lower bound for the approximate degree of a function in \acz. Theorem \ref{thm:symmetric} applies to \ED, yielding the following corollary.

\begin{restatable}{corollary}{cored} \label{cor:ed}
Let $f :\{-1, 1\}^m \rightarrow \{-1, 1\}$ denote the \ED\ function. Then
$\odeg(f) = \tilde{\Omega}(m^{2/3})$. 
\end{restatable}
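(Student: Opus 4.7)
The plan is to apply Theorem \ref{thm:symmetric} directly to the \ED\ function, after verifying that its $(-1)$-inputs satisfy the required transitivity condition, and then to invoke the Aaronson--Shi lower bound on $\adeg(\ED)$.

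First I would fix the convention that, writing $m = N \log_2 R$ with $R \geq N$ (say $R = N$ as in the standard formulation), an input $x \in \{-1,1\}^m$ is parsed as a function $g_x : [N] \to [R]$, and that $f(x) = -1$ precisely when $g_x$ is injective, and $f(x) = +1$ when $g_x$ has a collision. This is the right orientation because Theorem \ref{thm:symmetric} requires transitivity of the permutation action on $f^{-1}(-1)$, and the set of injective $g_x$ is a single orbit under permutations of domain and range, whereas the collision side has many orbits (distinguished, for instance, by the cycle-structure of collisions).

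Next I would verify the two hypotheses. For symmetry of the property, note that $\phi_f(g) = -1$ iff $g$ is injective, and injectivity is manifestly preserved by pre-composing with any permutation $\pi$ of $[N]$ and post-composing with any permutation $\sigma$ of $[R]$. For transitivity on $f^{-1}(-1)$, let $x, y \in f^{-1}(-1)$, so that $g_x, g_y : [N] \to [R]$ are both injective. Taking $\pi$ to be the identity, I would define $\sigma$ on the image of $g_y$ by $\sigma(g_y(i)) := g_x(i)$, which is well-defined since $g_y$ is injective, and then extend $\sigma$ to a permutation of $[R]$ by any bijection from $[R] \setminus \mathrm{Im}(g_y)$ to $[R] \setminus \mathrm{Im}(g_x)$ (these complements have equal size $R - N$). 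Thus $g_x = \sigma \circ g_y \circ \pi$, as required.

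Having checked the hypotheses, Theorem \ref{thm:symmetric} yields
\[
\odeg_{1/3}(f) \;\geq\; \frac{1}{\log_2 R} \cdot \adeg_{1/3}(f).
\]
Now I would plug in the Aaronson--Shi bound $\adeg_{1/3}(\ED) = \Omega\bigl((m/\log m)^{2/3}\bigr)$ together with $R = N = \Theta(m/\log m)$, so that $\log_2 R = \Theta(\log m)$. The division by $\log_2 R$ only costs a logarithmic factor, and the final bound becomes $\odeg(f) = \tilde{\Omega}(m^{2/3})$, which is exactly the statement of the corollary.

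The substantive content is entirely inherited from Theorem \ref{thm:symmetric} and the Aaronson--Shi bound; the only thing that could trip one up is ensuring that the choice of convention for which side of $f$ is labeled $-1$ lines up with the transitivity hypothesis, which is why I would emphasize that step first. Everything else is routine bookkeeping of the $\log m$ factor coming from the symmetrization lemma of Ambainis used in the proof of Theorem \ref{thm:symmetric}.
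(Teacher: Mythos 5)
Your proposal is correct and follows essentially the same route as the paper: verify that $f^{-1}(-1)$ (the injective side) is a single orbit under the domain/range permutation action so that Theorem \ref{thm:symmetric} applies, then divide the Aaronson--Shi/Ambainis bound $\adeg(\ED) = \Omega((m/\log m)^{2/3})$ by $\log_2 R = \Theta(\log m)$. The explicit construction of $\sigma$ and $\pi$ you give is the detail the paper glosses over with ``it is straightforward to verify.''
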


The best known lower bound on the one-sided approximate degree of an \acz\ function that followed from prior work was $\Omega(m^{1/2})$ (which
holds for the $\AND$ function \cite{NS94, gavinskysherstov}, cf. Fact \ref{fact:odegand} in Section \ref{sec:prelims}). 
Section \ref{sec:discussion} describes some further implications of Theorem \ref{thm:symmetric}.

\medskip
\noindent \textbf{Remark:} In an earlier version of this work, we gave a different \emph{dual} proof of Corollary \ref{cor:ed}. Specifically we showed (cf. Appendix \ref{app:as}) that any dual witness for the high approximate degree of \ED\ can be transformed into a dual witness with one-sided error. This proof in fact shows that $\odeg(f) = \adeg(f)$ (i.e. without incurring the loss of a $1/\log_2 R$ factor as in Theorem \ref{thm:symmetric}). However it remains unclear how to generalize this dual argument to the more general class of properties to which
 Theorem \ref{thm:symmetric} applies (including the \tto\ property discussed in Section \ref{sec:discussion} below). 
 Theorem \ref{thm:symmetric} therefore provides an example of a setting in which the primal view of one-sided approximate degree introduced in this work may be easier to reason about than the dual formulation
used in prior work.

\subsubsection{Accuracy-Degree Tradeoff Lower Bounds for \acz}
By Corollary \ref{cor:ed}, we can apply Theorem \ref{thm:big} to \ED\ to obtain a depth-three 
Boolean circuit $F$ with $t \cdot m$ inputs such that $\adeg_{\eps}(F) = \tilde{\Omega}(m^{2/3})$, for $\eps = 1-2^{-t}$. 
By choosing $t$ and $m$ appropriately, we obtain a depth-three
circuit on $n = t \cdot m$ variables of size $\poly(n)$ 
such that any degree-$d$ polynomial cannot pointwise approximate $F$ to error better than $1-\exp\left(-\tilde{\Omega}(nd^{-3/2})\right)$.

\begin{restatable}{corollary}{coredamp} \label{cor:ed-amp}
For every $d > 0$, there is a depth-3 Boolean circuit $F: \bits^n \to \bits$ of size $\poly(n)$
such that any degree-$d$ polynomial cannot pointwise approximate $F$ to error better than $1-\exp\left(-\tilde{\Omega}(nd^{-3/2})\right)$. In particular, there is a depth-3 circuit $F$ such that any polynomial of degree at most $n^{2/5}$
cannot pointwise approximate $F$ to error better than
$1-\exp\left(-\tilde{\Omega}(n^{2/5})\right)$.
\end{restatable}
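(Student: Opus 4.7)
The plan is to derive Corollary~\ref{cor:ed-amp} by combining the one-sided approximate degree lower bound for Element Distinctness (Corollary~\ref{cor:ed}) with the hardness amplification theorem (Theorem~\ref{thm:big}), and then optimizing the two free parameters $m$ (the block size) and $t$ (the number of blocks) subject to the constraint $n = t \cdot m$.

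Concretely, I would let $f = \ED_m : \bits^m \to \bits$ denote Element Distinctness on $m$ variables, and define $F = \OR_t(f, \dots, f) : \bits^{t \cdot m} \to \bits$. Corollary~\ref{cor:ed} gives $\odeg_{1/2}(f) = \tilde{\Omega}(m^{2/3})$, so choosing $m$ to be the smallest integer for which this one-sided approximate degree exceeds the target $d$ yields $m = \tilde{\Theta}(d^{3/2})$. Theorem~\ref{thm:big} then guarantees that the $(1 - 2^{-t})$-approximate degree of $F$ exceeds $d$. Setting $n = t \cdot m$ and solving for $t$ gives $t = n/m = \tilde{\Omega}(n d^{-3/2})$, so the error bound becomes
\[
1 - 2^{-t} \;=\; 1 - \exp\bigl(-\tilde{\Omega}(n d^{-3/2})\bigr),
\]
which is exactly the quantitative statement claimed. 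For the ``in particular'' instance, substituting $d = n^{2/5}$ into this formula gives $m = \tilde{\Theta}(n^{3/5})$ and $t = \tilde{\Theta}(n^{2/5})$, yielding error $1 - \exp(-\tilde{\Omega}(n^{2/5}))$.

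Two sanity checks remain. First, $F$ should have size $\poly(n)$: since $\ED_m$ is computable by a CNF (hence a depth-two circuit) of size $\poly(m) \le \poly(n)$, taking an OR of $t \le n$ such subcircuits yields total size $\poly(n)$. Second, $F$ should have depth three: $\ED_m$ is computed by an AND of ORs, and prepending the outer $\OR_t$ gate gives an OR of ANDs of ORs, which is depth three (no adjacent gates of the same type merge across the boundary, since the outer gate is OR and the top gate of $\ED_m$ is AND).

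The only genuinely nontrivial step here is lining up the parameter dependencies through the $\tilde{\Omega}$ notation, since the polylogarithmic slack in Corollary~\ref{cor:ed} propagates through the substitution $m = \tilde{\Theta}(d^{3/2})$ into the exponent of the final error bound; but this is absorbed cleanly by the $\tilde{\Omega}$ on the right-hand side. Everything else is a direct application of Theorem~\ref{thm:big} to $f = \ED_m$, so I would expect the proof to be a short and essentially mechanical composition of the two main results of the paper, with the only real work lying in the verification of the depth-three and polynomial-size properties of $F$.
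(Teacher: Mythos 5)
Your proof is correct and follows essentially the same route as the paper: take $f$ to be Element Distinctness on $m = \tilde{\Theta}(d^{3/2})$ variables so that Corollary~\ref{cor:ed} forces $\odeg(f) > d$, set $F = \OR_t(f,\dots,f)$ with $t = n/m$, and invoke Theorem~\ref{thm:big}. Your explicit check that the OR-of-(AND-of-ORs) structure keeps depth exactly three and size $\poly(n)$ is a welcome bit of diligence that the paper leaves implicit, but the substance is identical.
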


\eat{
In a previous version of this work, we conjectured that this circuit $F$ has \emph{threshold degree} $\tilde{\Omega}(n^{2/5})$. That is, any polynomial that simply sign-represents $F$ requires degree  $\tilde{\Omega}(n^{2/5})$. This conjecture was subsequently resolved in the affirmative by Sherstov \cite{sherstovmp} (cf. Section \ref{sec:sherstov}).
}

\subsubsection{Discrepancy Upper Bound}
Discrepancy, defined formally in Section \ref{sec:discrepancy}, is a central quantity in communication
complexity and circuit complexity. For instance, upper bounds on the discrepancy 
of a function $f$ immediately
yield lower bounds on the cost of small-bias communication protocols for computing $f$ (Section \ref{sec:applications} has details).
\begin{table}
\centering
\begin{tabular}{|c|c|c|}
\hline
Reference & Discrepancy Bound &  Circuit Depth\\
\hline
Sherstov \cite{majmaj}& $\exp(-\Omega(n^{1/5}))$ & 3\\
\hline
Buhrman et al. \cite{bvdw} & $\exp(-\Omega(n^{1/3}))$  & 3\\
\hline
Sherstov \cite{patternmatrix} & $\exp(-\Omega(n^{1/3})) $ & 3\\
\hline
This work & $\exp\left(-\tilde{\Omega}(n^{2/5})\right)$  & 4\\
\hline
\end{tabular}
\caption{Comparison of our new discrepancy bound for \acz\ to prior work. The circuit depth column lists the depth of the circuit used to exhibit the  bound.}
\end{table}
The first exponentially small discrepancy upper bounds for \acz\ were proved by Burhman \etal\ \cite{bvdw} and Sherstov \cite{majmaj, patternmatrix},
who exhibited constant-depth circuits with discrepancy $\exp(-\Omega(n^{1/3}))$. Our results improve the best-known upper bound
to $\exp\left(-\tilde{\Omega}(n^{2/5})\right)$. 

Our result relies on a powerful technique developed by Sherstov \cite{patternmatrix}, known as the pattern-matrix method (stated as Theorem \ref{thm:patternmatrix} in Section \ref{sec:discrepancy}). This technique allows one to automatically translate lower bounds on 
the $\eps$-approximate degree of a Boolean function $F$
into lower bounds on the \emph{discrepancy} of a related function $F'$
as long as $\eps$ is exponentially close to one.
By applying the pattern-matrix method to Corollary \ref{cor:ed-amp},
we obtain the following result.

\begin{restatable}{corollary}{cordisc} \label{cor:discrepancy}
There is a depth-4 Boolean circuit $F': \bits^n \to \bits$ with
discrepancy $\exp\left(-\tilde{\Omega}(n^{2/5})\right)$.
\end{restatable}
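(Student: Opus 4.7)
The plan is to feed Corollary \ref{cor:ed-amp} into Sherstov's pattern-matrix method (Theorem \ref{thm:patternmatrix}), which is precisely the machine designed to convert approximate degree lower bounds with exponentially small error slack into discrepancy upper bounds. Fix $d = n^{2/5}$ in Corollary \ref{cor:ed-amp}, so that we obtain a depth-three, polynomial-size circuit $F : \bits^n \to \bits$ with $(1 - \eps)$-approximate degree exceeding $d$ for $\eps = \exp(-\tilde{\Omega}(n^{2/5}))$. This is the exact regime the pattern-matrix method is tailored to exploit: approximate degree lower bounds that simultaneously push the degree and the error close to one.

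Next, I would apply the pattern-matrix construction to $F$, producing a two-party function $F' : \bits^{n'} \to \bits$ whose communication matrix is a pattern matrix of $F$. The method yields a discrepancy bound of the form $\disc(F') \le \exp(-\Omega(d)) + \eps^{1/2}$ (or the analogous form stated in Theorem \ref{thm:patternmatrix}). With our parameter choice, both terms are $\exp(-\tilde{\Omega}(n^{2/5}))$, giving the claimed discrepancy bound. Since the pattern matrix is indexed over polynomially many inputs on each side, $n'$ grows only polynomially in $n$, so the bound translates cleanly to $\exp(-\tilde{\Omega}((n')^{2/5}))$ after renaming $n' \mapsto n$.

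For the depth bookkeeping, the pattern-matrix construction realizes $F'$ as $F$ composed on the outside (or inside, depending on the convention) with a trivial selector gadget that reads one bit from each block of variables. This selector can be written as a single AND/OR layer of fan-in polynomial in $n$, hence $F'$ can be implemented by a constant-depth circuit of polynomial size with one more layer than $F$. Since $F$ has depth three, $F'$ has depth four, and polynomial size is preserved.

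The main obstacle is purely bookkeeping: one has to check that the parameters in the pattern-matrix statement (Theorem \ref{thm:patternmatrix}) are compatible with the exponentially small error $\eps = \exp(-\tilde{\Omega}(n^{2/5}))$ from Corollary \ref{cor:ed-amp}, and that the extra layer introduced by the selector really composes with the depth-three structure of $F$ to yield a depth-four circuit rather than blowing up the depth further. Both checks are straightforward given that Sherstov's method is explicitly designed to turn $(1-\eps)$-approximation lower bounds into $\sqrt{\eps}$-scale discrepancy bounds, and the selector gadget is a single layer of small fan-in gates.
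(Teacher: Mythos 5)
Your proposal follows the paper's proof essentially step for step: take the circuit $F$ from Corollary~\ref{cor:ed-amp}, feed it to the pattern-matrix machinery of Theorem~\ref{thm:patternmatrix}, and track the extra layer added by the gadget to land on depth four. The one link you leave implicit is that Theorem~\ref{thm:patternmatrix} as stated bounds $\disc(F')$ via the threshold weight $W(F,d-1)$, not directly via the $(1-\eps)$-approximate degree, so one must first invoke Lemma~\ref{lem:weight-relations} to convert $\adeg_{1-\eps}(F) > n^{2/5}$ into $W(F,n^{2/5}) = 2^{\tilde{\Omega}(n^{2/5})}$; this is exactly the translation the paper makes and is a one-line application, so your plan is sound.
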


\subsubsection{Threshold Weight Lower Bound}
A \emph{polynomial threshold function} (PTF) for a Boolean function $f$ is a multilinear polynomial $p$ with integer coefficients that agrees in sign with $f$ on all Boolean inputs. The \emph{weight} of an
$n$-variate polynomial $p$ is the sum of the absolute
value of its coefficients. 
The \emph{degree-$d$ threshold weight} of a Boolean
function $f: \{-1, 1\}^n \rightarrow \{-1, 1\}$,
denoted $W(f, d)$, refers to the least weight
of a degree-$d$ PTF for $f$.
We let $W(f)$ denote the quantity $W(f, n)$, i.e.,
the least weight of any threshold function for $f$ regardless of 
its degree. As discussed in Section \ref{sec:applications}, threshold weight
has important applications in learning theory.

Threshold weight is closely related to $\eps$-approximate degree when $\eps$ is very close to $1$ (see Lemma \ref{lem:weight-relations} in \secref{sec:weightrelations}). This allows us to translate Corollary \ref{cor:ed-amp}
into a lower bound on the degree-$d$ threshold weight of \acz.

\begin{restatable}{corollary}{coredampthresh} \label{cor:ed-ampthresh}
For every $d > 0$, there is a depth-3 Boolean circuit $F: \bits^n \to \bits$ of size $\poly(n)$
such that $W(F, d) \geq \exp\left(\tilde{\Omega}(nd^{-3/2})\right)$.  In particular, 
$W(F, n^{2/5}) = \exp\left(\tilde{\Omega}(n^{2/5})\right)$.
\end{restatable}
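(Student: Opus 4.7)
The plan is to derive Corollary \ref{cor:ed-ampthresh} directly from the accuracy-degree tradeoff of Corollary \ref{cor:ed-amp} by invoking the standard dictionary between threshold weight and $(1-\eps)$-approximate degree (stated in the paper as Lemma \ref{lem:weight-relations}). In particular, the same depth-3 circuit $F$ furnished by Corollary \ref{cor:ed-amp} will serve as the witness here; no new construction is needed.

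First, I would recall the easy direction of the weight/approximation correspondence: if $f$ has a degree-$d$ PTF $p$ of integer weight $W$, then $p/W$ is a real polynomial of degree $d$ that agrees in sign with $f$ and has $\ell_\infty$ norm at most $1$ on $\bn$, so the polynomial $p(x)/W$ pointwise approximates $f$ to error at most $1 - 1/W$. Contrapositively, if no degree-$d$ polynomial pointwise approximates $f$ to error strictly better than $1-\eps$, then every degree-$d$ PTF for $f$ must have weight at least $1/\eps$. This is precisely the content invoked via Lemma \ref{lem:weight-relations}.

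Second, I would apply this contrapositive to the circuit $F$ guaranteed by Corollary \ref{cor:ed-amp}. That corollary provides, for every $d > 0$, a $\poly(n)$-size depth-3 circuit $F \colon \bn \to \bits$ for which $\adeg_{1-\eps}(F) > d$ with $\eps = \exp(-\tilde\Omega(nd^{-3/2}))$. Plugging into the weight bound yields
\[
W(F, d) \;\geq\; \frac{1}{\eps} \;=\; \exp\!\left(\tilde\Omega(n d^{-3/2})\right),
\]
which is the first claim of the corollary. The ``in particular'' statement follows by setting $d = n^{2/5}$, whereupon $nd^{-3/2} = n \cdot n^{-3/5} = n^{2/5}$, yielding $W(F, n^{2/5}) \geq \exp(\tilde\Omega(n^{2/5}))$.

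There is really no main obstacle at this stage, since all of the heavy lifting has already been done in Corollary \ref{cor:ed-amp}: the hardness amplification of Theorem \ref{thm:big} was applied to the \ED{} function (via Corollary \ref{cor:ed}) to push the approximation error exponentially close to $1$, and it is exactly this exponential closeness of the error to $1$ that translates into exponentially large threshold weight under the weight/approximation correspondence. The only point requiring minor care is ensuring that the quantitative parameters (which are swallowed by the $\tilde\Omega$) in Lemma \ref{lem:weight-relations} match those in Corollary \ref{cor:ed-amp}; this is a routine check because the transformation from a $W$-weight PTF to an approximation of error $1 - 1/W$ is tight up to constants.
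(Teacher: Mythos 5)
Your proof is correct and follows exactly the paper's own route: the paper likewise derives Corollary~\ref{cor:ed-ampthresh} by combining Lemma~\ref{lem:weight-relations} with Corollary~\ref{cor:ed-amp}, using the same circuit $F$ and the same observation that a weight-$w$ degree-$d$ PTF yields a $(1-1/w)$-approximation. Nothing to add.
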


A result of Krause \cite{krause} (see Lemma \ref{lem:threshold-weight} in Section \ref{sec:ptfweight}) allows us to extend our new degree-$d$ threshold weight lower bound for $F$ into a \emph{degree independent} threshold weight lower bound for a related function $F'$ (we also provide a new and simple proof of Krause's result based on LP duality, cf. Appendix \ref{app:threshold-weight}). The previous best lower bound on the threshold weight of \acz\
was $\exp\left(\Omega(n^{1/3})\right)$, due to 
Krause and Pudl\'{a}k \cite{krausepudlak}.

\begin{restatable}{corollary}{corweightacz} \label{cor:weightacz}
There is a depth-4 Boolean circuit $F': \bits^n \to \bits$ 
satisfying $W(F') = \exp\left(\tilde{\Omega}(n^{2/5})\right)$. 
\end{restatable}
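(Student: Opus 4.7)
The plan is to deduce the degree-independent threshold weight lower bound from the degree-$n^{2/5}$ bound of Corollary \ref{cor:ed-ampthresh} by invoking Krause's transformation (Lemma \ref{lem:threshold-weight}). Let $F: \bits^m \to \bits$ be the depth-3 circuit guaranteed by Corollary \ref{cor:ed-ampthresh}, so that $W(F, m^{2/5}) \geq \exp(\tilde{\Omega}(m^{2/5}))$. Define $F': \bits^{3m} \to \bits$ by replacing each input bit of $F$ with an independent copy of the $\sel$ gadget:
\[
F'(\ba, \bb, \bc) = F\bigl(\sel(a_1, b_1, c_1), \dots, \sel(a_m, b_m, c_m)\bigr).
\]
Krause's lemma asserts that for some absolute constant $c > 0$, every PTF for $F'$ of weight $W$ can be converted into a PTF for $F$ of weight at most $W$ and degree at most $c \log W$; equivalently, $W(F, c \log W(F')) \leq W(F')$.

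To deduce the claimed lower bound, set $W = W(F')$ and split into two cases. If $c \log W > m^{2/5}$, then $W > \exp(m^{2/5}/c) = \exp(\Omega(m^{2/5}))$, which is already at least $\exp(\tilde{\Omega}(m^{2/5}))$. Otherwise $c \log W \leq m^{2/5}$, and since $W(F, \cdot)$ is non-increasing in its degree argument,
\[
W \;\geq\; W(F, c \log W) \;\geq\; W(F, m^{2/5}) \;\geq\; \exp\bigl(\tilde{\Omega}(m^{2/5})\bigr).
\]
In either case, $W(F') \geq \exp(\tilde{\Omega}(m^{2/5}))$, and rewriting in terms of the number of inputs $n = 3m$ to $F'$ yields the target bound $W(F') \geq \exp(\tilde{\Omega}(n^{2/5}))$.

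The remaining step is to verify that $F'$ is realized by a circuit of depth $4$ and polynomial size. The selector $\sel(a, b, c) = (c \wedge a) \vee (\bar c \wedge b)$ is a depth-$2$ formula whose top gate is an OR. Arranging the depth-$3$ circuit $F$ of Corollary \ref{cor:ed-ampthresh} so that its bottom layer consists of OR gates (the natural form for the $\OR_t$-of-\ED\ construction), the bottom OR gates of $F$ can be merged with the top OR gates of the $\sel$ gadgets, collapsing to a circuit of depth $3+2-1 = 4$ of size $\poly(n)$. This gate-type bookkeeping at the interface between $F$ and the $\sel$ gadgets is the main technical subtlety in the proof; granted the merge, the threshold weight bound itself follows mechanically from Krause's lemma together with Corollary \ref{cor:ed-ampthresh}.
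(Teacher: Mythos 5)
Your proof is correct and follows the same route as the paper: apply Krause's transformation (Lemma~\ref{lem:threshold-weight}) to the depth-$3$ circuit $F$ of Corollary~\ref{cor:ed-ampthresh} to obtain the depth-$4$ circuit $F'$, then deduce the degree-independent threshold weight bound (and your depth accounting, merging the bottom $\OR$ layer of $F$ with the top $\OR$ gates of the selector gadgets, is right). One small imprecision worth noting: Lemma~\ref{lem:threshold-weight} as stated gives $W(F')^2 \ge \min\{W(F,d)/(2n),\,2^d\}$, hence $W\bigl(F,\,2\log_2 W(F')+1\bigr) \le 2n\,W(F')^2$ rather than your cleaner $W(F, c\log W(F')) \le W(F')$, but the extra polynomial-in-$n$ factor is absorbed by the $\tilde{\Omega}$ and your two-case analysis still goes through --- the paper simply sidesteps the case split by fixing $d = n^{2/5}/\log^c n$ so that Corollary~\ref{cor:ed-ampthresh} forces $W(F,d) \ge 2n\cdot 2^d$ and the minimum equals $2^d$.
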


Moreover, while the threshold weight bound 
of Corollary \ref{cor:weightacz} is stated for polynomial
threshold functions over $\{-1, 1\}^n$, we 
show that the same threshold weight
lower bound also holds for polynomials over $\{0, 1\}^n$.

\subsection{Approximate Degree Lower Bounds for AND-OR Trees}
\label{sec:introANDOR}
The $d$-level AND-OR tree on $n$ variables is a function described by a read-once circuit
of depth $d$ consisting of alternating layers of AND gates and OR gates.
We assume for simplicity that all gates have fan-in $n^{1/d}$. For example, the two-level
AND-OR tree is a read-once CNF in which all gates have fan-in $n^{1/2}$.

Until recently, the approximate degree of AND-OR trees of depth two or greater had resisted characterization, despite 
19 years of attention \cite{NS94, hoyer, shi, ambainis, sherstovFOCS, sherstovnew, bunthaler}. The case of
of depth two was reposed as a challenge problem by Aaronson in 2008 \cite{scottslides}, as it captured
the limitations of existing lower bound techniques. This case was resolved last year by the current authors \cite{bunthaler}, and independently by Sherstov \cite{sherstovnew}, who proved a lower bound of $\Omega(\sqrt{n})$, matching
an upper bound of H\o yer, Mosca, and de Wolf \cite{hoyer}. However, the case of depth
three or greater remained open. To our knowledge, the best known lower bound for $d \geq 3$
was $\Omega(n^{1/4+1/2d})$, which follows by combining the depth-two lower bound \cite{sherstovnew, bunthaler} with an earlier direct-sum theorem of Sherstov \cite[Theorem 3.1]{sherstovFOCS}. 

By combining the techniques of our earlier work \cite{bunthaler} with our hardness amplification result (Theorem \ref{thm:big}),
we improve this lower bound to $\Omega\left(n^{1/2}/\log^{(d-2)/2}(n)\right)$ for any constant $d\geq2$. 
A line of work on quantum query algorithms \cite{hoyer, spalekandor, reichardt} 
established an upper bound of $O(n^{1/2})$ for AND-OR trees of any depth,
demonstrating that our
result is optimal up to polylogarithmic factors (see Section \ref{sec:andortree} for details). 

\begin{restatable}{theorem}{thmandor} \label{thm:andor} Let $\ANDOR_{d, n}$ denote the $d$-level AND-OR tree on $n$ variables. 
Then $\adeg(\ANDOR_{d, n}) = \Omega\left(n^{1/2}/\log^{(d-2)/2}n\right)$ for any
constant $d>0$.
\end{restatable}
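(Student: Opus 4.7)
The plan is to prove Theorem \ref{thm:andor} by induction on the depth $d$, with Theorem \ref{thm:big} serving as the engine of the inductive step. To sustain the induction, I would actually prove the stronger statement that the one-sided approximate degree satisfies $\odeg_{1/2}(\ANDOR_{d,n}) = \Omega(\sqrt{n}/\log^{(d-2)/2} n)$, with one-sidedness corresponding to the polarity of the root gate. This strengthening is essential because Theorem \ref{thm:big} requires a lower bound on $\odeg$, not just on $\adeg$, of the function being amplified.

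For the base case $d = 2$, I would invoke the dual-witness construction from our earlier work \cite{bunthaler} (obtained independently by \cite{sherstovnew}), which establishes $\adeg(\ANDOR_{2,n}) = \Omega(\sqrt{n})$. That construction combines a dual witness for $\AND$ with a dual witness for $\OR$ via the Sherstov-Lee combining method. A direct inspection of this combined object shows that it has one-sided error on $\ANDOR_{2,n}^{-1}(-1)$, immediately upgrading the conclusion to $\odeg_{1/2}(\ANDOR_{2,n}) = \Omega(\sqrt{n})$.

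For the inductive step from $d-1$ to $d$, set $t = n^{1/d}$ and $m = n^{(d-1)/d}$, so that $\ANDOR_{d,n}$ is $\OR_t$ (or dually $\AND_t$) applied blockwise to $t$ copies of a depth-$(d-1)$ AND-OR tree on $m$ variables with opposite root-gate polarity. Combining the inductive hypothesis $\odeg_{1/2}(\ANDOR_{d-1,m}) = \Omega(\sqrt{m}/\log^{(d-3)/2} m)$ with Theorem \ref{thm:big} immediately yields $\adeg_{1-2^{-t}}(\ANDOR_{d,n}) = \Omega(\sqrt{m}/\log^{(d-3)/2} m)$.

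The main obstacle is to upgrade this high-error bound into both the desired constant-error bound $\Omega(\sqrt{n}/\log^{(d-2)/2} n)$ and a matching $\odeg_{1/2}$ bound needed to continue the induction. I would handle this by taking the dual witness $\zeta$ produced by Theorem \ref{thm:big} and combining it with a Chebyshev-based dual witness for the outer $\OR_t$; exploiting the $1-2^{-t}$ slack in the correlation of $\zeta$, this combination contributes an additional $\sqrt{t} = n^{1/(2d)}$ factor in the pure high degree, matching the degree of $\OR_t$. Restoring the one-sided structure at the new level then requires a symmetrization argument analogous to the proof of Theorem \ref{thm:symmetric}, which incurs a $\sqrt{\log n}$ loss per depth level and accounts for the extra logarithmic factor in the statement of the theorem. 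Verifying that the combined dual object simultaneously witnesses the required pure high degree, high correlation with $\ANDOR_{d,n}$, and one-sided error is where the bulk of the technical work will lie.
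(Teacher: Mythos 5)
There is a genuine gap in this proposal, and it concerns the central claim that one-sided error can be maintained through the induction. You propose to show $\odeg_{1/2}(\ANDOR_{d,n}) = \Omega(\sqrt{n}/\log^{(d-2)/2}n)$ at each level, and in the base case you assert that "a direct inspection of this combined object shows that it has one-sided error," so that the depth-two dual witness actually witnesses a $\odeg$ lower bound. The paper explicitly says this fails: the combined dual witness $\gamma_3$ for the depth-two tree has sign errors on \emph{both} $\ORAND_{2,M^2}^{-1}(-1)$ and $\ORAND_{2,M^2}^{-1}(1)$, so it is not a witness for one-sided approximate degree. The reason one-sided error survives in the proof of Theorem \ref{thm:big} is a special feature of the outer function $\OR_t$ (the unique point in $\OR_t^{-1}(1)$ is at maximal Hamming distance from $-\mathbf 1$); it does not survive when the outer dual witness has mass off the two poles, which is the case for the Chebyshev-type dual of $\OR_M$ used at depth two. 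Since your inductive invariant is already false at the first nontrivial step, the rest of the argument cannot proceed as written.

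What the paper actually maintains through the induction (Claim \ref{claim:induction}) is deliberately weaker: a dual witness for \emph{ordinary} approximate degree with the additional quantitative property that its "wrong-side" mass is at most $2n^{-2}$. To manufacture this near-one-sidedness, the paper does not attempt a symmetrization à la Theorem \ref{thm:symmetric} — there is none in the proof of Theorem \ref{thm:andor}. Instead, it splits the fan-in-$M$ OR gate into a two-layer $\OR_{M/t} \circ \OR_t$ with $t = 2\log n$, applies Theorem \ref{thm:big} to the bottom $\OR_t \circ \AND_M$ piece (yielding a dual witness $\psi_3$ with one-sided error and correlation $1 - n^{-2}$), and then combines $\psi_3$ with a dual witness for $\OR_{M/t}$. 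The $\sqrt{\log n}$ loss per level is exactly the $\sqrt t$ lost because the outer OR in the combining step has fan-in $M/t$ rather than $M$; attributing it to a symmetrization step mischaracterizes where the loss comes from. Your plan to "combine $\zeta$ with a Chebyshev-based dual for the outer $\OR_t$" also double-counts the top gate: once Theorem \ref{thm:big} has been applied with $\OR_t$ as the outer function, that gate is spent and does not supply an additional $\sqrt t$ degree factor. To salvage the argument you would need to adopt the paper's split of the OR gate and replace the one-sided invariant with a quantitative near-one-sidedness bound, which is precisely where the technical work in the paper's proof lies.
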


\paragraph{Proof Idea.}
To introduce our proof technique, we first describe the method used in \cite{bunthaler} to
construct an optimal dual polynomial in the case $d=2$, and we identify why this method breaks down 
when trying to extend to the case $d=3$.
We then explain how to use our hardness amplification result (Theorem \ref{thm:big})
to construct a different dual polynomial that does extend to the case $d=3$. 

Let $M$ denote the fan-in of all gates in $\ORAND_{2, M^{2}}$. 
In our earlier work \cite{bunthaler}, we constructed a dual polynomial for $\ORAND_{2, M^2}$ as follows.\footnote{We actually constructed a dual polynomial for $\ANDOR_{2,M^2}$, but the analysis for the case of $\ORAND_{2, M^2}$ is entirely analogous.}
By Fact \ref{fact:odegand} there is a dual polynomial $\gamma_1$ witnessing the fact that $\odeg(\AND_{M})=\Omega\left(M^{1/2}\right)$, and a dual polynomial $\gamma_2$ witnessing the fact that $\adeg(\OR_M)=\Omega\left(M^{1/2}\right)$.
We then combined the dual witnesses $\gamma_1$ and $\gamma_2$, using 
the same ``combining'' technique as in the proof of Theorem \ref{thm:big}, 
to obtain a dual witness $\gamma_3:\{-1,1\}^{M^2}\rightarrow \mathbb{R}$ for the high approximate degree
of $\ORAND_{2, M^2}$.

Recall that we say a dual witness has \emph{pure high degree $d$} if it has zero correlation
with every polynomial of degree at most $d$.  
It followed from earlier work \cite{sherstovFOCS} that $\gamma_3$ has pure high degree equal to the product
of the pure high degree of $\gamma_1$ and the pure high degree of $\gamma_2$, 
yielding an $\Omega(M)$ lower bound on the pure high degree of $\gamma_3$. 
The new ingredient of the analysis in \cite{bunthaler} was to use the one-sided error 
of the ``inner'' dual witness $\gamma_1$ to argue that $\gamma_3$ also had good correlation with $\ORAND_{2, M^2}$. 

\medskip
\noindent \textbf{Extending to Depth Three.}
Let $M=n^{1/3}$ denote the fan-in of all gates in $\ANDOR_{3, n}$. 
In constructing a dual witness for $\ANDOR_{3, n}=\AND_{M}(\ORAND_{2, M^2}, \dots, \ORAND_{2, M^2})$, it is natural to try the following approach. 
Let $\gamma_4$ be a dual polynomial witnessing the fact that the approximate degree
of $\AND_{M}=\Omega(\sqrt{M})$. Then we can combine $\gamma_3$ and $\gamma_4$ in the same manner as above
to obtain a dual function $\gamma_5$.

The difficulty in establishing that $\gamma_5$ is a dual witness to the high approximate degree of $\ANDOR_{3, n}$
is in showing that $\gamma_5$ has good correlation with $\ANDOR_3$. In our earlier work,
we showed $\gamma_3$ has large correlation with $\ORAND_{2, n}$ by exploiting the fact that the inner dual witness
$\gamma_1$ had one-sided error, i.e., $\gamma_1(y)$ agrees in sign with $\AND_{M}$ whenever $y \in \AND^{-1}_M(-1)$ .
 However, $\gamma_3$ itself does not satisfy an analogous 
 property: there are inputs $x_i \in \ORAND^{-1}_{2, M^2}(-1)$ such that $\gamma_3(x_i) > 0$,
 \emph{and} there are inputs $x_i \in \ORAND^{-1}_{2, M^2}(1)$ such that $\gamma_3(x_i) < 0$.
 
 To circumvent this issue, we use a different inner dual witness $\gamma'_3$ in place of $\gamma_3$.
 Our construction of $\gamma'_3$ utilizes our hardness amplification analysis to achieve the following:
 while $\gamma'_3$ has error ``on both sides'', the error from the ``wrong side'' is very small. 
 The hardness amplification step causes $\gamma'_3$ to have pure high degree that is lower than that of
 the dual witness $\gamma_3$ constructed in \cite{bunthaler} by a $\sqrt{\log n}$ factor. 
 However, the hardness amplification step permits us to prove the desired lower bound on the correlation of $\gamma_5$ with $\ANDOR_{3, n}$. The proof for the general case, which is quite technical, can be found in Section \ref{sec:andortree}.

\subsection{Lower Bounds for Read-Once DNFs and CNFs}
\label{sec:readonceintro}
Our techniques also yield new lower bounds 
on the approximate degree and degree-$d$ threshold weight of read-once DNF and CNF formulas.
Before stating our results, we discuss relevant prior work.

In their seminal work on perceptrons, Minsky and Papert
exhibited a read-once DNF $f:\bits^n \rightarrow \bits$ with \emph{threshold degree}
$\Omega(n^{1/3})$ \cite{mp}. That is, a real polynomial requires degree $\Omega(n^{1/3})$ just to agree with $f$ in sign. However, to our knowledge
no non-trivial lower bound on the degree-$d$ threshold
\emph{weight} of read-once DNFs was known for any
$d = \omega(n^{1/3})$. 

In an influential result, Beigel \cite{beigelomb} exhibited a polynomial-size (read-many) DNF
called \omb\ satisfying the following: there is some constant $\delta > 0$ such that $\adeg_{1-2^{-\delta n/d^2}}(\ombmath) > d$, and hence also $W(\ombmath, d) = \exp\left(\Omega(n/d^2)\right)$ (see Section \ref{sec:weightrelations}). 
Motivated by applications in computational learning theory (see Section \ref{sec:applications}), Klivans and Servedio showed that Beigel's lower bound
is essentially tight for $d < n^{1/3}$ \cite{ksomb}.
Very recently, Servedio, Tan, and Thaler showed an alternative
lower bound on the degree-$d$ threshold weight of \omb.
Specifically, they showed that
$W(\ombmath, d) = \exp\left(\Omega\left(\sqrt{n/d}\right)\right)$ \cite{thalercolt}.
The lower bound of Servedio et al. improves over Beigel's for any $d > n^{1/3}$, and is essentially
tight in this regime (i.e., when $d > n^{1/3}$). 

While \omb\ is a relatively simple DNF (in fact, it is 
a \emph{decision list}), it is not a read-once DNF. 
Our results extend the lower bounds of Servedio et al. and Beigel from decision lists to read-once DNFs and CNFs. In the statement
of the results below, we restrict ourselves to DNFs, as the case of CNFs
is entirely analogous.  

\subsubsection{Extending the Lower Bound of Servedio et al. to Read-Once DNFs} 
In order to extend the lower bound of Servedio et al. to read-once
DNFs and CNFs,
we extend our hardness amplification techniques
from one-sided approximate degree
to a new quantity we call \emph{degree-$d$ one-sided non-constant approximate weight}.
This quantity captures the least $L_1$ \emph{weight} (excluding the constant term) of a polynomial 
of degree at most $d$ that is a one-sided approximation of $f$.
We denote the degree-$d$ one-sided approximate weight
of a Boolean function $f$ by $W^*_{\eps}(f, d)$, where $\eps$
is an error parameter. 

We prove the following analog of Theorem \ref{thm:big}.
\begin{theorem} \label{thm:big-weights}
Fix $d > 0$. Let $f: \bits^m \rightarrow \{-1, 1\}$, and suppose that $W^*_{3/4}(f, d) > w$. Let $F: \bits^{m \cdot t} \to \bits$ denote the function $\OR_t(f, \dots, f)$. Then any degree-$d$ polynomial that approximates $F$ to within error $1-2^{-t}$ requires weight $2^{-5t}w$.
\end{theorem}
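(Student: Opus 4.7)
The plan is to adapt the dual-witness construction from the proof of Theorem \ref{thm:big}, substituting for the exact orthogonality (pure high degree) property of a one-sided approximate-degree dual witness the quantitative bound that arises in the LP dual of one-sided approximate \emph{weight}. Concretely, LP duality shows that the hypothesis $W^*_{3/4}(f,d) > w$ yields a function $\psi : \bits^m \to \R$ with $\|\psi\|_1 = 1$ satisfying (i) $\sum_x \psi(x) f(x) \geq 1/4$, (ii) the one-sided error condition $\psi(x) f(x) \geq 0$ for $x \in f^{-1}(-1)$, (iii) $|\sum_x \psi(x) \chi_S(x)| \leq 1/w$ for every nonempty $S$ with $|S| \leq d$, and (iv) $\sum_x \psi(x) = 0$ (orthogonality to the constant monomial, which is enforced by dualizing the non-constant weight objective).

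Next, I would take the dual object $\Psi : \bits^t \to \R$ for $\OR_t$ from the proof of Theorem \ref{thm:big} --- the one concentrated on the all-$(-1)$ input that yields correlation exponentially close to $1$ --- and combine it with $\psi$ via the Sherstov--Lee formula
\[
\zeta(x_1,\ldots,x_t) = 2^t\, \Psi\bigl(\sign(\psi(x_1)),\ldots,\sign(\psi(x_t))\bigr) \prod_{i=1}^t |\psi(x_i)|.
\]
The correlation bound $\sum_x \zeta(x) F(x) \geq (1-2^{-t})\|\zeta\|_1 + \Omega(2^{-t})\|\zeta\|_1$ follows almost verbatim from the corresponding step in the proof of Theorem \ref{thm:big}, since that argument only uses properties (i) and (ii) of $\psi$ together with the structure of $\Psi$.

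The new work lies in bounding $|\sum_x \zeta(x)\chi_S(x)|$ for each nonempty $S \subseteq [mt]$ with $|S| \leq d$. Expanding $\Psi$ in the Fourier basis on $\bits^t$ and using the identity $\sign(\psi(x_i)) \cdot |\psi(x_i)| = \psi(x_i)$, each Fourier term factorizes blockwise: the factor for block $i$ becomes $\sum_{x_i}\psi(x_i)\chi_{S_i}(x_i)$ when $i$ lies in the Fourier support and $\sum_{x_i}|\psi(x_i)|\chi_{S_i}(x_i)$ otherwise. Property (iii) bounds the former by $1/w$ when $S_i \neq \emptyset$, property (iv) annihilates it when $S_i = \emptyset$, and the latter is at most $\|\psi\|_1 = 1$. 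Assembling these bounds, combining with the $\|\zeta\|_1 \leq 2^t$ normalization, and invoking the standard LP-duality reformulation of weight (a polynomial $p$ of weight $\leq w'$ and degree $\leq d$ cannot $\eps$-approximate $F$ whenever some dual $\zeta$ satisfies $\sum_x \zeta(x) F(x) - \eps\|\zeta\|_1 > w' \cdot \max_{0 < |S| \leq d} |\sum_x \zeta(x)\chi_S(x)|$) yields the claimed weight lower bound of $2^{-5t}w$, with the constant $5$ absorbing several $2^{O(t)}$ slack factors.

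The main obstacle is ensuring the $1/w$ savings are realized for \emph{every} nonempty $S$. The factor $|\psi(x_i)|$ on blocks outside the Fourier support of $\Psi$ carries no $1/w$ bound, so a naive accounting yields only a bound of the form $\|\psi\|_1^t / w$, with no $t$-dependence in the denominator. The crucial point is that the $\Psi$ inherited from the proof of Theorem \ref{thm:big}, being concentrated on the all-$(-1)$ input, has Fourier mass spread broadly enough that for every possible block-support $I = \{i : S_i \neq \emptyset\}$, enough mass sits on subsets $T \subseteq [t]$ with $T \cap I \neq \emptyset$ to force at least one $1/w$ factor into each surviving product. Verifying this quantitatively --- while simultaneously preserving the correlation estimate and keeping the total loss to the promised $2^{O(t)}$ --- is the most delicate part of the argument.
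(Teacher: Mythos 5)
Your proposal follows essentially the same strategy as the paper's proof: extract a dual witness $\psi$ for $W^*_{3/4}(f,d) > w$ via LP duality, combine it with the dual witness $\Psi$ for $\OR_t$ concentrated on $\pm\mathbf{1}$ using the Sherstov--Lee formula, and verify the two conditions of the dual LP for degree-$d$ approximate weight. The only genuine divergence is bookkeeping: you normalize $\|\psi\|_1 = 1$ up front, whereas the paper leaves $\psi$ unnormalized (so $\|\psi\|_1 > 4w$ from condition \eqref{eq:wproof0}) and instead scales $\zeta$ by a prefactor $M_t = 2^{-2t}\|\psi\|_1^{1-t}$ chosen to make the Fourier constraints of Theorem \ref{thm:wprelim} come out to $\le 1$ directly. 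These are equivalent; your version pushes the factor of $w$ into the dual's Fourier coefficients (property (iii)) while the paper's pushes it into $\|\psi\|_1$.

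Two issues. First, property (i) should read $\sum_x \psi(x) f(x) > 3/4$, not $\ge 1/4$: the dual condition \eqref{eq:owprelim0} is $\sum\psi f - \tfrac34\|\psi\|_1 > w > 0$, and dividing by $\|\psi\|_1$ gives correlation $> 3/4$. This is not cosmetic. With correlation only $\ge 1/4$, the error mass of $\psi$ is bounded only by $3/8$, the per-block flip probability only by $3/4$, and the contribution of the $z = -\mathbf{1}$ term is only $\ge 1 - 2(3/4)^t$, which is nowhere near $1 - O(2^{-t})$; the correlation step would fail outright. With the correct $> 3/4$, the flip probability is $< 1/4$ and the correlation of $\zeta$ with $F$ is $\ge 1 - 2^{-2t}$, which is what makes the final accounting close. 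Second, the ``main obstacle'' you raise at the end is not one --- you had already given the correct argument two sentences earlier. Property (iv) kills the block factor whenever $S_i = \emptyset$ for some $i \in T$, so the surviving Fourier terms of $\Psi$ are exactly those with $T \subseteq I$ where $I$ is the block-support of $S$ (a condition strictly stronger than the $T \cap I \ne \emptyset$ you state). Since $S \ne \emptyset$, every surviving $T$ is nonempty, so its product carries at least one factor bounded by $1/w$, and the rest are $\le \|\psi\|_1 = 1$. The total is then at most $2^t \cdot (1/w)\cdot\sum_T|\hat\Psi(T)| = 2^{t-1}/w$, with no appeal to ``Fourier mass spread broadly'' (indeed $\hat\Psi$ is supported on all odd $T$, each of weight $2^{-t}$). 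Rescaling $\zeta$ and invoking Theorem \ref{thm:wprelim} --- noting that the $S=\emptyset$ constraint holds since $\Psi$ is balanced and hence $\sum_{\mathbf{x}}\zeta(\mathbf{x}) = 0$ --- finishes the proof with room to spare over the stated $2^{-5t}w$.
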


Adapting a proof of Servedio et al., we can show that
$W^*_{3/4}(\AND_m, d) \geq 2^{\Omega(m/d)}$. By applying Theorem
\ref{thm:big-weights} with $f = \AND_m$, along
with standard manipulations, we are able to extend
the lower bound of Servedio et al. to read-once CNFs and DNFs.

\begin{restatable}{corollary}{corcolt} \label{cor:colt}
For each $d=o(n/\log^4 n)$, there is a read-once DNF $F$
satisfying $W(F, d) = \exp\left(\Omega\left(\sqrt{n/d}\right)\right)$. 
\end{restatable}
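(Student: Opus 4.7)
The plan is to instantiate Theorem \ref{thm:big-weights} with inner function $f = \AND_m$, then convert the resulting weight lower bound for high-error approximations into a threshold weight lower bound via a standard margin-to-weight relationship.

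Define $F := \OR_t(\AND_m, \dots, \AND_m) : \bits^n \to \bits$ with $n = mt$; since this is an OR applied to $t$ conjunctions on disjoint blocks of $m$ literals, $F$ is a polynomial-size read-once DNF. The first substantive step is to establish the unproven claim made just before the corollary, namely $W^*_{3/4}(\AND_m, d) \geq 2^{c m/d}$ for an absolute constant $c > 0$, valid whenever $m \geq C d \log^2 m$. The paper indicates that this follows by adapting the argument of Servedio et al.\ \cite{thalercolt}. I expect this adaptation to be the technical crux of the whole corollary, since one-sided approximate weight is a strictly weaker complexity measure than threshold weight and the classical $\AND$ weight lower bounds typically exploit full symmetry of the target; the main work is to verify that the slack afforded on the $\AND_m^{-1}(-1)$ side of a one-sided approximation does not erode the usual random-restriction / dual-witness estimates.

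Given the bound on $W^*_{3/4}(\AND_m, d)$, Theorem \ref{thm:big-weights} immediately yields that every degree-$d$ polynomial that $(1-2^{-t})$-approximates $F$ has weight at least $2^{-5t} \cdot 2^{c m/d} = 2^{c m/d - 5 t}$. To convert this into a lower bound on $W(F, d)$, suppose for contradiction that some degree-$d$ PTF $p$ witnesses $W(F, d) \leq 2^t$. Because the weight of a multilinear polynomial upper bounds its $\ell_\infty$-norm on $\bits^n$, we have $\|p\|_\infty \leq 2^t$, so $q := p/\|p\|_\infty$ has weight at most $1$ and satisfies $F(x)\,q(x) \geq 1/\|p\|_\infty \geq 2^{-t}$ for every $x \in \bits^n$, which gives $|q(x) - F(x)| \leq 1 - 2^{-t}$ pointwise. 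Thus $q$ is a weight-$1$ degree-$d$ $(1-2^{-t})$-approximation of $F$, contradicting the preceding lower bound whenever $2^{c m/d - 5 t} > 1$, i.e., as soon as $t < c m/(5 d)$.

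Setting $t := c m/(10 d)$ and using $n = m t$ forces $m = \Theta(\sqrt{n d})$ and $t = \Theta(\sqrt{n/d})$, and the contrapositive of the previous paragraph then delivers $W(F, d) > 2^t = \exp(\Omega(\sqrt{n/d}))$. The hypothesis $m \geq C d \log^2 m$ required for the one-sided weight bound on $\AND_m$ translates, after substituting $m = \Theta(\sqrt{n d})$, into $d = o(n/\log^4 n)$, which matches the hypothesis of the corollary. The hard part is the one-sided weight bound for $\AND_m$; the remainder is a routine combination of Theorem \ref{thm:big-weights} with the weight-margin duality above.
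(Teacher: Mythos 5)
Your plan matches the paper's: take $F = \OR_t(\AND_m, \dots, \AND_m)$, apply Theorem~\ref{thm:big-weights} with the one-sided non-constant approximate weight bound $W^*_{3/4}(\AND_m, d) = 2^{\Omega(m/d)}$ (which the paper proves as Lemma~\ref{lem:and-approx-weight} by adapting the Markov-type symmetrization argument of Servedio et al.), set $m = \Theta(\sqrt{nd})$ and $t = \Theta(\sqrt{n/d})$, and convert the resulting approximate-weight lower bound into a degree-$d$ threshold-weight lower bound; the parameter bookkeeping that yields the hypothesis $d = o(n/\log^4 n)$ is also the same.

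The one slip is in the conversion step. You take a degree-$d$ PTF $p$ witnessing $W(F,d) \le 2^t$ and set $q := p/\|p\|_\infty$, claiming $q$ has weight at most $1$. That inequality runs the wrong way: since, as you yourself observe, the weight of a multilinear polynomial dominates its $\ell_\infty$-norm on $\bits^n$, the weight of $q$ equals $\mathrm{weight}(p)/\|p\|_\infty \ge 1$, not $\le 1$. The correct normalization (the one used in the proof of Lemma~\ref{lem:weight-relations}) is $q := p/w$ where $w := \mathrm{weight}(p) \le 2^t$. Then $q$ has weight exactly $1$, $\|q\|_\infty = \|p\|_\infty / w \le 1$, and, because $p$ has integer coefficients so that $|p(x)| \ge 1$ for all $x \in \bits^n$, also $|q(x)| \ge 1/w \ge 2^{-t}$; thus $q$ is a weight-$1$ degree-$d$ $(1-2^{-t})$-approximation to $F$, contradicting $W_{1-2^{-t}}(F,d) > 1$. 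This is precisely the implication $(2) \Rightarrow (3)$ of Lemma~\ref{lem:weight-relations}, which the paper invokes directly; with this repair your argument closes identically to the paper's.
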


In particular, there is a read-once DNF that cannot be computed by any PTF of $\poly(n)$ weight, unless the degree is $\tilde{\Omega}(n)$.

\subsubsection{Extending Beigel's Lower Bound to Read-Once DNFs}
It is known that $\odeg(\AND_m) = \Omega(m^{1/2})$ (cf. Fact \ref{fact:odegand}). 
By applying Theorem \ref{thm:big} with $f=\AND_m$,
we obtain the following result.

\begin{restatable}{corollary}{corDNFadeg} \label{cor:DNF-adeg}
There is an (explicit) read-once DNF $F: \{-1, 1\}^n \to \{-1, 1\}$ with $\adeg_{1-2^{-n/d^2}}(F) = \Omega(d)$.
\end{restatable}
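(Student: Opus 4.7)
} The plan is to apply Theorem \ref{thm:big} in the most direct way possible, with $f = \AND_m$ as the inner function, and to choose parameters $m, t$ so that the resulting error and degree parameters match the statement. The crucial input is Fact \ref{fact:odegand}, which tells us that $\odeg_{1/2}(\AND_m) = \Omega(m^{1/2})$. Thus, for a suitably large constant $C > 0$, setting $m = \lceil C d^2 \rceil$ guarantees that $\odeg_{1/2}(\AND_m) > d$.

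Next, set $t = \lfloor n/m \rfloor = \Theta(n/d^2)$ and define
\[
F := \OR_t(\AND_m, \dots, \AND_m) : \bits^{m \cdot t} \to \bits.
\]
Since the $t$ copies of $\AND_m$ act on disjoint sets of variables, $F$ is literally a read-once DNF on $m \cdot t \leq n$ variables (we may pad with irrelevant inputs if necessary to bring the total up to exactly $n$; irrelevant variables do not decrease approximate degree). Theorem \ref{thm:big}, applied with this choice of $f$ and $t$, immediately yields
\[
\adeg_{1 - 2^{-t}}(F) > d.
\]
Substituting $t = \Theta(n/d^2)$ gives $\adeg_{1 - 2^{-\Omega(n/d^2)}}(F) = \Omega(d)$, which is exactly the statement of the corollary (absorbing the constant in the exponent is standard and can be done by adjusting $C$).

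There is essentially no obstacle here: the entire argument is a parameter-setting exercise on top of Theorem \ref{thm:big} and Fact \ref{fact:odegand}. The only item that requires a brief sanity check is that an $\OR$ of disjoint $\AND$s indeed forms a read-once DNF in the standard sense, and that the padding step preserves the lower bound; both are immediate. Thus the corollary follows with no new work beyond invoking the hardness amplification theorem on the simplest possible hard inner function.
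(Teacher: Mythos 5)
Your proposal is correct and takes essentially the same approach as the paper: set $m \approx d^2$, $t \approx n/d^2$, let $f = \AND_m$, and apply Theorem~\ref{thm:big} together with Fact~\ref{fact:odegand}. The only differences are cosmetic (you explicitly track the constant $C$ and mention padding, whereas the paper absorbs both into the $\Omega$-notation).
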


We remark that for $d < n^{1/3}$, 
Corollary \ref{cor:DNF-adeg} is subsumed
by Minsky and Papert's seminal result that exhibited a read-once
DNF $F$ with threshold degree $\Omega(n^{1/3})$
\cite{mp}.
However, for $d > n^{1/3}$, it is not subsumed
by Minsky and Papert's result, nor by
Corollary \ref{cor:colt}. Indeed, 
Corollary \ref{cor:colt} yields a lower bound on the degree-$d$
threshold weight of read-once DNFs, but not a lower bound
on the \emph{approximate-degree} of read-once DNFs (see Section \ref{sec:weightrelations} for further discussion on the separation between these quantities).

\subsection{Discussion} \label{sec:discussion}

\subsubsection{Subsequent Work by Sherstov}
In 1969, Minsky and Papert gave a lower bound of $\Omega(n^{1/3})$ on the threshold degree of an explicit read-once DNF formula.
Klivans and Servedio \cite{ks} proved their lower bound to be tight within a logarithmic factor for DNFs of polynomial size,
but it remained a well-known open question to give a threshold degree lower bound of $\Omega(n^{1/3+ \delta})$ for a function in \acz; the only progress prior to our work was due to O'Donnell and Servedio \cite{os}, who established an $\Omega(n^{1/3}\log^k n)$ lower bound for any constant $k>0$.

Let $f$ denote the \ED\ function on $n^{3/5}$ variables. In an earlier version of this work, we conjectured that the function
 $F=\OR_{n^{2/5}}(f, \dots, f)$
appearing in Corollary \ref{cor:ed-amp}
in fact satisfies $\deg_\pm(f) = \tilde{\Omega}(n^{2/5})$, and observed that this would yield the first
polynomial improvement on Minsky and Papert's lower bound. 
Sherstov \cite[Theorem 7.1]{sherstovmp} has recently proved our conjecture. His proof, short and elegant, extends our dual witness construction
in the proof of Theorem \ref{thm:big} to establish
a different form of hardness amplification, from one-sided approximate degree to threshold degree.
Specifically, he shows that if a Boolean function $f$ has one-sided approximate degree $d$, then the block-wise composition $\OR_t(f, \dots, f)$ has threshold degree at least $\min\{ct, d\}$ for some constant $c$. This result is incomparable to our Theorem \ref{thm:big} when $t \leq d$, but when $t \gg d$, Sherstov's result is a substantial strengthening of Theorem \ref{thm:big}.

In the same work, Sherstov has also proven a much stronger and more difficult result: for any $k > 2$, he gives a read-once formula 
of depth $k$ with threshold degree $\Omega\left(n^{(k-1)/(2k-1)}\right)$. Notice that for any constant $\delta > 0$, this yields an \acz\ function
with threshold degree $\Omega(n^{1/2-\delta})$. This in turn yields an improvement of our discrepancy upper bound (Corollary \ref{cor:discrepancy}) for \acz\ to $\exp(-\Omega(n^{1/2-\delta}))$, and of our threshold weight lower bound (Corollary \ref{cor:weightacz}) to $\exp(\Omega(n^{1/2-\delta}))$.

\subsubsection{Subsequent Work by Kanade and Thaler}
Existing applications of one-sided approximate degree \cite{gavinskysherstov, sherstovnew, bunthaler, sherstovmp} have all been of
a negative nature (proving communication and circuit lower bounds, establishing limitations on existing PAC learning algorithms, etc.). 
 Kanade and Thaler \cite{kanadethaler} have recently identified a positive (algorithmic) application of one-sided approximate degree.
Specifically, they show that one-sided approximate degree upper bounds imply fast algorithms 
 in the reliable agnostic learning framework of Kalai et al. \cite{reliable}. This framework captures learning tasks in which one type of error (such as false negative errors) is costlier than other types. Kanade and Thaler use this result to give the first
 sub-exponential time algorithms for distribution-independent reliable learning of several fundamental concept classes.
 
 \medskip
 
In light of these developments, we are optimistic that the notion of one-sided approximate degree will continue to enable progress on questions within the analysis of Boolean functions and computational complexity theory.

\subsubsection{Future Directions}
Beame and Machmouchi \cite{beame} established an $\Omega(n/\log n)$ lower bound on the quantum query complexity
of a specific function $f: \{-1, 1\}^n \rightarrow \{-1, 1\}$ in \acz. The previous best lower bound was $\Omega((n/\log n)^{2/3})$,
which held for the \ED\ function \cite{aaronsonshi}.

Beame and Machmouchi's lower bound applies to the \tto\ function, which is computed
by depth-three circuit of polynomial size. This function interprets its input
as a list of $N$ numbers from a range of size $R \geq N$, and evaluates to $-1$ if and only if exactly $N/2$ numbers appear 
in the list, each with frequency exactly 2. They pose as an open question the problem of
resolving the approximate degree of \tto\footnote{Technically speaking, they ask about the \ttov\ function, which is 
a promise variant of the \tto\ function.}  (recall that the approximate degree of $f$ is a lower bound on the quantum query complexity of $f$,
but polynomial separations between approximate degree and quantum query complexity are known \cite{ambainissep}). 

For simplicity, we focus on the case where $R = N$. We observe that Theorem \ref{thm:symmetric} applies to the \tto\ function, revealing that its one-sided approximate
degree is almost equal to its approximate degree. 

\begin{corollary} \label{cor:conclusioncor}
Let $f: \{-1, 1\}^m \rightarrow \{-1, 1\}$ denote the \tto\ function on $m$ variables. For any $\eps> 0$, $\odeg_{\eps}(f) \ge \adeg_{\eps}(f) / \log m$.
\end{corollary}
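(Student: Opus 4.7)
The plan is to derive Corollary \ref{cor:conclusioncor} as a direct application of Theorem \ref{thm:symmetric} to the \tto\ function. To do so, I need only verify that the hypotheses of that theorem hold, and then check that the quantitative bound matches.

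First, I would set up the correspondence between Boolean inputs and functions. With $R = N$ and $m = N \log_2 N$, each input $x \in \{-1,1\}^m$ is naturally parsed into $N$ blocks of length $\log_2 N$, and each block encodes the value $g_x(i) \in [N]$. The \tto\ property $\phi_f$ is then defined by $\phi_f(g_x) = -1$ iff $g_x$ is exactly 2-to-1, i.e., iff the image of $g_x$ has size exactly $N/2$ and every element in the image has exactly two preimages under $g_x$.

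Second, I would verify that $\phi_f$ is symmetric. For any permutation $\pi$ on $[N]$ (permuting input positions) and any permutation $\sigma$ on $[N]$ (relabeling range values), the function $\sigma \circ g_x \circ \pi$ has the same multiset of fiber sizes as $g_x$, so it is 2-to-1 iff $g_x$ is. Hence $\phi_f(g_x) = \phi_f(\sigma \circ g_x \circ \pi)$, as required.

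Third, I would verify the transitivity condition on $f^{-1}(-1)$. Given any two inputs $x, y \in f^{-1}(-1)$, both $g_x$ and $g_y$ are exactly 2-to-1 maps onto $N/2$-element subsets of $[N]$. Any two such maps are isomorphic: pair up the preimage pairs of $g_y$ with those of $g_x$ to obtain a permutation $\pi$ of $[N]$, and fix any bijection between the images of $g_y$ and $g_x$ (extended arbitrarily to a permutation of $[N]$) to obtain $\sigma$, so that $g_x = \sigma \circ g_y \circ \pi$. With the hypotheses of Theorem \ref{thm:symmetric} verified, I apply it to conclude $\odeg_\eps(f) \ge \adeg_\eps(f)/\log_2 N$ for every $\eps > 0$, and since $m = N\log_2 N \ge N \ge \log_2 N$ gives $\log_2 N \le \log_2 m$, the stated bound follows. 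There is no real obstacle here: once Theorem \ref{thm:symmetric} is in hand, the only work is checking the combinatorial symmetry of the \tto\ property, which is essentially immediate.
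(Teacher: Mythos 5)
Your proof is correct and follows the same route the paper takes: the paper simply observes that Theorem~\ref{thm:symmetric} applies to the \tto\ function, and you supply the straightforward verification of its two hypotheses (symmetry of the property and single-orbit structure of $f^{-1}(-1)$ under domain/range permutations) together with the $\log_2 N \le \log_2 m$ step. No gaps.
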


Combining Corollary \ref{cor:conclusioncor} and the recent result \cite[Theorem 7.1]{sherstovmp} 
allows us to transform any \emph{approximate degree} lower bound for the \tto\ function into a \emph{threshold degree}
lower bound for a related depth-four circuit.

\begin{corollary}
Let $f: \{-1, 1\}^m \rightarrow \{-1, 1\}$ denote the \tto\ function on $m$ variables, and let $d=\odeg(f) \ge \adeg(f) / \log m$.
Let $n=m \cdot d$, and define $F: \{-1, 1\}^{n} \rightarrow \{-1, 1\}$ via
$F=\OR_d(f, \dots, f)$. Then $\deg_{\pm}(F)=\Omega(d)$.  
In particular, if $\adeg(f)=\Omega(m/\log m)$, then $\deg_\pm(F) = \Omega\left(n^{1/2}/\log n\right)$.
\end{corollary}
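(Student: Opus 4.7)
The plan is to obtain the corollary as a direct composition of Corollary \ref{cor:conclusioncor} with Sherstov's amplification theorem from \cite[Theorem 7.1]{sherstovmp}, plus a small piece of arithmetic for the ``in particular'' clause. Corollary \ref{cor:conclusioncor} already certifies that the \tto\ function $f$ has $\odeg(f) \geq \adeg(f)/\log m$, which is exactly the hypothesis one needs to invoke Sherstov's theorem: if $\odeg_{1/2}(f) > d$, then for every $t$, $\deg_\pm(\OR_t(f,\dots,f)) \geq \min\{ct, d\}$ for some absolute constant $c>0$.

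Specialize this to $t = d$, which is precisely the choice dictated by the corollary (the outer fan-in is tuned to match the inner one-sided approximate degree). The minimum then collapses to $\min\{cd, d\} = cd$, yielding $\deg_\pm(F) = \Omega(d)$, which is the first claim. There is no obstacle here beyond quoting the two ingredients in the correct order; in particular, the entire purpose of introducing one-sided approximate degree and showing that it nearly coincides with approximate degree for symmetric properties of functions $[N] \to [R]$ is to make exactly this kind of black-box application available for \tto.

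For the ``in particular'' clause, assume $\adeg(f) = \Omega(m/\log m)$. By Corollary \ref{cor:conclusioncor}, $d = \odeg(f) = \Omega(m/\log^2 m)$, so $m = O(d \log^2 m)$ and hence $n = m \cdot d = O(d^2 \log^2 m)$. Rearranging gives $d = \Omega(\sqrt{n}/\log m)$, and since $m \leq n$ we have $\log m = O(\log n)$, so $d = \Omega(\sqrt{n}/\log n)$. Combined with $\deg_\pm(F) = \Omega(d)$ from the first part, this yields the stated bound $\deg_\pm(F) = \Omega(n^{1/2}/\log n)$. The only ``work'' in the argument is verifying that the two quoted theorems slot together with the chosen value $t = d$, after which the result is purely formal; there is no genuine technical obstacle, since all the heavy lifting has been done in establishing Corollary \ref{cor:conclusioncor} and in Sherstov's subsequent strengthening of Theorem \ref{thm:big}.
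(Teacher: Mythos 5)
Your proof is correct and follows exactly the route the paper intends: Corollary~\ref{cor:conclusioncor} supplies the one-sided approximate degree lower bound for the \tto\ function, and Sherstov's amplification theorem (\cite[Theorem 7.1]{sherstovmp}) with $t=d$ converts it into a threshold degree bound $\Omega(\min\{cd,d\})=\Omega(d)$; the ``in particular'' arithmetic is carried out correctly, using $m\le n$ to replace $\log m$ by $\log n$. The paper states this corollary as an immediate consequence without a separate proof, so there is nothing to compare beyond noting your derivation matches the intended one.
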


Thus, establishing a quasilinear lower bound on the approximate degree of \tto\ would immediately
yield a function $F$ computable by a depth-four circuit of polynomial size with threshold degree $\tilde{\Omega}(n^{1/2})$,
a polynomial improvement over Sherstov's $\Omega(n^{(k-1)/(2k-1)})$ bound for any constant depth $k$. 
Even a lower bound of $\Omega(m^{3/4+\delta})$ for some positive constant $\delta$ on the approximate degree of the \tto\ function would yield a depth four
circuit with threshold degree $\Omega(n^{3/7 + \delta'})$ for some $\delta'>0$. This would constitute a polynomial improvement
over the current best lower bound of $\Omega(n^{3/7})$ for depth 4, and would additionally imply improved lower bounds on the threshold weight and discrepancy
of depth five circuits.

While the best known lower bound on the approximate degree of the \tto\ function on $m$ variables is currently $\Omega((m/\log m)^{2/3})$ (this can 
be derived by reduction to \ED), we conjecture that its approximate degree is in fact $\Omega(m/\log m)$,
and interpret Beame and Machmouchi's quantum query lower bound as providing mild evidence in favor of this conjecture.

\eat{
In very recent work, Sherstov \cite{sherstovmp} was able to show a strong hardness amplification result from one-sided approximate degree to \emph{threshold degree}. The threshold degree of a Boolean function $f$ is the minimum degree of a real polynomial that agrees with $f$ in sign. As threshold degree is yet a further relaxation of approximate degree, lower bounds on threshold degree comprise even stronger hardness results. In its simplest form, Sherstov's hardness amplification result shows that if a Boolean function $f$ has one-sided approximate degree $d$, then the block-wise composition $\OR_t(f, \dots, f)$ has threshold degree at least $\min\{ct, d\}$ for some constant $c$. When $t \gtrsim d$, this gives a strict strengthening of our Theorem \ref{thm:big}.

In 1969, Minsky and Papert gave a lower bound of $\Omega(n^{1/3})$ on the threshold degree of a certain depth-two circuit, and it remained open until Sherstov's work to prove any polynomially-larger lower bound for a constant-depth circuit. Sherstov's hardness amplification improves this lower bound to $\Omega(n^{2/5})$ for depth-three circuits, resolving a conjecture of O'Donnell and Servedio \cite{os}, as well as a conjecture appearing in a previous version of this work (see Section \ref{sec:introacz}), for particular candidate circuits. Using a much more general form of his hardness amplification result, Sherstov in fact exhibits for any $k$ a depth-$k$ read-once Boolean formula with threshold degree $\Omega\left(n^\frac{k-1}{2k-1}\right)$. In particular, this shows that the threshold degree of \acz\ is at least $\Omega(n^{1/2 - \delta})$ for any $\delta > 0$, giving an improved upper bound of $\exp(-\Omega(n^{1/2 - \delta}))$ for the discrepancy and improved lower bounds of $\exp(\Omega(n^{1/2 - \delta}))$ for the threshold weight and threshold density of \acz.

One proof of Sherstov's basic hardness amplification result can be viewed as a refinement of our proof of Theorem \ref{thm:big}. At a very high level, Sherstov uses a dual formulation of threshold degree, which differs from a dual-witness for approximate degree in that a dual polynomial $\zeta$ actually needs to sign-represent $\OR_t(f, \dots, f)$ (rather than simply being well-correlated with it). To this end, he adds a correction term to our combined dual polynomial, which fixes its disagreements in sign with the target function. A careful choice of this correction term ensures that it is uncorrelated with every monomial of degree $ct$ for some constant $c$, showing that the final dual witness is uncorrelated with every monomial of degree $\min\{ct, d\}$.
}

\subsection{Paper Roadmap}
Section \ref{sec:prelims} establishes terminology, introduces our main technique based on LP-duality, and proves essential technical lemmas.
Section \ref{sec:highlow} establishes
our central hardness amplification result for one-sided approximate
degree (Theorem \ref{thm:big}). 
Section \ref{sec:aczlb} establishes our new one-sided approximate degree lower bound 
for \acz\ (Theorem \ref{thm:symmetric}, Corollary \ref{cor:ed}), and combines this with Theorem \ref{thm:big}
to obtain our new
lower bounds on ``accuracy vs. degree'' tradeoffs for pointwise
approximating \acz\ by polynomials (Corollary \ref{cor:ed-amp}).
It then proves our new discrepancy
upper bound for an \acz\ function (Corollary \ref{cor:discrepancy}) and our new 
threshold weight lower bound for \acz\ (Corollaries \ref{cor:ed-ampthresh} and \ref{cor:weightacz}). Section \ref{sec:andortree} proves our new approximate degree lower bound for AND-OR trees
(Theorem \ref{thm:andor}).
Section \ref{sec:readonce} proves our new lower bounds
for read-once DNFs (Theorem \ref{thm:big-weights}, Corollary \ref{cor:colt}, and Corollary \ref{cor:DNF-adeg}).
Section \ref{sec:applications} highlights several applications of these results to communication complexity, circuit lower bounds, and learning theory. 

\section{Preliminaries}
\label{sec:prelims}
We work with Boolean functions $f: \{-1, 1\}^n \to \{-1, 1\}$
under the standard convention that 1 corresponds to logical false, and $-1$ corresponds to logical true. For a real-valued function $r : \bits^n \to \R$, we let $\|r\|_\infty = \max_{x \in \bits^n} |r(x)|$ denote the $\ell_\infty$ norm of $r$. 
We let OR$_n$ and AND$_n$ denote the OR function and AND function on $n$ variables respectively. Define $\sgn(t) = -1$ if $t \le 0$ and 1 otherwise. For a set $S \subseteq [n] = \{1, \dots, n\}$, let $\chi_S(x) := \prod_{i \in S} x_i$ denote the parity function
over variables indexed by $S$.

We now define the notions of approximate degree, approximate weight,
threshold degree, threshold weight, and their one-sided variants.
 
\subsection{Polynomial Approximations and their Dual Characterizations}
\label{sec:duallps}
\subsubsection{Approximate Degree}
\label{sec:approxdegree}
The $\eps$-approximate degree of a function $f: \{-1, 1\}^n \rightarrow \{-1, 1\}$,
denoted $\adeg_\eps(f)$, is the minimum (total) degree of any real polynomial $p$ such that $\|p - f\|_\infty \le \eps$, i.e., $|p(x) - f(x)| \leq \eps$ for all $x \in \{-1, 1\}^n$.
We use $\adeg(f)$ to denote $\adeg_{1/3}(f)$, and use this to refer to the \emph{approximate degree} of a function without qualification. The choice of $1/3$ is arbitrary, as $\widetilde{\deg}(f)$ is related to $\adeg_\eps(f)$ by a constant factor for any constant $\eps \in (0, 1)$. 

Given a Boolean function $f$, let $p$ be a real polynomial that minimizes $\|p - f\|_\infty$ among
all polynomials of degree at most $d$. Since we work over $x \in \{-1, 1\}^n$, we may assume without loss of generality that $p$ is multilinear with the representation $p(x) = \sum_{|S| \leq d} c_S \chi_S(x)$ where the coefficients $c_S$ are real numbers. Then $p$ is an optimum of the following linear program.

\[ \boxed{\begin{array}{lll} 
    \text{min}  &     \eps    \\
    \mbox{such that} &\Big|f(x) - \sum_{|S| \leq d} c_S \chi_S(x)\Big| \leq \eps & \text{ for each } x \in \{-1, 1\}^n\\
    &c_S \in \mathbb{R} & \text{ for each } |S| \leq d\\
    &\eps \geq 0
    \end{array}}
\]

The dual LP is as follows.

\[ \boxed{\begin{array}{lll} 
    \text{max} &    \sum_{x \in \{-1, 1\}^n} \phi(x) f(x)   \\
    \mbox{such that} &\sum_{x \in \{-1, 1\}^n} |\phi(x)| = 1\\
    &\sum_{x \in \{-1, 1\}^n} \phi(x) \chi_S(x)=0  & \text{ for each } |S| \leq d\\
    &\phi(x) \in \mathbb{R} & \text{ for each } x \in \{-1, 1\}^n
    \end{array}}
\]

Strong LP-duality thus yields the following well-known dual characterization of approximate degree (cf. \cite{patternmatrix}).

\begin{theorem} \label{thm:prelim} Let $f: \{-1, 1\}^n \to \{-1, 1\}$ be a Boolean function. Then $\adeg_\eps(f) > d$ if and only if there is a polynomial $\phi: \{-1, 1\}^n \rightarrow \mathbb{R}$ such that 
\begin{equation} \label{eq:prelim0} \sum_{x \in \{-1, 1\}^n} f(x) \phi(x) > \eps, \end{equation}
\begin{equation} \label{eq:prelim1} \sum_{x \in \{-1, 1\}^n} |\phi(x)| = 1,\end{equation}  and
\begin{equation} \label{eq:prelim2} \sum_{x \in \{-1, 1\}^n} \phi(x) \chi_S(x)=0   \text{ for each } |S| \leq d.\end{equation}
\end{theorem}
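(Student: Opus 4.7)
The plan is to apply strong LP duality directly to the primal-dual pair already written in the excerpt. First I would observe that the primal LP correctly captures $\adeg_\eps(f)$: since we evaluate at $x \in \{-1,1\}^n$, every polynomial may be replaced by its multilinear reduction without changing its values, so the variables $\{c_S\}_{|S|\le d}$ range over a finite-dimensional space and the primal optimum $\eps^*$ equals $\min_{p:\deg p \le d}\|p - f\|_\infty$. In particular, $\adeg_\eps(f) > d$ is equivalent to $\eps^* > \eps$. The primal is feasible (take all $c_S = 0$ and $\eps = 1$) and bounded below by $0$, so strong LP duality applies and the dual attains the same optimum $\eps^*$.

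For the ``only if'' direction, I would invoke strong duality directly: if $\eps^* > \eps$, then some feasible dual solution $\phi$ achieves objective value strictly greater than $\eps$. The dual feasibility constraints are exactly \eqref{eq:prelim1} and \eqref{eq:prelim2}, and the objective value $\sum_x f(x)\phi(x) > \eps$ is \eqref{eq:prelim0}.

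For the ``if'' direction, I would give a direct weak-duality argument, which avoids re-deriving the dual LP and makes the proof self-contained. Given any multilinear polynomial $p(x) = \sum_{|S|\le d} c_S \chi_S(x)$, orthogonality \eqref{eq:prelim2} yields $\sum_x \phi(x) p(x) = 0$. Combining this with normalization \eqref{eq:prelim1} and H\"older's inequality,
\[
\eps \;<\; \sum_x f(x)\phi(x) \;=\; \sum_x \phi(x)\bigl(f(x) - p(x)\bigr) \;\le\; \|\phi\|_1 \cdot \|f - p\|_\infty \;=\; \|f - p\|_\infty.
\]
Since $p$ was an arbitrary polynomial of degree at most $d$, this shows $\adeg_\eps(f) > d$.

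I do not expect any real obstacle: the result is essentially the standard LP-duality characterization of $\ell_\infty$ approximation, and the only bookkeeping is the multilinear reduction together with checking that the dual LP written in the excerpt is indeed the formal dual of the primal (which is routine — assigning a Lagrange multiplier $\phi(x)$ to each absolute-value constraint, with the standard trick of splitting it into two one-sided inequalities, produces exactly the constraints \eqref{eq:prelim1} and \eqref{eq:prelim2}).
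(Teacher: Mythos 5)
Your proposal is correct and takes essentially the same route as the paper: the paper simply exhibits the primal--dual LP pair and invokes strong LP duality as "well-known," whereas you fill in the standard details. Your observation that the ``if'' direction can be verified self-containedly via orthogonality plus H\"older (a weak-duality calculation), rather than by re-deriving the dual LP, is a clean touch but not a substantively different method.
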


If $\phi$ satisfies \eqref{eq:prelim2}, we say $\phi$ has \emph{pure high degree} $d$.
We refer to any feasible solution $\phi$ to the dual LP as a \emph{dual polynomial} for $f$.

\subsubsection{One-Sided Approximate Degree} \label{sec:odeg}
We introduce a relaxed notion of the approximate degree of $f$ which we call the one-sided $\eps$-approximate degree, denoted by $\odeg_\eps(f)$. This is the least degree of a real polynomial $p$ with that is an $\eps$-\emph{one-sided approximation} to $f$, meaning

\begin{enumerate}
\item $|p(x) - 1| \leq \eps$ for all $x \in f^{-1}(1)$.
\item $p(x) \leq -1+\eps$ for all $x \in f^{-1}(-1)$.
\end{enumerate}

That is, we require $p$ to be very accurate on inputs in $f^{-1}(1)$, but only require ``one-sided accuracy'' on inputs in  $f^{-1}(-1)$.
We use $\odeg(f)$ to denote $\odeg_{1/3}(f)$, and refer to this quantity without qualification as the \emph{one-sided approximate degree} of $f$. 

The primal and dual LPs change in a simple but crucial way if we look at one-sided approximate degree rather than approximate degree. 
Let  $p(x) = \sum_{|S| \leq d} c_S \chi_S(x)$ be a polynomial of degree $d$ for which the $\eps$-one-sided approximate degree of $f$ is attained. 
Then $p$ is an optimum of the following linear program.

\[ \boxed{\begin{array}{lll} 
    \text{min}  &     \eps    \\
    \mbox{such that} &\Big|f(x) - \sum_{|S| \leq d} c_S \chi_S(x)\Big| \leq \eps & \text{ for each } x \in f^{-1}(1)\\
    &\sum_{|S| \leq d} c_S \chi_S(x) \leq -1+\eps & \text{ for each } x \in f^{-1}(-1)\\
    &c_S \in \mathbb{R} & \text{ for each } |S| \leq d\\
    &\eps \geq 0 
    \end{array}}
\]

The dual LP is as follows.

\[ \boxed{\begin{array}{lll} 
    \text{max} &    \sum_{x \in \{-1, 1\}^n} \phi(x) f(x)   \\
    \mbox{such that} &\sum_{x \in \{-1, 1\}^n} |\phi(x)| = 1\\
    &\sum_{x \in \{-1, 1\}^n} \phi(x) \chi_S(x)=0  & \text{ for each } |S| \leq d\\
    & \phi(x) \leq 0 \text{ for each } x \in f^{-1}(-1)\\
    &\phi(x) \in \mathbb{R} & \text{ for each } x \in \{-1, 1\}^n
    \end{array}}
\]

We again appeal to strong LP-duality for the following dual characterization of one-sided approximate degree.

\begin{theorem} \label{thm:oprelim} Let $f: \{-1, 1\}^n \to \{-1, 1\}$ be a Boolean function. Then $\odeg_\eps(f) > d$ if and only if there is a polynomial $\phi: \{-1, 1\}^n \rightarrow \mathbb{R}$ such that 
\begin{equation} \label{eq:oprelim0} \sum_{x \in \{-1, 1\}^n} f(x) \phi(x) > \eps, \end{equation}
\begin{equation} \label{eq:oprelim1} \sum_{x \in \{-1, 1\}^n} |\phi(x)| = 1,\end{equation} 
\begin{equation} \label{eq:oprelim2} \sum_{x \in \{-1, 1\}^n} \phi(x) \chi_S(x)=0   \text{ for each } |S| \leq d,\end{equation} and
\begin{equation} \label{eq:oprelim3} \phi(x) \le 0   \text{ for each } x \in f^{-1}(-1).\end{equation}
\end{theorem}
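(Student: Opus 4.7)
The plan is to derive Theorem \ref{thm:oprelim} as a direct application of strong LP duality to the primal linear program displayed just before its statement, exactly as Theorem \ref{thm:prelim} was obtained for ordinary approximate degree. First I would put that primal into canonical form. For each $x \in f^{-1}(1)$, split the absolute-value constraint $|f(x) - p(x)| \le \eps$ into the two linear inequalities $p(x) + \eps \ge 1$ and $-p(x) + \eps \ge -1$; for each $x \in f^{-1}(-1)$, rewrite the one-sided constraint as $-p(x) + \eps \ge 1$. Here $p(x) = \sum_{|S| \le d} c_S \chi_S(x)$. The resulting LP has free variables $\{c_S\}_{|S| \le d}$ together with $\eps \ge 0$, and its optimum is the least $\eps$ for which a degree-$d$ one-sided $\eps$-approximation to $f$ exists.

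Next I would mechanically construct the dual. Attach nonnegative Lagrange multipliers $a_x, b_x$ to the two halves of the absolute-value constraint at each $x \in f^{-1}(1)$, and $c_x$ to the one-sided constraint at each $x \in f^{-1}(-1)$. Requiring the Lagrangian to be bounded below in the free variables $c_S$ produces an equality for each $|S| \le d$, and requiring it to be bounded below in $\eps \ge 0$ produces the single inequality $\sum_x (a_x + b_x + c_x) \le 1$. Packaging the multipliers into
\begin{equation*}
\phi(x) = \begin{cases} a_x - b_x & \text{if } x \in f^{-1}(1), \\ -c_x & \text{if } x \in f^{-1}(-1), \end{cases}
\end{equation*}
immediately yields the sign condition $\phi(x) \le 0$ on $f^{-1}(-1)$ of \eqref{eq:oprelim3}; the equality coming from $c_S$ reads $\sum_x \phi(x)\chi_S(x) = 0$, which is the pure high degree condition \eqref{eq:oprelim2}; and the dual objective $\sum_{x \in f^{-1}(1)}(a_x - b_x) + \sum_{x \in f^{-1}(-1)} c_x$ is by construction equal to $\sum_x \phi(x) f(x)$.

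At any dual optimum we may take $a_x b_x = 0$ by complementary slackness, so $|\phi(x)| = a_x + b_x$ on $f^{-1}(1)$ and $|\phi(x)| = c_x$ on $f^{-1}(-1)$, and the inequality $\sum_x(a_x + b_x + c_x) \le 1$ becomes $\|\phi\|_1 \le 1$. At any nonzero maximizer this can be rescaled to the equality \eqref{eq:oprelim1}. Strong LP duality applies because the primal is feasible (take $c_S \equiv 0$ and $\eps = 2$) with optimum in $[0, 2]$, so primal and dual optima agree. Since $\odeg_\eps(f) > d$ means exactly that the primal optimum strictly exceeds $\eps$, strong duality supplies the desired $\phi$ in one direction. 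Conversely, given $\phi$ satisfying \eqref{eq:oprelim0}--\eqref{eq:oprelim3} and any one-sided $\eps$-approximation $p$ of degree at most $d$, the pure high degree of $\phi$ annihilates $p$, and splitting $\sum_x \phi(x)(f(x) - p(x))$ across $f^{-1}(\pm 1)$ --- using $|f - p| \le \eps$ on $f^{-1}(1)$ and $f - p \ge -\eps$ together with $\phi \le 0$ on $f^{-1}(-1)$ --- bounds the total by $\eps\|\phi\|_1 = \eps$, contradicting \eqref{eq:oprelim0}.

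The only real obstacle is bookkeeping: keeping the signs straight so that $(a_x, b_x, c_x)$ assemble into a single function $\phi$ with the stated sign on $f^{-1}(-1)$, and justifying the sharpening of $\|\phi\|_1 \le 1$ to the equality \eqref{eq:oprelim1}. Neither is substantive, but both must be executed carefully for the resulting dual LP to match the one displayed immediately before the theorem.
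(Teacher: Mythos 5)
Your proposal is correct and takes the same route as the paper: the paper displays the primal and dual linear programs and simply invokes strong LP duality, and you supply the Lagrangian bookkeeping, the packaging of multipliers into $\phi$, and the normalization from $\|\phi\|_1\le 1$ to equality that the paper leaves implicit. All of these details, including the weak-duality verification of the converse direction, are handled correctly.
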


Observe that a feasible solution $\phi$ to this dual LP is a feasible solution to the dual LP for approximate degree, with the additional constraint that $\phi(x)$ agrees in sign with $f(x)$ whenever $x \in f^{-1}(-1)$. We refer to any such feasible solution $\phi$ as a dual polynomial for $f$ with \emph{one-sided error}.
Dual polynomials with one-sided error have recently played an important role in resolving open problems in communication complexity \cite{gavinskysherstov} and resolving
the approximate degree of the two-level AND-OR tree \cite{sherstovnew, bunthaler}. They will play a critical role in our proof of Theorem \ref{thm:big} as well.

Prior work using the dual formulation of one-sided approximate degree exploited the fact that the AND function has one-sided approximate degree equal to its ordinary approximate degree \cite{gavinskysherstov, sherstovnew, bunthaler}. This fact also plays an important role in the applications of our hardness amplification technique to AND-OR trees (Section \ref{sec:andortree}) and to read-once DNF formulas (Section \ref{sec:readonce}).

\begin{fact} \label{fact:odegand}
$$\odeg(\AND_m) = \adeg(\AND_m) = \Omega(\sqrt{m}).$$
\end{fact}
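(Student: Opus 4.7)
My plan is to first observe that $\odeg_\eps(f) \leq \adeg_\eps(f)$ in general, since any $\eps$-approximation to $f$ is in particular a one-sided $\eps$-approximation. Combined with the classical Nisan-Szegedy upper bound $\adeg(\AND_m) = O(\sqrt{m})$ (via the Chebyshev polynomial construction), this reduces the claim to proving the matching lower bound $\odeg(\AND_m) = \Omega(\sqrt{m})$.

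For the lower bound, I would follow the symmetrization route. Let $p$ be any degree-$d$ one-sided $(1/3)$-approximation to $\AND_m$, and apply Minsky-Papert symmetrization: define the univariate polynomial $q(k) = \mathbb{E}[p(x)]$, where the expectation is over $x \in \{-1, 1\}^m$ with exactly $k$ coordinates equal to $-1$. Then $\deg(q) \leq d$. Since $\AND_m^{-1}(1)$ consists of all inputs except $(-1, \ldots, -1)$ (equivalently, all $k < m$), the one-sided approximation conditions translate into $q(k) \in [2/3, 4/3]$ for $k = 0, 1, \ldots, m-1$ and $q(m) \leq -2/3$. In particular $q$ exhibits a jump of magnitude at least $4/3$ between the consecutive integers $m-1$ and $m$.

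The core technical step is a polynomial inequality showing that any univariate $q$ with these properties has degree $\Omega(\sqrt{m})$. I would combine three classical tools from approximation theory: the Coppersmith-Rivlin theorem, to bound $\|q\|_{\infty, [0, m-1]}$ by $O(e^{cd^2/m})$ in terms of $\max_{k \in \{0, \ldots, m-1\}} |q(k)| \leq 4/3$; Bernstein's inequality, to extrapolate this bound to $[m-1, m]$ via $T_d(1 + 2/(m-1)) = \cosh(O(d/\sqrt{m}))$; and Markov's inequality on $[0, m]$ to bound $\max_{[0,m]} |q'|$ by $O(d^2/m) \cdot \max_{[0,m]} |q|$. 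Invoking the mean value theorem on the jump $|q(m) - q(m-1)| \geq 4/3$ yields an inequality of the form $d^2 \gtrsim m / M(d)$, where $M(d)$ is the uniform bound on $|q|$. Choosing the implicit constant in $d \leq c\sqrt{m}$ sufficiently small keeps $M(d) = O(1)$, producing a contradiction.

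The main obstacle is that the one-sided condition only gives a one-sided constraint $q(m) \leq -2/3$, so \emph{a priori} $q(m)$ could be arbitrarily negative, and direct symmetrization-plus-Markov on $[0,m]$ does not immediately apply. The Bernstein-type extrapolation from $[0, m-1]$, however, forces the two-sided bound $|q(m)| = O(1)$ in the regime $d = O(\sqrt{m})$, which is what makes the subsequent Markov step go through. Care is needed in tracking constants through the combination of the three inequalities to guarantee the final bound $d = \Omega(\sqrt{m})$. An alternative, equivalent route is via LP duality: one may directly construct a dual polynomial with one-sided error by verifying that a suitable Chebyshev-based Nisan-Szegedy dual witness automatically satisfies the sign condition $\phi((-1)^m) \leq 0$ at the unique input in $\AND_m^{-1}(-1)$, which is the approach taken in the Gavinsky-Sherstov work cited in the statement.
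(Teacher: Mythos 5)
Your argument is correct and takes the same route the paper indicates: the paper simply states, citing \cite{NS94} and \cite{gavinskysherstov}, that Nisan--Szegedy's symmetrization argument extends to one-sided error and that any dual witness for $\adeg(\AND_m) = \Omega(\sqrt{m})$ automatically satisfies the sign constraint, which are exactly the primal and dual routes you sketch. Your central observation --- that extrapolation from $[0, m-1]$ already recovers the missing two-sided bound $|q(m)| = O(1)$ when $d = O(\sqrt{m})$, so the subsequent Markov step goes through exactly as in the two-sided case --- is the right one, though the tool you invoke for that step is Chebyshev's growth bound outside $[-1,1]$ (the extremal property of $T_d$), not Bernstein's inequality.
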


Fact \ref{fact:odegand} can be seen by observing that Nisan and Szegedy's proof that $\adeg(\AND_m) = \Omega(\sqrt{m})$ in fact extends to one-sided approximate degree \cite{NS94}. Alternatively, it can be directly shown that any dual witness (as defined in Theorem \ref{thm:prelim}) for the fact that $\adeg(\AND_m)=\Omega(\sqrt{m})$ must have one-sided error (cf. \cite[Theorem 5.1]{gavinskysherstov}).

\subsubsection{Approximate Weight}
We define the \emph{degree-$d$ $\eps$-approximate weight} of $f$, $W_\eps(f, d)$, to be the minimum weight of a degree-$d$ polynomial that approximates $f$ pointwise to error $\eps$. Recall that the weight of a polynomial $p$ is the $L_1$ norm of its coefficients.
If $\adeg_\eps(f) > d$, we define $W_\eps(f, d) = \infty$.

For a fixed error parameter $\eps$ and degree $d$, the degree-$d$ $\eps$-approximate weight of a function $f$ is captured by the following optimization problem.

\[ \boxed{\begin{array}{lll} 
    \text{min}  &     \sum_{|S| \le d} |c_S|    \\
    \mbox{such that} &\Big|f(x) - \sum_{|S| \leq d} c_S \chi_S(x)\Big| \leq \eps & \text{ for each } x \in \{-1, 1\}^n\\
    &c_S \in \mathbb{R} & \text{ for each } |S| \leq d\\
    \end{array}}
\]

A standard substitution of each term $|c_S|$ in the objective with an auxiliary non-negative variable $c_S'$, as well as the addition of the constraints $c_S \le c_S'$ and $-c_S \le c_S'$ shows that this is in fact a linear program. The dual LP is as follows.

\[ \boxed{\begin{array}{lll} 
    \text{max} &    \sum_{x \in \{-1, 1\}^n} \phi(x) f(x) -\eps \sum_{x \in \{-1, 1\}^n} |\phi(x)| \\
    \mbox{such that}  & \left|\sum_{x \in \{-1, 1\}^n} \phi(x) \chi_S(x) \right| \le 1  & \text{ for each } |S| \leq d\\
    &\phi(x) \in \mathbb{R} & \text{ for each } x \in \{-1, 1\}^n
    \end{array}}
\]

We thus obtain the following duality theorem.

\begin{theorem} \label{thm:wprelim}
Let $f: \{-1, 1\}^n \to \{-1, 1\}$ be a Boolean function. Then $W_\eps(f, d) > w$ if and only if there is a polynomial $\phi: \{-1, 1\}^n \to \R$ such that
\begin{equation} \label{eq:wprelim0} \sum_{x \in \{-1, 1\}^n} f(x) \phi(x) - \eps\sum_{x \in \{-1, 1\}^n} |\phi(x)| > w, \end{equation}
\begin{equation} \label{eq:wprelim1} \left|\sum_{x \in \{-1, 1\}^n} \phi(x) \chi_S(x)\right| \le 1   \text{ for each } |S| \leq d.\end{equation}
\end{theorem}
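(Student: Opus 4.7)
The plan is to derive Theorem \ref{thm:wprelim} directly from strong LP duality, in complete analogy with the derivations of Theorems \ref{thm:prelim} and \ref{thm:oprelim} outlined earlier in the section. The main task is to carefully set up the primal LP in standard form, take its dual, and then simplify the resulting dual to the form in the statement.

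First I would write down the primal LP whose optimum equals $W_\eps(f,d)$. To linearize the objective $\sum_{|S|\le d}|c_S|$, I introduce auxiliary variables $c_S' \in \R$ for each $|S|\le d$ together with the constraints $c_S \le c_S'$ and $-c_S \le c_S'$ (so that at optimum $c_S' = |c_S|$), and replace the pointwise constraint $|f(x) - \sum_S c_S\chi_S(x)| \le \eps$ by the two linear inequalities $f(x) - \sum_S c_S\chi_S(x) \le \eps$ and $-f(x) + \sum_S c_S\chi_S(x) \le \eps$. This gives a genuine LP in variables $(c_S, c_S')$ whose optimum value is exactly $W_\eps(f,d)$, provided $\adeg_\eps(f) \le d$ (otherwise the primal is infeasible and $W_\eps(f,d)=\infty$).

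Next I would compute the dual. Attach non-negative dual variables $\phi^+(x), \phi^-(x)$ to the two approximation constraints at each input $x$, and $\mu_S^+, \mu_S^-$ to the two linearization constraints for each $S$. Applying standard LP duality rules, the dual maximizes $\sum_x f(x)(\phi^+(x)-\phi^-(x)) - \eps \sum_x (\phi^+(x)+\phi^-(x))$ subject to the stationarity conditions $\sum_x \chi_S(x)(\phi^+(x)-\phi^-(x)) = \mu_S^+ - \mu_S^-$ (coming from the free variable $c_S$) and $\mu_S^+ + \mu_S^- = 1$ (coming from $c_S'$), along with non-negativity of all dual variables. I now set $\phi(x) := \phi^+(x) - \phi^-(x)$. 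Since $\phi^+(x)+\phi^-(x) \ge |\phi(x)|$ with equality achievable by choosing $\phi^+(x)=\max(\phi(x),0)$ and $\phi^-(x)=\max(-\phi(x),0)$, and since the objective rewards making this sum small, the dual reduces without loss of generality to maximizing $\sum_x f(x)\phi(x) - \eps \sum_x |\phi(x)|$ over $\phi \in \R^{\{-1,1\}^n}$ subject to $|\sum_x \chi_S(x)\phi(x)| = |\mu_S^+ - \mu_S^-| \le \mu_S^+ + \mu_S^- = 1$ for each $|S| \le d$, which is exactly the dual LP in the theorem.

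Finally, I would invoke strong LP duality to conclude that the primal and dual optima coincide, which yields the ``if and only if'' statement: $W_\eps(f,d) > w$ precisely when some feasible dual $\phi$ attains objective value greater than $w$. The boundary case $\adeg_\eps(f) > d$ needs to be handled separately: there the primal is infeasible and $W_\eps(f,d)=\infty$, while correspondingly the dual is unbounded, because any Theorem \ref{thm:prelim} dual witness $\psi$ for $\adeg_\eps(f)>d$ satisfies $\sum_x \chi_S(x)\psi(x)=0$ (hence trivially $|\cdot|\le 1$) and $\sum_x f(x)\psi(x) - \eps \sum_x |\psi(x)| > 0$, so scaling $\psi$ by arbitrarily large positive constants drives the dual objective to $+\infty$. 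The main ``obstacle'' is purely mechanical bookkeeping in translating between the standard-form LP and the compact form with $\phi$ and $|\phi|$; no new ideas beyond strong duality are required, which is why this serves merely as a preliminaries lemma.
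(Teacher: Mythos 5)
Your derivation is correct and follows exactly the route the paper intends (and sketches just before the theorem): linearize $\sum_S|c_S|$ with auxiliary variables, take the LP dual, collapse $(\phi^+,\phi^-,\mu_S^+,\mu_S^-)$ to a single $\phi$, and invoke strong duality. The paper leaves the mechanical details (including the case $\adeg_\eps(f)>d$, which you handle correctly by scaling a Theorem~\ref{thm:prelim} witness) unstated, so your proposal is a faithful filling-in of the same argument rather than a different one.
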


\subsubsection{One-Sided Non-Constant Approximate Weight}
To derive our new lower bound on the degree-$d$ threshold weight of read-once DNFs (Corollary \ref{cor:colt}), we need the following technical variation on approximate weight. Given a polynomial $p(x) = \sum_S c_S\chi_S(x)$, define the \emph{non-constant weight} of $p$ to be the $L_1$ norm of its coefficients excluding the constant term, i.e., $\sum_{S \ne \emptyset} |c_S|$. We then define the \emph{degree-$d$ one-sided non-constant $\eps$-approximate weight} of $f$, denoted by $W^*_\eps(f, d)$ to be the minimum non-constant weight of an $\eps$-one-sided approximation to $f$. Linear programming duality
yields the following characterization of $W^*_\eps(f, d)$.

\begin{theorem} \label{thm:owprelim}
Let $f: \{-1, 1\}^n \to \{-1, 1\}$ be a Boolean function. Then $W^*_\eps(f, d) > w$ if and only if there is a polynomial $\phi: \{-1, 1\}^n \to \R$ such that
\begin{equation} \label{eq:owprelim0} \sum_{x \in \{-1, 1\}^n} f(x) \phi(x) - \eps\sum_{x \in \{-1, 1\}^n} |\phi(x)| > w, \end{equation}
\begin{equation} \label{eq:owprelim1} \left|\sum_{x \in \{-1, 1\}^n} \phi(x) \chi_S(x)\right| \le 1   \text{ for each } 0 < |S| \leq d,\end{equation}
\begin{equation} \label{eq:owprelim2} \sum_{x \in \{-1, 1\}^n} \phi(x) = 0, \end{equation}
\begin{equation} \label{eq:owprelim3} \phi(x) \le 0 \text{ for each } x \in f^{-1}(-1). \end{equation}
\end{theorem}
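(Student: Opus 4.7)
The plan is to derive Theorem \ref{thm:owprelim} as a direct application of strong LP duality to the primal optimization problem defining $W^*_\eps(f,d)$, following the same template already used to establish Theorems \ref{thm:prelim}, \ref{thm:oprelim}, and \ref{thm:wprelim}, but with one subtle twist to accommodate the exclusion of the constant coefficient from the objective.

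First I would write down the primal LP. A feasible polynomial is $p(x) = \sum_{|S| \le d} c_S \chi_S(x)$, with the one-sided constraints $|p(x) - 1| \le \eps$ for $x \in f^{-1}(1)$ and $p(x) \le -1 + \eps$ for $x \in f^{-1}(-1)$, and the objective is $\sum_{0 < |S| \le d} |c_S|$. I would linearize the absolute values in the usual way by introducing auxiliary variables $c_S'$ with constraints $c_S' \ge c_S$ and $c_S' \ge -c_S$ for each $0 < |S| \le d$, replacing the objective with $\sum_{0 < |S| \le d} c_S'$. The decisive point is that the coefficient $c_\emptyset$ receives no auxiliary variable, so it survives in the primal as a free variable that does not appear in the objective.

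Next I would take the dual mechanically. Introducing non-negative dual multipliers $\alpha_S^\pm$ for the linearization constraints, $\beta^\pm(x)$ for the two-sided bounds on $f^{-1}(1)$, and $\gamma(x)$ for the one-sided bound on $f^{-1}(-1)$, I would define a single real-valued function $\phi$ by setting $\phi(x) = \beta^+(x) - \beta^-(x)$ on $f^{-1}(1)$ and $\phi(x) = -\gamma(x)$ on $f^{-1}(-1)$. The dual constraint generated by each free primal variable $c_S$ with $0 < |S| \le d$ combines with $\alpha_S^+ + \alpha_S^- = 1$ and $\alpha_S^\pm \ge 0$ to yield $\bigl|\sum_x \phi(x) \chi_S(x)\bigr| \le 1$, which is (\ref{eq:owprelim1}). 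The dual constraint generated by the free primal variable $c_\emptyset$ has no $\alpha$-terms to absorb, so it collapses to the equality $\sum_x \phi(x) = 0$, which is (\ref{eq:owprelim2}); this is precisely the place where the non-constant nature of the weight leaves its fingerprint in the dual. The sign constraints $\beta^\pm(x), \gamma(x) \ge 0$ translate to $\phi(x) \le 0$ on $f^{-1}(-1)$, giving (\ref{eq:owprelim3}), while placing no sign restriction on $\phi$ over $f^{-1}(1)$. A short calculation simplifies the dual objective: the contribution from $f^{-1}(1)$ rearranges to $\sum_{f(x)=1}\phi(x) - \eps \sum_{f(x)=1}|\phi(x)|$ (using that we may assume $\beta^+(x)\beta^-(x) = 0$ in an optimal dual solution, so $\beta^+(x) + \beta^-(x) = |\phi(x)|$), and the contribution from $f^{-1}(-1)$ becomes $\sum_{f(x)=-1}|\phi(x)| - \eps \sum_{f(x)=-1}|\phi(x)|$, which together amount to $\sum_x f(x)\phi(x) - \eps \sum_x |\phi(x)|$, matching (\ref{eq:owprelim0}).

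Finally I would invoke strong LP duality: the primal is feasible (e.g., by the trivial polynomial $p \equiv 0$ when $\eps \geq 1$, or more generally because $W^*_\eps(f,d)$ is always well-defined once $\eps$-one-sided approximation of degree $d$ exists; the degenerate case where no such approximation exists can be handled separately by treating $W^*_\eps(f,d) = \infty$ and exhibiting an unbounded dual via scaling) and bounded below, so the primal optimum equals the dual supremum, yielding the ``if and only if'' statement. There is no real obstacle here — the only delicate point is ensuring that the constant term is handled correctly so that the equality constraint $\sum_x \phi(x) = 0$ appears, and that the one-sidedness on $f^{-1}(-1)$ produces a sign constraint on $\phi$ rather than an absolute value. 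Both points are inherited immediately from the LP structure without any extra technical work.
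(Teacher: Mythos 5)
Your proposal is correct and follows exactly the route the paper intends: the paper states this theorem as an immediate consequence of strong LP duality (following the template of Theorems~\ref{thm:prelim}, \ref{thm:oprelim}, \ref{thm:wprelim}) without writing out the dualization, and your derivation is precisely that dualization spelled out. You correctly identify the two distinctive features of this particular dual: the constant coefficient $c_\emptyset$ has no auxiliary variable and is absent from the objective, so its stationarity condition is the unrelaxed equality $\sum_x \phi(x) = 0$ rather than a box constraint; and the one-sided primal constraints on $f^{-1}(-1)$ produce a single-signed multiplier, hence $\phi \le 0$ there. One small bookkeeping caveat: with the usual convention in which $\beta^+$ is the multiplier for the upper bound $p(x) \le 1+\eps$ and $\beta^-$ for the lower bound $p(x) \ge 1-\eps$, the stationarity condition for $c_S$ actually forces $\phi(x) = \beta^-(x) - \beta^+(x)$ on $f^{-1}(1)$ (the negative of what you wrote); this sign flip propagates consistently and does not affect anything structural, but you should re-check it when writing out the Lagrangian. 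Your handling of the infeasible primal case (treating $W^*_\eps(f,d)=\infty$ and observing the dual, which is always feasible at $\phi \equiv 0$, becomes unbounded) is the right way to make the ``if and only if'' airtight.
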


\subsubsection{Threshold Degree and Threshold Weight}
We say a polynomial $p(x) = \sum_S c_S \chi_S(x)$ with \emph{integer} coefficients is a polynomial threshold function (PTF) for a Boolean
function $f$
if $p$ sign-represents $f$ at all Boolean inputs, i.e., if $f(x)p(x) > 0$ for all $x \in \{-1, 1\}^n$. The \emph{threshold degree} of $f$, $\tdeg(f)$, is the minimum degree of a PTF for $f$.

The \emph{threshold weight} $W(f)$ is the minimum weight of any PTF for $f$.  Observe that this definition is only meaningful because the coefficients of any PTF for $f$ are required to be integers, as any positive constant multiple of a PTF for $f$ also sign-represents $f$. 
More generally, it is of interest to study the tradeoff between the weight and degree necessary for PTF representations. To
this end, we define the \emph{degree-$d$ threshold weight} $W(f, d)$ to be the minimum weight of a degree-$d$ PTF for $f$. If $\tdeg(f) > d$, define $W(f, d) = \infty$.

While threshold weight is naturally captured by an \emph{integer} program rather than a linear program, it still admits an important dual characterization, obtained by combining results of Freund \cite{freund} and Hajnal et al. \cite{hajnal} (see also \cite{ghr, patternmatrix}).

\begin{theorem} \label{thm:twdual}
Let $f: \bits^n \to \bits$ and fix an integer $d \ge \tdeg(f)$. Then for every probability distribution $\mu$ on $\bits^n$,
\begin{equation} \label{eq:twprelim0} \left|\E_{x \sim \mu} [f(x)\chi_S(x)] \right| \ge \frac{1}{W(f, d)} \text{ for each } |S| \leq d.\end{equation}
Moreover, there exists a distribution $\mu$ for which
\begin{equation} \label{eq:twprelim1}
\left|\E_{x \sim \mu} [f(x)\chi_S(x)] \right| \le \left(\frac{2n}{W(f, d)}\right)^{1/2} \text{ for each } |S| \leq d.
\end{equation}
\end{theorem}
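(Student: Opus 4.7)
The plan is to prove the two inequalities separately. The lower bound \eqref{eq:twprelim0} follows directly from an optimal PTF for $f$, while the upper bound \eqref{eq:twprelim1} reduces via LP duality to a ``small margin implies small PTF weight'' statement, which I would prove by a random-sampling (boosting-style) rounding argument.

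For \eqref{eq:twprelim0}, I would fix a minimum-weight degree-$d$ PTF $p(x)=\sum_{|S|\le d} c_S \chi_S(x)$ for $f$. Since $p$ has integer coefficients and sign-represents $f$, we have $f(x)p(x)\ge 1$ for every $x \in \bits^n$, while $\sum_{|S|\le d}|c_S|=W(f,d)$. Taking expectation over $\mu$ and applying the triangle inequality,
$$1 \le \E_{x\sim\mu}[f(x)p(x)] \le \sum_{|S|\le d} |c_S|\cdot \bigl|\E_{x\sim\mu}[f(x)\chi_S(x)]\bigr| \le W(f,d)\cdot \max_{|S|\le d}\bigl|\E_{x\sim\mu}[f(x)\chi_S(x)]\bigr|,$$
which rearranges to the claimed bound (understood as a bound on the maximum over $|S|\le d$).

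For \eqref{eq:twprelim1}, I would set up the LP that minimizes $\max_{|S|\le d}|\E_\mu[f\chi_S]|$ over probability distributions $\mu$, introducing one auxiliary variable for the maximum and splitting the absolute value into its two signs. Computing its dual and simplifying, strong LP duality yields
$$\min_\mu\, \max_{|S|\le d}\bigl|\E_{x\sim\mu}[f(x)\chi_S(x)]\bigr| \;=\; \max_{q}\; \min_{x\in \bits^n} f(x)q(x),$$
where $q$ ranges over real polynomials of degree at most $d$ with $\|q\|_1\le 1$. Thus it suffices to show that any such $q$ with $f(x)q(x)\ge \gamma$ pointwise satisfies $\gamma\le \sqrt{2n/W(f,d)}$. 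Given $q = \sum_{|S|\le d} c_S \chi_S$ with $\sigma := \|q\|_1 \le 1$, I would treat the signed parities $\{\mathrm{sgn}(c_S)\chi_S\}$ as a probability distribution with weights $|c_S|/\sigma$, draw $t = \lceil 2n/\gamma^2 \rceil$ samples independently, and average. Hoeffding's inequality plus a union bound over the $2^n$ inputs shows that with positive probability this empirical average approximates $q/\sigma$ uniformly to within $\gamma/(2\sigma)$; multiplying through by $t$ then produces an integer-coefficient degree-$d$ polynomial of weight at most $t$ that still sign-represents $f$. If $\gamma > \sqrt{2n/W(f,d)}$, then $t < W(f,d)$, contradicting the optimality of $W(f,d)$.

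The main obstacle I anticipate is tightening the constants in the Hoeffding/union-bound step so as to recover exactly the factor $\sqrt{2n/W(f,d)}$ rather than a slightly larger constant multiple of it, which requires careful bookkeeping of the sample size $t$ and the failure probability. A secondary subtlety is cleanly verifying that the specific LP I write down is in a form where strong duality applies (e.g., by checking primal feasibility and boundedness, both of which follow from $d \ge \tdeg(f)$). Both steps are standard but benefit from care with sign conventions and integer rounding.
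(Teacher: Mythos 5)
The paper does not prove Theorem~\ref{thm:twdual}; it cites it as a combination of results of Freund and Hajnal et al.\ (see also Goldmann--H\aa stad--Razborov and Sherstov's pattern-matrix paper), so there is no in-paper proof to compare against. That said, your proposed argument is the standard one in those references and is essentially correct. Two remarks. First, you correctly flag that~\eqref{eq:twprelim0} is really a statement about $\max_{|S|\le d}|\E_\mu[f\chi_S]|$ rather than ``for each $|S| \le d$''; this is how the theorem is actually invoked in Appendix~\ref{app:threshold-weight}, where the authors exhibit a distribution $\mu'$ making \emph{all} these correlations small and then conclude $W(F')$ is large, so your reading is the intended one. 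Second, your worry about losing the constant $2n$ in the Hoeffding/union-bound step is not a genuine obstacle: you only need the sign of the rounded polynomial to be correct, which is a one-sided event, and the one-sided Hoeffding inequality $\Pr[\overline{X}-\mu\le-\gamma]\le e^{-t\gamma^2/2}$ for $\pm1$-valued samples gives failure probability below $2^{-n}$ already at $t = \lceil 2n\ln 2/\gamma^2\rceil \le 2n/\gamma^2$ (for $n\ge 2$), so the integer-coefficient polynomial $\sum_{i=1}^{t} Y_i$ has weight at most $t \le 2n/\gamma^2$ and the bound follows. Your instinct to use a two-sided deviation of $\gamma/2$ is what produced the unnecessary slack. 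The LP-duality step is also fine as stated: with the absolute value split into $(S,\pm)$ pure strategies, the primal and dual are both feasible and bounded over compact convex strategy sets, so von Neumann / strong LP duality applies without further conditions; $d\ge\tdeg(f)$ is only needed to make $W(f,d)<\infty$ (equivalently $\gamma^*>0$), not to justify duality itself.
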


\subsection{Relating Degree-$d$ Threshold Weight to High-Error Approximations}
\label{sec:weightrelations}
In this paper, we will often need to translate
lower bounds on $\adeg_{\eps}(f)$ for some function $f$ with $\eps$
very close to 1
into lower bounds on the degree-$d$ threshold weight of $f$.
This is possible because degree-$d$ PTFs of weight $w$ are closely related to degree-$d$ pointwise approximations with error $1-1/w$. In fact, these notions are essentially equivalent when $w \ge {n \choose d}$ \cite{patternmatrix}. The relationships we will need are formalized
in the following lemma.

\begin{lemma}\label{lem:weight-relations}
Let $f: \bits^n \to \bits$ be a Boolean function, and let $w > 0$. Then (1) $\Rightarrow$ (2) $\Rightarrow$ (3).
\begin{enumerate}[(1)]
\item $\adeg_{1 - \frac{1}{w}}(f) > d$.
\item $W_{1 - \frac{1}{w}}(f, d) > 1$.
\item $W(f, d) > w$.
\end{enumerate}
\end{lemma}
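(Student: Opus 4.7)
The proof proposal is that both implications follow from elementary arguments, with the only non-trivial step being the scaling argument from (2) to (3).

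For the implication (1) $\Rightarrow$ (2), I would simply invoke the definitional convention stated in Section \ref{sec:approxdegree}: whenever $\adeg_\eps(f) > d$, the quantity $W_\eps(f, d)$ is taken to be $\infty$. So the hypothesis $\adeg_{1-1/w}(f) > d$ immediately yields $W_{1-1/w}(f, d) = \infty > 1$. Alternatively, this can be seen directly: if no degree-$d$ polynomial at all can $(1-1/w)$-approximate $f$ pointwise, then certainly no degree-$d$ polynomial of weight $\le 1$ can.

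For the implication (2) $\Rightarrow$ (3), I would argue the contrapositive by a rescaling trick. Assume $W(f, d) \le w$, and let $p(x) = \sum_{|S| \le d} c_S \chi_S(x)$ be a degree-$d$ PTF for $f$ with integer coefficients and $\sum_{|S| \le d} |c_S| \le w$. Because $p$ has integer coefficients and sign-represents $f$, we have $f(x) p(x) \ge 1$ for every $x \in \bits^n$. Now consider the rescaled polynomial $q := p/w$. Two properties of $q$ are immediate: (a) its weight satisfies $\sum_{|S| \le d} |c_S/w| \le 1$, which also gives the pointwise bound $|q(x)| \le 1$ on $\bits^n$, and (b) $f(x) q(x) \ge 1/w$ for all $x$.

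To finish, I would do a one-line case analysis: if $f(x) = 1$, then $1/w \le q(x) \le 1$, so $|q(x) - f(x)| \le 1 - 1/w$; if $f(x) = -1$, then $-1 \le q(x) \le -1/w$, so again $|q(x) - f(x)| \le 1 - 1/w$. Hence $q$ is a degree-$d$ polynomial of weight at most $1$ with $\|q - f\|_\infty \le 1 - 1/w$, contradicting the assumption that $W_{1-1/w}(f,d) > 1$. There is no real obstacle here; the only point to emphasize is the crucial role played by the integrality of PTF coefficients, which is what forces $|p(x)| \ge 1$ and makes the rescaling quantitatively tight.
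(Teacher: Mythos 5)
Your proof is correct and matches the paper's argument essentially step for step: (1) $\Rightarrow$ (2) is the definitional convention $W_\eps(f,d) = \infty$ when $\adeg_\eps(f) > d$, and (2) $\Rightarrow$ (3) is proved in the contrapositive by taking a PTF of weight at most $w$, using integrality to get $|p(x)| \ge 1$, and rescaling by $1/w$ to obtain a weight-one $(1-1/w)$-approximation. The only difference is that you spell out the sign cases explicitly, where the paper compresses this into a single sentence.
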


\eat{\begin{theorem}[\cite{patternmatrix}, Theorem 2.5]
Let $f: \{-1, 1\}^n \to \{-1, 1\}$ be a Boolean function. Denote by $E(f, d)$ the minimum value of $\|p -f\|_\infty$ over real polynomials $p$ of degree $d$. Then for each $d$,
\[\frac{1}{1-E(f, d)} \le W(f, d) \le \frac{2}{1 - E(f, d)}\left({n \choose 0} + \dots + {n \choose d}\right)^{3/2}.\]
\end{theorem}}

Lemma \ref{lem:weight-relations} implies that a PTF of degree $d$ and weight $w$ can be transformed into $(1-1/w)$-approximation of degree $d$. Indeed, the proof will go by way of such a transformation.

\begin{proof}
Clearly (1) implies (2), since $W_{1 - \frac{1}{w}}(f, d) = \infty$ when $\adeg_{1 - \frac{1}{w}}(f) > d$. To show that (2) implies (3), 
 suppose  there is a PTF $p$ for $f$ having weight $w$ and degree $d$. Since $p$ has integer coefficients and is nonzero on Boolean inputs, $|p(x)| \ge 1$ on $\bits^n$. Moreover, $|p(x)| \le w$ by the weight bound, so the polynomial $\frac{1}{w}p(x)$ is a $(1 - \frac{1}{w})$-approximation to $f$ with weight $1$.
\end{proof}

\medskip
\noindent \textbf{Remark:} We stress that the converse of 
Lemma \ref{lem:weight-relations} fails badly
when $w \ll {n \choose d}$. For example,
we show in Corollary \ref{cor:colt}
that for any $d > 0$ there exists a read-once DNF
$F$ satisfying $W(F, d) \geq \exp\left(\sqrt{n/d}\right)$.
In particular, this yields an exponential lower bound
on the degree-$d$ threshold weight of $F$ for any $d=n^{1-\delta}$, with $\delta > 0$ a constant.
Yet it follows from a result of Sherstov \cite{sherstovrobust}
that $\adeg_{1/3}(F) = O(n^{1/2})$ for any read-once DNF $F$.

\eat{Conversely, for $W \ge {n \choose d}$, a $(1-1/W)$-approximation of degree $d$ can be turned into a PTF of degree $d$ and weight $\poly(W)$.

This theorem essentially optimal, in the sense that when $W \ll {n \choose d}$, all degree-$d$ PTFs for $f$ may have weight much larger than $W$,
even though there is a degree-$d$ polynomial that approximates $f$ to high accuracy. For example, $\widetilde{\deg}(\omb) = \tilde{O}(\sqrt{n})$ i.e.
there is a degree $\tilde{O}(\sqrt{n})$ polynomial that approximates \omb\ to constant accuracy, but 
a lower bound of Servedio, Tan, and Thaler \cite{thalercolt} shows that any degree $d=\sqrt{n}$ PTF for $\omb$ requires weight $2^{\Omega(\sqrt{n/d})} = 2^{\Omega(n^{1/4})}$.}

\section{Hardness Amplification for Approximate Degree}
\label{sec:highlow}
In this section, we show how to generically transform a circuit $f$ with one-sided approximate degree $d$ into
a circuit $F$ with $\eps$-approximate degree $d$ for $\eps=1-2^{-t}$. That is, while $f$ cannot be approximated to error 1/2 by
degree $d$ polynomials, $F$ cannot even be approximated to error $1-2^{-t}$ by polynomials of the same degree.

\begin{customthm}{\ref{thm:big}}
Let $f: \{-1, 1\}^m \rightarrow \{-1, 1\}$ be a function with $\odeg_{1/2}(f) > d$.
Let $F: \{-1, 1\}^{mt} \to \{-1, 1\}$ denote the function $\OR_{t}(f, \dots, f)$. Then $\adeg_{1-2^{-t}}(F) > d$.
\end{customthm}

We remark that it is necessary that the \emph{one-sided} approximated degree of $f$ is large,
rather than that just the approximate degree of $f$ is large.
Theorem \ref{thm:big} is easily seen to be false with one-sided approximate degree replaced by approximate degree.
Consider for example the case where $f=\OR_m$. Then $F=\OR_t(\OR_{m}, \dots, \OR_m) = \OR_{mt}$.
Since $\adeg(\OR_m) = \Omega(\sqrt{m})$, 
applying Theorem \ref{thm:big} with $\adeg$ in place of $\odeg$ would say that $\adeg_{1-2^{-t}}(\OR_{mt}) = \Omega(\sqrt{mt})$.
Yet the polynomial $q(y) = \frac{1}{mt}(1/2 - \sum_{i = 1}^t \sum_{j=1}^m y_{ij})$ demonstrates that $\adeg_{1-\frac{1}{2mt}}(\OR_{mt}) = 1$
for all values of $t$.
However, Theorem \ref{thm:big} does not apply because the one-sided approximate degree of $f=\OR_m$ is constant.

\begin{proof} 
Let $\psi$ be a dual polynomial for $f$ with one-sided error whose existence is guaranteed by the assumption that $\odeg_{1/2}(f) > d$. 
By \thmref{thm:oprelim}, $\psi$ satisfies:

\begin{equation} \label{eq:oproof0} \sum_{x \in \{-1, 1\}^m} \psi(x)f(x) > 1/2, \end{equation}
\begin{equation} \label{eq:oproof1} \sum_{x \in \{-1, 1\}^m} |\psi(x)| = 1,\end{equation} 
\begin{equation} \label{eq:oproof2} \sum_{x \in \{-1, 1\}^m} \psi(x) \chi_S(x)=0   \text{ for each } |S| \leq d \text{ and } \end{equation}
\begin{equation} \label{eq:oproof3} \psi(x) \le 0   \text{ for each } x \in f^{-1}(-1).\end{equation}

We will construct a dual solution $\zeta$ that witnesses the fact that $\adeg_{1-2^{-t}}(F) > d$. Specifically, $\zeta$ must satisfy the
three conditions of \thmref{thm:prelim}:

\begin{equation} \label{eq:show1} \sum_{(x_1, \dots, x_{t}) \in \left(\{-1, 1\}^{m}\right)^t} \zeta(x_1, \dots, x_{t}) F(x_1, \dots, x_{t}) > 1-2^{-t}.\end{equation}
\begin{equation} \label{eq:show2} \sum_{(x_1, \dots, x_{t}) \in \left(\{-1, 1\}^{m}\right)^t}  |\zeta(x_1, \dots, x_{t})| = 1. \end{equation}
\begin{equation} \label{eq:show3}  \sum_{(x_1, \dots, x_{t}) \in \left(\{-1, 1\}^{m}\right)^t}  \zeta(x_1, \dots, x_{t})\chi_S(x_1, \dots, x_t) =0   \text{ for each } |S| \leq d.\end{equation}

The construction of $\zeta$ is as follows. Let $\mathbf{1}$ denote the all-ones vector.
Let $\Psi: \{-1, 1\}^t \rightarrow \{-1, 1\}$ be defined such that $\Psi(\mathbf{1})=1/2$, $\Psi(-\mathbf{1})=-1/2$, and $\Psi(x)=0$ for all other $x$. 
Notice that 

\begin{equation}  \sum_{(x_1, \dots, x_{t}) \in \{-1, 1\}^t} \Psi(x_1, \dots, x_{t}) = 0 \label{eq:balanced} \end{equation}

We define $\zeta: \left(\{-1, 1\}^{m}\right)^{t} \rightarrow \mathbb{R}$ by
\begin{equation} \label{eq:zeta} \zeta(x_1, \dots, x_{t}) := 2^t \Psi(\dots, \sgn(\psi(x_i)), \dots) \prod_{i=1}^{t} |\psi(x_i)|,\end{equation}
where $x_i = (x_{i, 1}, \dots, x_{i, m})$.

\eqref{eq:zeta} combines dual functions $\Psi$ and $\psi$ to obtain a dual witness $\zeta$ in exactly the same manner
as in the works of Sherstov \cite[Theorem 3.3]{sherstovFOCS} and Lee \cite{lee}.
The analysis 
in these works implies without modification that $\zeta$ satisfies Equations \eqref{eq:show2} and \eqref{eq:show3}. That is, these works show

\begin{claim} \label{claim:show2}
\[\sum_{(x_1, \dots, x_{t}) \in \left(\{-1, 1\}^{m}\right)^t}  |\zeta(x_1, \dots, x_{t})| = 1.\]
\end{claim}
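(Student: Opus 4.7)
The plan is to show that $|\zeta|$ factorizes across blocks once we account for the sparse support of $\Psi$, and then use the constraints that $\psi$ satisfies to conclude.

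First I would observe that $\Psi$ is supported only on the two sign patterns $\pm\mathbf{1}$. Consequently, in the sum $\sum_{(x_1,\dots,x_t)} |\zeta(x_1,\dots,x_t)|$, the only surviving contributions come from tuples $(x_1, \dots, x_t)$ for which the signs $\sgn(\psi(x_i))$ are either all $+1$ or all $-1$. (If any $\psi(x_i) = 0$, then $\prod_i |\psi(x_i)| = 0$, so we may harmlessly ignore such tuples.) On those surviving tuples, $|\Psi(\dots, \sgn(\psi(x_i)), \dots)| = 1/2$, so
\[
\sum_{(x_1,\dots,x_t)} |\zeta(x_1,\dots,x_t)| \;=\; 2^t \cdot \tfrac{1}{2} \cdot \left(\sum_{x:\,\psi(x)>0}\psi(x)\right)^{\!t} \;+\; 2^t \cdot \tfrac{1}{2} \cdot \left(\sum_{x:\,\psi(x)<0}|\psi(x)|\right)^{\!t},
\]
where I used that the product over $i$ factorizes into identical one-variable sums once the sign pattern is fixed.

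Next I would evaluate the two inner sums using the properties of $\psi$. Let $A = \sum_{x:\psi(x)>0}\psi(x)$ and $B = \sum_{x:\psi(x)<0}|\psi(x)|$. The $L_1$-normalization \eqref{eq:oproof1} gives $A + B = 1$. The pure high degree condition \eqref{eq:oproof2} applied with $S = \emptyset$ (which is allowed since $d \geq 0$) gives $\sum_x \psi(x) = 0$, i.e., $A - B = 0$. Hence $A = B = 1/2$.

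Substituting these values back yields
\[
\sum_{(x_1,\dots,x_t)} |\zeta(x_1,\dots,x_t)| \;=\; 2^t \cdot \tfrac{1}{2} \cdot \left(\tfrac{1}{2}\right)^{\!t} \;+\; 2^t \cdot \tfrac{1}{2} \cdot \left(\tfrac{1}{2}\right)^{\!t} \;=\; 1,
\]
establishing the claim. There is no real obstacle here: the only delicate point is making sure the $S=\emptyset$ case of the pure high degree condition is invoked to pin down $A = B = 1/2$, and that tuples with some $\psi(x_i)=0$ drop out because of the product factor.
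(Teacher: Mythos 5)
Your proof is correct and takes essentially the same approach as the paper: both reduce the claim to the two facts that $\psi$ has $\ell_1$-norm $1$ and mean zero (the $S=\emptyset$ case of pure high degree), so that the positive and negative masses $A$ and $B$ each equal $1/2$. The paper phrases this probabilistically by observing that $(\sgn(\psi(x_1)),\dots,\sgn(\psi(x_t)))$ is uniformly distributed under the product distribution $\mu(x_1,\dots,x_t)=\prod_i|\psi(x_i)|$, whereas you unpack the same computation explicitly by factoring the sum over the two surviving sign patterns; the underlying argument is identical.
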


\begin{claim} \label{claim:show3}
\[\sum_{(x_1, \dots, x_{t}) \in \left(\{-1, 1\}^{m}\right)^t}  \zeta(x_1, \dots, x_{t})\chi_S(x_1, \dots, x_t) =0   \text{ for each } |S| \leq d.\]
\end{claim}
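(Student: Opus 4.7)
The plan is to expand $\Psi$ in its Fourier series over the hypercube $\{-1,1\}^t$, which will cause the sum $\sum_{\bx} \zeta(\bx)\chi_S(\bx)$ to factor into a product of block-wise sums that can each be shown to vanish. First, I would decompose $S \subseteq [mt]$ as $S = S_1 \sqcup \cdots \sqcup S_t$, where $S_i$ is the restriction of $S$ to the coordinates of the $i$-th block of $m$ variables. Then $\chi_S(\bx) = \prod_{i=1}^t \chi_{S_i}(x_i)$, and crucially $|S_i| \le |S| \le d$ for every $i$.

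Next, I would use the Fourier expansion $\Psi(y) = \sum_{T \subseteq [t]} \hat{\Psi}(T)\,\chi_T(y)$, where $\hat{\Psi}(T) = 2^{-t}\sum_y \Psi(y)\chi_T(y)$, and substitute $y_i = \sgn(\psi(x_i))$. Using the elementary identity $\sgn(\psi(x_i))\cdot|\psi(x_i)| = \psi(x_i)$, the definition \eqref{eq:zeta} rewrites as
\[
\zeta(\bx) \;=\; 2^t \sum_{T \subseteq [t]} \hat{\Psi}(T) \prod_{i \in T} \psi(x_i) \prod_{i \notin T} |\psi(x_i)|.
\]
Multiplying by $\chi_S(\bx)$ and summing over $\bx = (x_1,\ldots,x_t)$ yields
\[
\sum_{\bx} \zeta(\bx)\chi_S(\bx) \;=\; 2^t \sum_{T \subseteq [t]} \hat{\Psi}(T) \prod_{i \in T} \Bigl[\sum_{x_i} \psi(x_i)\chi_{S_i}(x_i)\Bigr] \prod_{i \notin T} \Bigl[\sum_{x_i} |\psi(x_i)|\chi_{S_i}(x_i)\Bigr].
\]

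Finally, I would argue that every term in the outer sum vanishes. For $T = \emptyset$, the coefficient $\hat{\Psi}(\emptyset) = 2^{-t}\sum_y \Psi(y)$ equals $0$ by the balanced condition \eqref{eq:balanced}. For any nonempty $T$, pick any $i \in T$: the bracketed factor $\sum_{x_i}\psi(x_i)\chi_{S_i}(x_i)$ is zero by the pure-high-degree property \eqref{eq:oproof2} of $\psi$, since $|S_i| \le d$. Hence the entire sum equals $0$, establishing Claim \ref{claim:show3}. I expect no real obstacle here: this is precisely the combining argument of Sherstov \cite{sherstovFOCS} and Lee \cite{lee}, and the only two features of the dual objects required — namely, that $\Psi$ has mean zero on $\{-1,1\}^t$ and that $\psi$ has pure high degree at least $d$ — are exactly the properties guaranteed by \eqref{eq:balanced} and \eqref{eq:oproof2}, respectively.
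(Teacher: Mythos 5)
Your proof is correct and follows essentially the same approach as the paper's: Fourier-expand $\Psi$, use $\sgn(\psi(x_i))\,|\psi(x_i)| = \psi(x_i)$ to factor $\zeta\cdot\chi_S$ into block-wise products, and kill each term via the pure-high-degree property of $\psi$. The only cosmetic difference is that the paper writes $\Psi$'s Fourier expansion over nonempty $T$ from the outset (since $\Psi$ is orthogonal to constants), whereas you include $T=\emptyset$ and then note $\hat\Psi(\emptyset)=0$ via \eqref{eq:balanced} — equivalent, and arguably slightly more explicit.
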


We provide
this analysis in Appendix \ref{app:proof1} for completeness, and here focus on arguing that (\ref{eq:show1}) holds. 
As we remarked earlier, the properties we exploit to show this are 
(1) that $\psi$ has one-sided error and (2) that the the vector
$-\mathbf{1}$ has Hamming distance $t$
from the (unique) input in $\OR_t^{-1}(1)$.

We now prove that (\ref{eq:show1}) holds. Let $\mu$ be the distribution on $\left(\{-1, 1\}^{m}\right)^t$ given by $\mu(x_1, \dots, x_t) = \prod_{i=1}^{t} |\psi(x_i)|$. 
Since $\psi$ is orthogonal to the constant polynomial, it has expected value 0, and hence the string $(\dots, \sgn(\psi(x_i)), \dots)$ is distributed uniformly in $\{-1, 1\}^{t}$
when one samples $(x_1, \dots, x_{t})$ according to $\mu$. 
 Observe that
$$ \sum_{(x_1, \dots, x_{t}) \in \left(\{-1, 1\}^{m}\right)^t} \zeta(x_1, \dots, x_{t}) F(x_1, \dots, x_{t})$$
$$= 2^{t} \mathbf{E}_\mu [\Psi( \dots, \sgn(\psi(x_i)), \dots) \OR_t\left( \dots, f(x_i), \dots\right)]$$
\begin{equation} \label{eq1andor} = \sum_{z \in \{-1, 1\}^{t}} \Psi(z) \left(\sum_{(x_1, \dots, x_{t}) \in \left(\{-1, 1\}^m\right)^t} \OR_t\left( \dots,  f(x_i), \dots\right) \mu(x_1, \dots, x_{t}|z)\right),\end{equation}
where $\mu(\mathbf{x}|z)$ denotes the probability of $\mathbf{x}$ under $\mu$, conditioned on $(\dots, \sgn(\psi(x_i)), \dots)=z$.

Let $A_{1} = \{x \in \{-1, 1\}^{m}: \psi(x) > 0, f(x) = -1\}$ and $A_{-1} = \{x \in \{-1, 1\}^{m}: \psi(x) < 0, f(x) = 1\}$, so
$A_1 \cup A_{-1}$ is the set of all inputs $x$ where the sign of $\psi(x)$ disagrees with $f(x)$. Notice that 
$\sum_{x \in A_1 \cup A_{-1}} |\psi(x)| < 1/4$ because $\psi$ has correlation $1/2$ with $f$. 

Let $\lambda$ be the distribution on $\{-1, 1\}^m$ defined by $\lambda(x) = |\psi(x)|$. Then for any bit $b$,
\[\Pr_{x \sim \lambda} [f(x) \ne \sgn(\psi(x)) | \sgn(\psi(x)) = b] = 2\sum_{x \in A_b} |\psi(x)|.\]
Therefore, as noted in \cite{sherstovFOCS}, for any given $z \in \{-1, 1\}^{t}$, the following two random variables are identically distributed:

\begin{itemize}
\item The string $(\dots, f(x_i), \dots)$ when one chooses $(\dots, x_i, \dots)$ from the conditional distribution 
$\mu(\cdot|z)$. 
\item The string $(\dots, y_iz_i, \dots)$, where $y \in \{-1, 1\}^{t}$ is a random string whose $i$th bit independently
takes on value $-1$ with probability $2 \sum_{x \in A_{z_i}} |\psi(x)| < 1/2$. 
\end{itemize}

Thus, Expression (\ref{eq1andor}) equals

\begin{equation} \label{eq2andor} \sum_{z \in \{-1, 1\}^{t}} \Psi(z) \cdot \mathbf{E}[\OR_t(\dots, y_iz_i, \dots)],\end{equation}

where $y \in \{-1, 1\}^{t}$ is a random string whose $i$th bit independently
takes on value $-1$ with probability $2 \sum_{x \in A_{z_i}} |\psi(x)| < 1/2$.
We first argue that the term corresponding to $z=\mathbf{1}$ contributes $\Psi(z) = 1/2$ to Expression (\ref{eq2andor}).
By \eqref{eq:oproof3}, if $f(x) = -1$, then $\psi(x) \le 0$. This implies that $A_{1}$ is empty; that is, if $\sgn(\psi(x))=1$,
then it must be the case that $f(x)=1$. Therefore, for $z=\mathbf{1}$, the $y_i$'s are all $1$ with probability 1, and hence $\mathbf{E}_y[\OR_t\left(\dots, y_iz_i, \dots\right)] = \OR_t\left(\mathbf{1}\right) = 1$.
Thus the term corresponding to $z=\mathbf{1}$ contributes $\Psi(z)\OR_t(z)=1/2$ to Expression (\ref{eq2andor}) as claimed.

All $z \not\in \{\mathbf{1}, \mathbf{-1}\}$ are given zero weight by $\Psi$ and hence contribute nothing to the sum. 
All that remains is to show that the contribution of the term $z=-\mathbf{1}$ to the sum is $\frac{1}{2} (1- 2^{-t})$. 
Since each $y_i=1$ independently with probability at least $1/2$, and $\OR_t(\dots, -y_i, \dots)=-1$ as long as there is at least one $y_i \neq -1$,
we conclude that  $\mathbf{E}[\OR_t(\dots, y_iz_i, \dots)] \geq 1-2^{-t+1}$. It follows that  the term corresponding to
$z=-\mathbf{1}$ contributes at least $\frac{1}{2}(1 - 2^{-t+1})$ to the sum. Thus,
$$\sum_{z \in \{-1, 1\}^{t}} \Psi(z) \cdot \mathbf{E}[\OR_t(\dots, y_iz_i, \dots)] \geq \frac{1}{2} + \frac{1}{2} (1-2^{-t+1}) = 1-2^{-t}.$$ 
This completes the proof.
\end{proof}

\noindent \textbf{Remark:} Since the set $A_1$ within the proof of Theorem \ref{thm:big} is empty, the ``combined'' dual witness $\zeta$ constructed in the proof
in fact has one-sided error. Thus, the proof establishes
that $\odeg_{1-2^{-t}}(F) > d$, which is a stronger conclusion
than the $\adeg_{1-2^{-t}}(F) > d$ bound appearing in the theorem statement. 
We chose to state Theorem \ref{thm:big} as an approximate degree
lower bound, rather than as a one-sided approximate degree lower bound,
for easier comparison with prior work on approximate degree.

\section{Lower Bounds for \acz} \label{sec:aczlb}

In this section, we establish a new lower bound on the one-sided approximate degree of \acz. 
Combining this lower bound with Theorem \ref{thm:big}, we establish new lower bounds on accuracy vs. degree tradeoffs for \acz.
This in turn yields a new upper bound on the discrepancy, and a new lower bound on the threshold weight of \acz.

\subsection{The One-Sided Approximate Degree of Symmetric Properties}
\label{sec:symmetric}

We identify a fairly general criterion under which the one-sided approximate degree of a Boolean function is equal to its approximate degree. This criterion applies to many functions previously studied in the literature, including the $\AND$ function, the \ED\ and {\sc Collision}  functions \cite{aaronsonshi, ambainis}, and the \tto\ function \cite{beame}. Our result applies to Boolean functions corresponding to \emph{symmetric properties}; we refer the reader to Section \ref{sec:intro-odeg-acz} for the relevant notation and definitions.

\thmsymmetric*

\begin{proof}
Suppose $\odeg_\eps(f) = d$. Let $p$ be any degree-$d$ one-sided approximation to $f$ with error $\eps$.
As described in the proof overview in Section \ref{sec:intro-odeg-acz},
we show how to transform $p$ into 
an actual $\eps$-approximation $r$ for $f$ such that $\deg r \leq (\log_2 R)\deg p$.
Our transformation from $p$ to $r$ consists of two steps.

In the first step,
we turn $p$ into a ``symmetric'' polynomial $p^{\text{sym}}(x)$ defined below. The following symmetrization lemma shows that the map $p \mapsto p^{\text{sym}}$ increases the degree of $p$ by at most a factor of $\log_2 R$.

\begin{restatable}{lemma}{lemambainissym} \label{lem:ambainis-sym}
Let $m=N \cdot \log_2 R$. For $x, y \in \{-1, 1\}^m$, write $y \sim x$ if there is a pair of permutations $\sigma$ on $[R]$ and $\pi$ on $[N]$ such that $g_y = \sigma \circ g_x \circ \pi$. Let $p : \{-1, 1\}^m \to \R$ be a real polynomial. Define
\[p^{\mathrm{sym}}(x) = \E_{y \sim x} [p(y)].\]
Then $\deg(p^{\mathrm{sym}}) \le (\log_2 R) \deg(p)$.
\end{restatable}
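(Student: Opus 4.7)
The plan is to pass to indicator-variable coordinates where symmetrization costs no degree, and to pay the $\log_2 R$ blow-up only when translating back. For each $i \in [N]$ and $j \in [R]$, introduce $y_{i,j} := \mathbf{1}[g_x(i) = j]$, and write $\epsilon_{j,k} \in \{-1,+1\}$ for the $k$-th bit of the binary encoding of $j$. Two identities link the two representations: (i)~$x_{i,k} = \sum_{j \in [R]} \epsilon_{j,k}\, y_{i,j}$, so each $x$-variable is \emph{linear} in the $y$'s; and (ii)~$y_{i,j} = \prod_{k=1}^{\log_2 R} \tfrac{1 + \epsilon_{j,k}\, x_{i,k}}{2}$, so each $y$-variable has $x$-degree exactly $\log_2 R$.

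Using (i), I would substitute each $x_{i,k}$ in $p$ by its linear form in the $y$'s, obtaining a polynomial $\tilde p(y)$ of $y$-degree at most $\deg(p) = d$ that satisfies $\tilde p(y) = p(x)$ whenever $y$ encodes $x$. The pair $(\sigma,\pi)$ acts on the $y$-variables by the linear permutation $y_{i,j} \mapsto y_{\pi^{-1}(i),\,\sigma^{-1}(j)}$, and the $\sim$-equivalence class of $x$ is precisely the orbit of its $y$-encoding under this action. Hence the symmetrized polynomial
\[
  \tilde q(y) := \E_{\sigma,\pi}\bigl[\tilde p((\sigma,\pi)\cdot y)\bigr]
\]
still has $y$-degree at most $d$ and agrees with $p^{\mathrm{sym}}(x)$ on every consistent $y$. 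Substituting (ii) back in then produces a polynomial in $x$ that represents $p^{\mathrm{sym}}$ and has total $x$-degree at most $d \cdot \log_2 R$; multilinearizing over $\{-1,1\}^m$ only decreases degree, yielding the bound $\deg(p^{\mathrm{sym}}) \le (\log_2 R)\deg(p)$.

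I do not anticipate a serious obstacle: the argument is driven by two elementary linear-algebraic observations, namely that substituting linear forms into a degree-$d$ polynomial preserves degree, and that averaging over a finite group of linear substitutions preserves degree. The one step worth checking carefully is that $\tilde p$ in the first stage really has $y$-degree at most $d$, which is immediate from (i). A secondary sanity check is that the orbit of $y$-encodings under $\text{Sym}(R)\times\text{Sym}(N)$ matches the $\sim$-equivalence class used to define $p^{\mathrm{sym}}$; this is straightforward from $g_y = \sigma \circ g_x \circ \pi \iff y_{i,j} = (y_x)_{\pi(i),\sigma^{-1}(j)}$.
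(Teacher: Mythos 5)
Your proof is correct, and it takes a genuinely simpler route than the paper's, although both begin with the same change of coordinates to the indicator variables $y_{ij} = \mathbf{1}[g_x(i)=j]$ (the paper's $t_{ij}$). The paper symmetrizes in two stages: it first invokes Ambainis's Lemma 3.4, a nontrivial structural result asserting that the average of $q(t\cdot\pi)$ over domain permutations $\pi$ can be rewritten, with no increase in degree, as a polynomial $Q$ in the column-sum variables $z_j = \sum_i t_{ij}$; only afterwards does it symmetrize over range permutations $\sigma$ by averaging over permutations of the $z_j$'s, and then substitutes $z_j \mapsto Z_j(x)$ at a $\log_2 R$ degree cost per variable. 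You bypass Ambainis's lemma entirely: since $(\sigma,\pi)$ acts on the indicator coordinates by a pure permutation of variables, averaging $\tilde p$ over $S_R\times S_N$ manifestly preserves $y$-degree, and the whole $\log_2 R$ penalty is paid at once when substituting your identity (ii). This gives a more elementary argument for the degree bound; what it forgoes is the cleaner intermediate representation $Q(z_1,\dots,z_R)$ in count variables, which the paper does not actually need for this lemma. One point worth spelling out in a final write-up: the equality $\tilde q(T(x)) = p^{\mathrm{sym}}(x)$ uses that $p^{\mathrm{sym}}(x)$ means $\E_{\sigma,\pi}[p(\sigma\cdot x\cdot\pi)]$ over uniform permutation pairs rather than the uniform distribution on the set $\{y : y\sim x\}$; this is the convention the paper itself adopts in its proof, so it is a matter of exposition, not a gap.
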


The proof of Lemma \ref{lem:ambainis-sym} exploits a result of Ambainis \cite{ambainis} and appears in Appendix \ref{app:ambainis}.

\medskip
We now turn to the second step of our transformation,
in which we identify an affine transformation $r$ of $p^{\text{sym}}$ that is an actual $\eps$-approximation to $f$.
To this end, we make two further observations about the polynomial $p^{\mathrm{sym}}$.

\begin{claim}\label{claim:sym-approx}
If $\phi_f$ is a symmetric property and $p$ is an $\eps$-one-sided approximation to $f$, then $p^{\mathrm{sym}}$ is also an $\eps$-one-sided approximation to $f$.
\end{claim}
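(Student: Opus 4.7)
The plan is to show that the two defining properties of an $\eps$-one-sided approximation are preserved under the symmetrization operator $p \mapsto p^{\mathrm{sym}}$. The crucial ingredient is the observation that $f$ is constant on each equivalence class of the relation $\sim$. Indeed, since $f(x)$ depends on $x$ only through the property $\phi_f(g_x)$, and $\phi_f(g_y) = \phi_f(\sigma \circ g_x \circ \pi) = \phi_f(g_x)$ whenever $y \sim x$ (by the symmetry of $\phi_f$), we conclude that $f(y) = f(x)$ for every $y \sim x$. Therefore, the equivalence class of any $x \in f^{-1}(1)$ lies entirely inside $f^{-1}(1)$, and similarly for $f^{-1}(-1)$.

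First, I would verify the accuracy condition on positive inputs. Fix $x \in f^{-1}(1)$. Since every $y \sim x$ also lies in $f^{-1}(1)$, the one-sided approximation property of $p$ gives $|p(y) - 1| \le \eps$ for all such $y$. By the triangle inequality (equivalently, Jensen's inequality applied to $|\cdot|$),
\[
|p^{\mathrm{sym}}(x) - 1| = \bigl|\E_{y \sim x}[p(y) - 1]\bigr| \le \E_{y \sim x}\bigl[|p(y) - 1|\bigr] \le \eps.
\]

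Next, I would verify the one-sided threshold condition on negative inputs. Fix $x \in f^{-1}(-1)$. Every $y \sim x$ also satisfies $f(y) = -1$, so $p(y) \le -1 + \eps$ for all such $y$. By linearity (monotonicity) of expectation,
\[
p^{\mathrm{sym}}(x) = \E_{y \sim x}[p(y)] \le -1 + \eps.
\]
The two bounds together show that $p^{\mathrm{sym}}$ meets both requirements of an $\eps$-one-sided approximation to $f$, completing the proof. There is no real obstacle here: the entire argument rests on the single observation that the equivalence classes of $\sim$ are contained in level sets of $f$, which follows immediately from the symmetry hypothesis on $\phi_f$.
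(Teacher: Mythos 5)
Your proof is correct and follows essentially the same approach as the paper's: both rest on the observation that $\sim$-equivalence classes are contained in level sets of $f$ (since $\phi_f$ is symmetric), and both conclude by averaging the defining inequalities of a one-sided approximation. The only difference is cosmetic — the paper writes out only the $f^{-1}(1)$ case and invokes convexity of the interval $[1-\eps, 1+\eps]$ rather than the triangle inequality, while you spell out both cases explicitly.
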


\begin{claim}\label{claim:sym-trans}
Let $S \subseteq \{-1, 1\}^m$. If $x \sim y$ for every pair $x, y \in S$, then $p^{\mathrm{sym}}$ is constant on $S$.
\end{claim}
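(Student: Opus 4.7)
The plan is to observe that the relation $\sim$ is an equivalence relation on $\{-1,1\}^m$ and that $p^{\mathrm{sym}}$ is constant on each equivalence class; the claim then follows immediately.

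First, I would verify that $\sim$ is an equivalence relation. Reflexivity is immediate (take $\sigma$ and $\pi$ to be identity permutations). Symmetry follows because permutations are invertible: if $g_y = \sigma \circ g_x \circ \pi$, then $g_x = \sigma^{-1} \circ g_y \circ \pi^{-1}$. Transitivity follows by composing pairs of permutations. In fact, the cleanest way to phrase this is that $\sim$ is the orbit equivalence of the action of the group $S_R \times S_N$ on $\{-1,1\}^m$ given by $(\sigma,\pi)\cdot x = y$ where $g_y = \sigma \circ g_x \circ \pi$, and orbit equivalences are automatically equivalence relations.

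Next, I would show that $p^{\mathrm{sym}}$ is constant on each equivalence class. Fix $x \in \{-1,1\}^m$ and let $[x]$ denote its equivalence class under $\sim$. The definition
\[ p^{\mathrm{sym}}(x) = \E_{y \sim x}[p(y)] = \frac{1}{|[x]|} \sum_{y \in [x]} p(y) \]
depends only on the set $[x]$, not on the representative $x$ itself. Hence for any $x' \in [x]$, we have $[x'] = [x]$ and therefore $p^{\mathrm{sym}}(x') = p^{\mathrm{sym}}(x)$.

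Finally, the hypothesis of the claim says that $x \sim y$ for every pair $x, y \in S$, which is exactly the assertion that $S$ is contained in a single equivalence class of $\sim$. By the previous paragraph, $p^{\mathrm{sym}}$ takes the same value at every element of $S$, which is precisely the claim. There is no real obstacle here; the only subtlety is confirming that $\sim$ is indeed an equivalence relation, which is a straightforward consequence of the invertibility of permutations.
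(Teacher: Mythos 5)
Your argument is correct and captures the same underlying idea as the paper's: $\sim$ is an equivalence relation (this is exactly what the paper's use of $\sigma_x,\pi_x$ and their invertibility establishes), and $p^{\mathrm{sym}}(x)$ depends only on the class of $x$, so it is constant on any $S$ contained in a single class. The one point worth flagging is a notational subtlety: you read $\E_{y\sim x}[p(y)]$ as the uniform average over the equivalence class $[x]$, whereas the paper's own proof treats it as $\E_{\sigma,\pi}[p(\sigma\cdot x\cdot\pi)]$ over uniformly random permutation pairs and does an explicit change of variables $\sigma\mapsto\sigma\circ\sigma_x$, $\pi\mapsto\pi_x\circ\pi$. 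These two readings agree---by orbit--stabilizer, each $y\in[x]$ is hit by the same number of permutation pairs, a number independent of the representative $x$ within the orbit---but if the intended definition is the permutation average, your write-up should either note this coincidence or perform the paper's change of variables; as it stands the proof implicitly assumes the orbit-uniform interpretation. Either way, the conclusion and the essential mechanism (constancy on orbits via invertibility of the group action) are the same as the paper's.
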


We first show how these claims together imply the theorem. By Claim \ref{claim:sym-approx}, $p^{\mathrm{sym}}$ is an $\eps$-one-sided approximation to $f$. By Claim \ref{claim:sym-trans}, $p^{\mathrm{sym}}$ is constant on the set of inputs $f^{-1}(-1)$, where it takes some value $v \le -1 + \eps$. If $v \ge -1 - \eps$, then $p^{\mathrm{sym}}$ is itself an $\eps$-approximation to $f$ and we are done. Otherwise, define the polynomial
\[r(x) = 1 + \frac{2(p^{\mathrm{sym}}(x) - 1)}{|v - 1|}.\]
Then $r(x) = -1$ for all $x \in f^{-1}(-1)$. Moreover, since $|v - 1| \ge 2$, we have $r(x) \in [1 - \eps, 1 + \eps]$ for all $x \in f^{-1}(1)$. Thus $r$ is a $\eps$-approximation to $f$.

We now proceed to prove Claims \ref{claim:sym-approx} and \ref{claim:sym-trans}.

\begin{proof}[Proof of Claim \ref{claim:sym-approx}]
Suppose that $\phi_f$ is a symmetric property and that $p$ is an $\eps$-one-sided approximation to $f$. Then $p(x) \in [1 - \eps, 1 + \eps]$ for all $x \in f^{-1}(1)$, and $p(x) \le -1 + \eps$ for all $x \in f^{-1}(-1)$. We focus on some fixed $x \in f^{-1}(1)$; handling inputs in $f^{-1}(-1)$ is entirely analogous. Since $\phi_f$ is symmetric, we have $y \in f^{-1}(1)$ whenever $y \sim x$. Therefore, $p(y) \in [1-\eps, 1+\eps]$ whenever $y \sim x$, so
\[p^{\mathrm{sym}}(x) = \E_{y \sim x} [p(y)] \in [1 - \eps, 1 + \eps].\]
A similar argument holds for $x \in f^{-1}(-1)$, so $p^{\mathrm{sym}}$ is an $\eps$-one-sided approximation to $f$.
\end{proof}

\begin{proof}[Proof of Claim \ref{claim:sym-trans}]
Fix a set $S \subseteq \{-1, 1\}^m$, and suppose $x \sim y$ for every $x, y \in S$. Fix some $x^* \in S$, and let $x \in S$ be arbitrary. It suffices to show that $p^{\mathrm{sym}}(x) = p^{\mathrm{sym}}(x^*)$. Write $\sigma \cdot x \cdot \pi$ for the value $y$ for which $g_y = \sigma \circ g_x \circ \pi$. Let $\sigma_x, \pi_x$ be a pair of permutations where $x = \sigma_x \cdot x^* \cdot \pi_x$. Note that the map $\sigma \mapsto \sigma \circ \sigma_x$ is a bijection from the symmetric group over $[R]$ to itself; similarly the map $\pi \mapsto \pi_x \circ \pi$ is a bijection from the symmetric group over $[N]$ to itself. Hence, it holds that
\begin{align*}
\E_{y \sim x} [p(y)] &= \E_{\sigma, \pi} [p(\sigma \cdot x \cdot \pi)] \\
&= \E_{\sigma, \pi} [p((\sigma \circ \sigma_x) \cdot x^* \cdot (\pi_x \circ \pi))] \\
&= \E_{\sigma, \pi} [p(\sigma \cdot x^* \cdot \pi)] \\
&= \E_{y \sim x^*}[p(y)].
\end{align*}
Thus we have $p^{\mathrm{sym}}(x) = p^{\mathrm{sym}}(x^*)$, showing that $p^{\mathrm{sym}}$ is constant on $S$.
\end{proof}

The proof of these claims concludes the proof of Theorem \ref{thm:symmetric}.
\end{proof}

\eat{Define a polynomial $q':[N] \to \R$ by
\[q'(z) = \E_{\tau \in S^R}[q(z_{\tau(1)}, \dots, z_{\tau(R)})].\]
Then $q'$ has degree at most $d$, and since $\phi_f$ is a symmetric property, $q'(z(x))$ is still a one-sided approximation to $f$. }

\subsection{The One-Sided Approximate Degree of \acz}
\label{sec:odegacz}
Prior to our work, the best lower bound on the one-sided approximate degree of an \acz\ function on $m$ variables was $\Omega(\sqrt{m})$ attained by the $\AND_m$ function (Fact \ref{fact:odegand}). However, to obtain stronger lower bounds for \acz\ via our hardness amplification technique, we need a constant-depth circuit with polynomially larger one-sided approximate degree. We now exhibit a depth-two circuit having one-sided approximate degree $\tilde{\Omega}(m^{2/3})$. Let $N$ and $R$ be positive integers such that $N \geq R$ and $R$ is a power of 2. We define the \ED\ function with range $R$ as follows. The function takes $m = N \log_2 R$ bits as input,
and interprets its input as $N$ blocks $(x_1, \dots, x_N)$ with each
block consisting of $\log_2 R$ bits. Each block is interpreted as 
a number in the range $[R]$, and the function
evaluates to TRUE if and only if all $N$ numbers are distinct. 

It is straightforward to check that for $R=\poly(N)$, the \ED\ function
with range $R$ is computed by a CNF formula
of polynomial size. Indeed, the function evaluates to TRUE if and only if there is no number $K \in [R]$ for which there is a pair of distinct indices $i, j \in [N]$ such that $x_i = x_j = K$. Thus, the following natural CNF computes \ED\ (noting that for any fixed $K$, the inner formula is computed by a bitwise OR):
\[f(x_1, \dots, x_N) = \bigwedge_{K = 1}^R \bigwedge_{i \ne j} (x_i \ne K) \lor (x_j \ne K).\]

Aaronson and Shi \cite{aaronsonshi} showed that
when $R > 3N/2$, the approximate degree of \ED\ is $\Omega(N^{2/3})$. 
Ambainis \cite{ambainis} extended the lower bound to the ``small-range''
case where $R=N$. For the remainder of the paper, we will use the term
\ED\ without qualification to refer to the small-range case.

In the language of properties of functions, the property $\phi_{\text{ED}}$ determines whether a function is one-to-one. That is, $\phi_{\text{ED}}(g) = 1$ if and only if the function $g: [N] \to [R]$ is injective. Note that $\phi_{\text{ED}}$ is a symmetric property, since injectivity is preserved under permutations of a function's domain and range. Furthermore, it is straightforward to verify
that \ED\ satisfies the hypothesis of Theorem \ref{thm:symmetric}: for any two inputs $x, y$ corresponding
to one-to-one functions $g_x, g_y : [N] \rightarrow [R]$, there exist permutations $\sigma, \pi$ such that 
$g_y = \sigma \circ g_x \circ \pi$.  
Thus, Theorem \ref{thm:symmetric} implies that the one-sided approximate degree of \ED\ is $\Omega(N^{2/3}/\log R)$. 

In a prior version of this work, we gave a different proof of this fact for the small-range case $N = R$ by manipulating a dual witness for the high approximate degree of \ED. We provide this alternative argument in Appendix \ref{app:as}. In fact, in the small-range case, the fact the property $\phi_{\text{ED}}$ holds for exactly one input function up to permutation \emph{of the domain only} allows us to prove a stronger one-sided approximate degree lower bound of $\Omega(N^{2/3})$ (i.e, without the loss of a $\log R$ factor that arises from symmetrizing over the range in the proof of Theorem \ref{thm:symmetric}).

\cored*

\subsection{Accuracy vs. Degree Tradeoffs for \acz}

We are now in a position to prove
our new lower bound on ``accuracy vs. degree'' tradeoffs for pointwise approximating \acz\ functions by polynomials.
\coredamp*
\begin{proof}
Let $t=n/d^{3/2}$, and $m=d^{3/2}$. Define $F = \OR_t(f, \dots, f)$ where $f: \{-1, 1\}^m \to \{-1, 1\}$ computes the \ED\ problem. The discussion in Section \ref{sec:odegacz} implies that $f$ is computed by a depth-2 circuit, and that $f$ has one-sided approximate degree $\tilde{\Omega}(m^{2/3})$. The claim now follows by \thmref{thm:big}.
\end{proof}


\subsubsection{On the Tightness of \thmref{thm:big} and \corref{cor:ed-amp}} \label{sec:tightness}

We now argue that the approximate degree lower bound proved in \thmref{thm:big} is essentially tight. In particular, we show that the function $F$ for which \corref{cor:ed-amp} yields a $(1 - \exp(-\tilde{\Omega}(n^{2/5}))$-error lower bound for approximating polynomials of degree $n^{2/5}$ actually admits a $(1 - \exp(-\tilde{O}(n^{2/5}))$-approximating polynomial of degree $\tilde{O}(n^{2/5})$.

Our nearly-matching upper bound makes use of a well-known paradigm for constructing low-weight PTFs (and hence, by Lemma \ref{lem:weight-relations}, low-accuracy pointwise approximations) for composed functions by way of \emph{rational approximations} (see e.g. \cite{ppclosed, hshs}). Suppose $f: \bits^m \to \bits$ is pointwise approximated by a rational function in the sense that for every $x \in \bits^m$,
\[\left|f(x) - \frac{p(x)}{q(x)}\right| < \frac{1}{t},\]
where $p, q$ are polynomials of degree $d$ and weight $w$ and $q(x) > 0$ on $\bits^m$. Then observe that the block composition
\[\OR_t(f(x_1), \dots, f(x_t)) = \operatorname{sgn}(1 - t + f(x_1) + \dots + f(x_t)) = \operatorname{sgn}\left(1 - t + \frac{p(x_1)}{q(x_1)} + \dots \frac{p(x_t)}{q(x_t)}\right).\]
Multiplying $\left(1 - t + \frac{p(x_1)}{q(x_1)} + \dots \frac{p(x_t)}{q(x_t)}\right)$ by the positive quantity $q(x_1)\cdot \dots \cdot q(x_t)$ and clearing denominators yields a PTF for the composed function of degree $td$ and weight at most $w^t(m + tw)$.

We now construct a rational approximation for $f = $ \ED\ with the desired properties. Recall from \secref{sec:odegacz} that \ED\ on $m$ variables has a CNF representation where the top $\AND$ gate has fan-in $s := O(m^3)$ and each $\OR$ gate has fan-in $O(\log m)$. It is easy to check that $\AND_s: \{-1, 1\}^s \rightarrow \{-1, 1\}$ admits the rational approximation
\[\frac{ts - 1 + t\sum_{i=1}^s x_i}{ts + 1 + t\sum_{i=1}^s x_i}\]
with error $1/t$, degree $d = 1$, and weight $w = O(st)$. Moreover, each bottom $\OR$ gate in the CNF can be computed exactly by a degree $O(\log m)$ polynomial with weight $O(1)$. Composing these constructions yields a rational approximation for \ED\ with error $1/t$, degree $d = O(\log m) = O(\log t)$ and weight $O(st) = \poly(t)$. Therefore, $F$ has a PTF of degree $\tilde{O}(t)$ and weight $\exp(\tilde{O}(t))$. By the construction of \lemref{lem:weight-relations}, $F$ also has a $(1 - \exp(-\tilde{O}(t)))$-approximation of degree $\tilde{O}(t)$. When $t = n^{2/5}$, we obtain a $(1 - \exp(-\tilde{O}(n^{2/5}))$-approximating polynomial of degree $\tilde{O}(n^{2/5})$ for $F$ as claimed.

\subsubsection{A Sharp Threshold in Accuracy-Degree Tradeoffs} \label{sec:sharp-threshold}

The rational approximations developed in the previous section, combined
with the lower bound of Theorem \ref{thm:big} and Corollary \ref{cor:ed-amp}, reveal a ``sharp threshold'' in the degree required to approximate a particular function $F$ within a given error parameter.
Recall
that Theorem \ref{thm:big} and Corollary \ref{cor:ed-amp} yield a lower bound of $d=\Omega(m^{2/3}/\log m)$ on the $\eps$-approximate degree of  $F = \OR_t(f, \dots, f)$, where $f$ is the \ED\ function on $m$ variables and $\eps=1-2^{-t}$. In the following discussion, consider any $t = d^{1-\Omega(1)}$.

If our goal is to approximate $F$ to within error $(1 - \exp(-\tilde{O}(t)))$, then the rational approximation techniques described in the preceding section yield an approximating polynomial of degree $\tilde{O}(t)$. On the other hand, if we desire even slightly better error of $1 - 2^{-t}$, then our accuracy-degree tradeoff lower bound of \thmref{thm:big} shows that we require degree $d = \omega(t)$. That is, if we demand error that is slightly better than $1-\exp(-\tilde{O}(t))$, there is an asymptotic jump from $\tilde{O}(t)$ to $\Omega(d)$ in the required degree.

\eat{That is, there exists a degree $d = \Theta(n^{1/3})$ for which $F$ admits a PTF with weight only $2^{\tilde{O}(n^{1/3})}$, but if we decrease $d$ whatsoever, then $F$ has no PTF representation with such degree, regardless of weight.

The function with this property is the DNF $F: \bits^n \to \bits$ where $n = 4t^3$ is defined blockwise by
\[F(x_1, \dots, x_t) = \OR_t(\AND_{4t^2}, \dots, \AND_{4t^2}).\]
The discussion in the previous section shows that $F$ has a PTF of degree $t = n^{1/3}$ and weight $2^{\tilde{O}(t)}= 2^{\tilde{O}(n^{1/3})}$. On the other hand, Minsky and Papert \cite{mp} showed that $F$ has threshold degree at least $t = n^{1/3}$; that is, $F$ does not admit any PTF representation of degree smaller than $n^{1/3}$, even if we allow arbitrarily large weight.

This observation ties in to several important open questions (see the conclusion for further discussion). First, it would be interesting to resolve the question of whether low-degree and low-weight PTFs exist for general DNFs. As noted in \secref{sec:learning}, an affirmative answer would yield important consequences for learning via the Generalized Winnow algorithm. Second, it would be interesting to see if this sharp thresholding behavior holds more generally for \acz\ functions, particularly if we take $f = $ \ED. This relates to our conjecture (see the conclusion) that the threshold \emph{degree} of the circuit of \corref{cor:ed-amp} is $\tilde{\Omega}(n^{2/5})$.
}

\subsection{Discrepancy of \acz}
\label{sec:discrepancy}
In this section we prove our new exponentially small upper bound on the discrepancy of a function in \acz. Consider a Boolean function $f: X \times Y \rightarrow \{-1,1\}$, and let $M^{(f)}$ be its communication matrix $M^{(f)} = [f(x, y)]_{x \in X, y \in Y}$.
A combinatorial rectangle of $X \times Y$ is a set of the form $A \times B$ with $A \subseteq X$ and $B \subseteq Y$. 
For a distribution $\mu$ over $X \times Y$, the discrepancy of $f$ with respect to $\mu$ is defined to be the maximum over all rectangles $R$ of the \emph{bias} of $f$ on $R$. That is:
$$\disc_\mu(f) = \max_{R} \left|\sum_{(x, y) \in R} \mu(x, y) f(x, y)\right|.$$ The discrepancy of $f$, $\disc(f)$ is defined to be $\min_\mu \disc_\mu(f)$.

\eat{Discrepancy upper bounds directly yield lower bounds in a number of complexity models.
For instance, a discrepancy upper bound on a function $f: X \times Y \rightarrow \{-1, 1\}$ yield lower bounds in a variety of contexts,
including:

\begin{enumerate}
\item PP communication complexity, a model capturing small-bias randomized communication complexity.
\item Size lower bounds for majority-of-threshold circuits.
\item Base-set independent PTF weight lower bounds for the class of functions $C=\{f_x(y): Y \rightarrow \{-1, 1\}\}$, where $f_x(y):=f(x, y)$.
\end{enumerate}

In \cite{majmaj}, Sherstov exhibited the first exponentially small upper bound on the discrepancy of a function in \acz. As a consequence, he showed that majority-of-threshold circuits for \acz\ require exponential size, resolving a question of Krause and Pudl\'{a}k. Prior to this work, the best-known upper bound on the discrepancy of a function in \acz\ was $\exp\left(-\Omega(n^{1/3})\right)$, due to Buhrman et al. \cite{bvdw} and Sherstov \cite{patternmatrix}.
}

Sherstov's pattern matrix method \cite{patternmatrix} shows how to generically transform an \acz\ function with high threshold degree or high threshold weight into another \acz\ function with low discrepancy.

\begin{theorem}[\cite{patternmatrix}, adapted from Corollary 1.2 and Theorem 7.3] \label{thm:patternmatrix}
Let $F: \bits^n \to \bits$ be given, and define the communication problem $F': \bits^{4n} \times \bits^{4n} \to \bits$ by
$$F'(x, y) = F(\dots, \lor_{j=1}^4(x_{i, j} \land y_{i, j}), \dots ).$$
Then for every integer $d \ge 0$,
\[\disc(F')^2 \le \max \left\{ \frac{2n}{W(F, d-1)}, 2^{-d}\right\}.\]
\end{theorem}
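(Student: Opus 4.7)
The plan is to derive Theorem \ref{thm:patternmatrix} by coupling the dual characterization of threshold weight (Theorem \ref{thm:twdual}) with Sherstov's pattern matrix method applied to the four-block selector construction defining $F'$. Informally, the pattern matrix method says that the discrepancy of $F'$ is controlled by how poorly the base function $F$ correlates with low-degree parities under some distribution on $\bits^n$; our job is to produce such a distribution via LP-duality and then pipe it through the spectral machinery of \cite{patternmatrix}.

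First, I would split into two cases depending on whether $F$ admits a sign-representing polynomial of degree at most $d-1$. If $\tdeg(F) \le d-1$, then $W(F, d-1)$ is finite and the second half of Theorem \ref{thm:twdual} produces a probability distribution $\mu$ on $\bits^n$ with
\[
\bigl|\E_{x \sim \mu}[F(x)\chi_S(x)]\bigr| \;\le\; \sqrt{2n / W(F, d-1)} \quad \text{for every } |S| \le d-1.
\]
If instead $\tdeg(F) \ge d$, then no degree-$(d-1)$ polynomial sign-represents $F$, so by an LP-duality argument analogous to the one behind Theorem \ref{thm:prelim} (specialized to error $\eps=0$), there exists a distribution $\mu$ on $\bits^n$ satisfying $\E_{x \sim \mu}[F(x)\chi_S(x)] = 0$ for all $|S| \le d-1$. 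In this case the convention $W(F, d-1) = \infty$ makes the first term of the bound vacuous, so only $2^{-d}$ matters.

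Next, I would feed $\mu$ into Sherstov's pattern matrix analysis. The key spectral estimate in \cite{patternmatrix} expands the communication matrix $M^{(F')}$ in the Fourier basis of $F$: each subset $S$ of the input coordinates of $F$ contributes a block whose spectral norm is controlled by (i) the Fourier coefficient of $F\mu$ at $S$ and (ii) a geometric factor coming from the selector, which decays like $2^{-|S|/2}$ because each of the $n$ OR-of-ANDs blocks is fed from a $4$-to-$1$ selector. Splitting the spectrum into $|S| \le d-1$ and $|S| \ge d$ and applying the generalized discrepancy method yields
\[
\disc(F')^{2} \;\le\; \max_{|S| \le d-1}\bigl|\E_{\mu}[F(x)\chi_S(x)]\bigr|^{2} \;+\; O(2^{-d}),
\]
which, after inserting the bound from either case above and absorbing constants into the selector normalization (this is precisely the ``adaptation'' referenced in the theorem statement), reduces to $\disc(F')^{2} \le \max\{2n/W(F, d-1),\, 2^{-d}\}$.

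The main obstacle is the pattern-matrix spectral bound itself, which is the technical heart of \cite{patternmatrix} and which I would quote rather than reprove: once one has the right dual distribution from Step 1, the contribution of Step 2 is essentially bookkeeping to align Sherstov's normalization (he works with a two-block selector and a different Fourier weighting) with the four-block formulation used here. The LP-duality step is entirely routine given Theorem \ref{thm:twdual}, so the only genuinely novel element of the argument is verifying that the constants track correctly through the translation between conventions.
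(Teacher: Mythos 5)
The paper does not prove Theorem \ref{thm:patternmatrix}; it is stated as an adaptation of Corollary 1.2 and Theorem 7.3 of Sherstov \cite{patternmatrix} and invoked as a black box (notably in deriving Corollary \ref{cor:discrepancy}). There is therefore no internal proof for your sketch to be measured against.

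As a reconstruction of the cited argument, your sketch does capture the right two-stage structure: obtain a dual distribution $\mu$ on $\bits^n$ from the threshold-weight duality of Theorem \ref{thm:twdual}, then convert the resulting Fourier-smallness of $F\mu$ into a discrepancy bound for $F'$ via the pattern-matrix spectral estimate. However, two of the steps you defer to citation are exactly where the substance lies, and as written they would not survive being made precise. First, a discrepancy upper bound requires exhibiting a distribution $\mu'$ on $\bits^{4n}\times\bits^{4n}$ against which $\disc_{\mu'}(F')$ is small; you construct $\mu$ on $\bits^n$ but never lift it to the composed input space. That lift, together with the identification of the matrix $[F'(x,y)]$ as a pattern matrix under the four-cell selector gadget, is precisely the content of Theorem 7.3 in \cite{patternmatrix} and cannot be waved away as normalization bookkeeping. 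Second, your final estimate is written as the sum $\max_{|S|\le d-1}|\E_\mu[F\chi_S]|^2 + O(2^{-d})$, which you then assert ``reduces to'' the stated maximum. The pattern-matrix spectral bound is a genuine maximum over Fourier levels $S$ (the levels feed orthogonal singular subspaces), not a sum, and the inference $a+O(b)\le\max\{a,b\}$ is simply false; the clean form $\max\{2n/W(F,d-1),2^{-d}\}$ does not follow from the expression you wrote. Neither issue is conceptually deep, but both occur at the step where the cited machinery is actually carrying the argument, so the sketch is not self-contained even modulo standard results.
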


We apply this theorem to the function $F: \bits^n \to \bits$ of \corref{cor:ed-amp}. This function has $\eps$-approximate degree $n^{2/5}$ for $\eps=1-2^{-\tilde{\Omega}(n^{2/5})}$, and hence by  by \lemref{lem:weight-relations} it holds that $W(f, n^{2/5}) = 2^{\tilde{\Omega}(n^{2/5})}$. We thus obtain our new discrepancy upper bound for \acz\ as stated in Corollary \ref{cor:discrepancy}, restated here for the 
reader's convenience. 

\cordisc*

\subsection{Threshold Weight of \acz}
\label{sec:ptfweight}
Combing Lemma \ref{lem:weight-relations} with 
Corollary \ref{cor:ed-amp} yields Corollary \ref{cor:ed-ampthresh},
restated here for the reader's convenience.

\coredampthresh*

A result of Krause \cite{krause} allows us to extend our new degree-$d$ threshold weight lower bound for \acz\ into an $\exp\left(\tilde{\Omega}\left(n^{2/5}\right)\right)$ \emph{degree independent} threshold weight lower bound for a related function $F'$. In \lemref{lem:threshold-weight} below, we give a slight modification of Krause's original result that is cleaner to apply, and asymptotically recovers Krause's result when the weights under consideration are superpolynomially large. Our restatement admits a new and simple proof based on LP duality  that we present in \appref{app:threshold-weight}. 

\eat{
\begin{lemma}[\cite{krause}, Lemma 3.4] \label{lem:threshold-weight}
Let $F: \bits^n \to \bits$ be a Boolean function, and define $F': \bits^{3n} \to \bits$ by
\[F'(x_1, \dots, x_n, y_1, \dots, y_n, z_1, \dots, z_n) := F(\dots, (\bar{z}_i \land x_i) \lor (z_i \land y_i), \dots).\]
Then $W(F') \ge W(F, d)$ for all $d$ for which $2^d \ge W(F, d)$.
\end{lemma}
}

\begin{restatable}{lemma}{lemthresholdweight} \label{lem:threshold-weight}
Let $F: \bits^n \to \bits$ be a Boolean function, and define $F': \bits^{3n} \to \bits$ by
\[F'(x_1, \dots, x_n, y_1, \dots, y_n, z_1, \dots, z_n) := F(\dots, (\bar{z}_i \land x_i) \lor (z_i \land y_i), \dots).\]
Then for every integer $d \ge 0$,
\[W(F')^2 \ge \min \left\{ \frac{W(F, d)}{2n}, 2^d\right\}.\]
\end{restatable}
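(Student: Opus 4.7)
The plan is to prove the lemma via LP duality, using both halves of Theorem \ref{thm:twdual}. I argue by contradiction: assume simultaneously that $W(F')^2 < W(F,d)/(2n)$ and $W(F')^2 < 2^d$. Writing $w = W(F')$, the first inequality rearranges to $W(F,d) > 2nw^2$, so the ``existence'' half of Theorem \ref{thm:twdual} applied to $F$ with parameter $d$ produces a probability distribution $\mu$ on $\bits^n$ satisfying $|\E_{u \sim \mu}[F(u)\chi_S(u)]| < 1/w$ for every $|S| \le d$. The task then is to lift $\mu$ to a distribution $\mu'$ on $\bits^{3n}$ under which every correlation with a parity $\chi_T$ of degree at most $3n$ is also strictly less than $1/w$, and then to invoke the ``necessity'' half of the same theorem applied to $F'$ with parameter $3n$ for the contradiction.

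The natural lift is tailored to the selector gadget. I sample $u \sim \mu$, and then independently for each coordinate $i \in [n]$ flip a fair coin $z_i \in \bits$: if $z_i = 1$, set $x_i = u_i$ and draw $y_i$ uniformly, and if $z_i = -1$, set $y_i = u_i$ and draw $x_i$ uniformly. By design, $\sel(x,y,z) = u$ with probability one, so $F'(x,y,z) = F(u)$. To analyze $\E_{\mu'}[F'\chi_T]$ for an arbitrary parity, decompose $T$ into coordinate pieces $S_i \subseteq \{x_i, y_i, z_i\}$ and condition on $u$. A brief inspection of the eight possibilities for each $S_i$ reveals that $\E[\chi_{S_i}(x_i,y_i,z_i) \mid u_i] = 0$ whenever $S_i \in \{\{z_i\}, \{x_i, y_i\}, \{x_i, y_i, z_i\}\}$, and otherwise equals $1$ (if $S_i = \emptyset$) or $\pm u_i/2$. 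Multiplying across coordinates, either some $S_i$ annihilates the parity, or the conditional expectation equals $\pm 2^{-|I|}\chi_I(u)$, where $I := \{i : S_i \ne \emptyset\}$.

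Taking a further expectation over $u$ gives $|\E_{\mu'}[F'\chi_T]| \le 2^{-|I|}|\E_\mu[F(u)\chi_I(u)]|$ whenever $T$ survives; crucially, $|I| \ge |T|/2$ because each surviving $S_i$ has size at most $2$. If $|I| \le d$, the choice of $\mu$ yields $|\E_{\mu'}[F'\chi_T]| < 2^{-|I|}/w \le 1/w$. If $|I| \ge d+1$, then the trivial bound $|\E_\mu[F\chi_I]| \le 1$ gives $|\E_{\mu'}[F'\chi_T]| \le 2^{-|I|} \le 2^{-d-1} < 2^{-d/2} < 1/w$, where the last inequality uses $w^2 < 2^d$. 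Hence $\max_{|T| \le 3n}|\E_{\mu'}[F'\chi_T]| < 1/w = 1/W(F')$, contradicting the ``necessity'' half of Theorem \ref{thm:twdual} applied to $F'$ with parameter $3n$. The main delicacy is the coordinate-wise case analysis, together with the verification that the $2^{-|I|}$ savings produced by the selector gadget are exactly what is needed to dominate both the low-degree parities (via the hypothesis on $\mu$) and the high-degree parities (via $w^2 < 2^d$).
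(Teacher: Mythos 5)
Your proposal is correct and uses the same core construction as the paper: the ``lifted'' distribution $\mu'$ you describe (sample $z$ uniformly, feed the selected coordinates through $\mu$, and randomize the unselected ones) is precisely the distribution $\mu'(x,y,z) = 2^{-2n}\mu(\mathrm{Sel}_z(x,y))$ used in the paper's proof, and both arguments invoke the two halves of Theorem~\ref{thm:twdual} in the same way. The difference is in how the Fourier coefficient $\E_{\mu'}[F'\chi_T]$ is analyzed. The paper groups the variables globally into $S_1, S_2, S_3$ (the $x$-, $y$-, and $z$-indices of $T$) and kills the non-contributing terms via explicit pairing/cancellation arguments in the summation; you instead condition on $u = \mathrm{Sel}_z(x,y)$ and compute $\E[\chi_{S_i}\mid u_i]$ for each coordinate independently, immediately reading off that each surviving nonempty $S_i$ contributes a factor $\pm u_i/2$. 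The two computations are equivalent --- your $|I|$ matches the paper's $|S_1| + |S_2|$, and your case split on $|I|\le d$ versus $|I|\ge d+1$ mirrors the paper's split on $|S|\le d$ versus $|S|>d$ via the observation $|S_1|+|S_2|\ge|S|/2$. Your coordinate-wise conditional-expectation calculation is somewhat slicker and makes the $2^{-|I|}$ savings transparent at a glance, and your contradiction framing is merely a cosmetic inversion of the paper's direct derivation. (One presentational note: the observation $|I|\ge|T|/2$ you flag as ``crucial'' is not actually used in your argument, since you case-split directly on $|I|$ rather than on $|T|$.)
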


Combining Corollary \ref{cor:ed-ampthresh} and Lemma \ref{lem:threshold-weight} yields Corollary \ref{cor:weightacz}.
This improves over the previous
best threshold weight lower bound for \acz, which was $\exp\left(\Omega(n^{1/3})\right)$ \cite{krausepudlak}.

\corweightacz*

\begin{proof}
 Let $F$ be the circuit of \corref{cor:ed-amp} and let $F'$ be the depth-four circuit obtained by applying Lemma \ref{lem:threshold-weight} to $F$. Let $d = n^{2/5}/\log^c n$ for a sufficiently large constant $c$. Then Corollary \ref{cor:ed-ampthresh} implies that $W(F, d) \ge 2n2^d$,
 and hence $W(F') \geq 2^{d/2} = 2^{\tilde{\Omega}(n^{2/5})}$
 by Lemma \ref{lem:threshold-weight}.
\end{proof}

\medskip \noindent \textbf{Remark:} While the threshold weight bound 
of Corollary \ref{cor:weightacz} is stated for polynomial
threshold functions over $\{-1, 1\}^n$ (i.e., for polynomials that are integer linear combinations of parities), the same threshold weight
lower bound also holds for polynomials over $\{0, 1\}^n$, or equivalently, for integer linear combinations of conjunctions.
This can be seen as follows.

Given a set $S \subseteq [n]$,
let  $\AND_S: \{-1, 1\}^n \rightarrow \{-1, 1\}$ denote the AND function restricted
to variables in $S$. Given a sign-representation $p = \sum_{S} c_S\AND_S$ for $F$ of weight $w$, let $\sum_{S} \hat{p}(S) \chi_S$
 denote the Fourier representation of $p$. 
It is easy to check that the $L_1$-norm of the Fourier coefficients of each conjunction $\AND_S$ is at most $3$, so the weight of the Fourier expansion of $p$ is $w' := \sum_{S} |\hat{p}(S)| \leq 3w$. However, we cannot simply conclude that $w/3 \ge w' \ge W(f)$ because
the coefficients $\hat{p}(S)$ are not necessarily integers. 

Nonetheless, note that $|p(x)| \geq 1$ for all $x \in \bits^n$,
since $p$ has integer coefficients. That is, $p$ is a sign-representation
for $f$ over $\{-1, 1\}^n$ of weight $w'$ and with margin
at least 1. It follows by Theorem \ref{thm:twdual}
that $\exp\left(\tilde{\Omega}(n^{2/5})\right) = W(f) \leq 2n(w')^2 = \poly(n, w)$. We conclude that $w=\exp\left(\tilde{\Omega}(n^{2/5})\right)$
as desired.

The same argument shows that all
of our lower bounds on degree-$d$ threshold weight proved in
this paper hold for PTFs over $\{0, 1\}^n$, in addition
to PTFs over $\{-1, 1\}^n$.

\section{Lower Bounds for AND-OR Trees}
\label{sec:andortree}
The $d$-level AND-OR tree (respectively, OR-AND tree) on $n$ variables 
is a function described by a read-once circuit
of depth $d$ consisting of alternating layers of AND gates and OR gates, with
the root gate being an AND gate (respectively, an OR gate).
We assume throughout this section that all gates have fan-in $n^{1/d}$;
for example, the two-level
AND-OR tree is a read-once CNF in which all gates have fan-in $n^{1/2}$.
The assumption on the fan-in is not essential to our analysis in this section,
which in fact applies to any read-once Boolean circuit such that all gates at any given layer have the same fan-in. 
We will let $\ANDOR_{d, n}$ (respectively, $\ORAND_{d, n}$) 
denote the $d$-level AND-OR tree (respectively, OR-AND tree) on $n$ variables.

The current authors \cite{bunthaler}, and independently Sherstov \cite{sherstovnew},
resolved the approximate degree of $\ANDOR_{2, n}$ by proving an optimal $\Omega(n^{1/2})$ lower bound in this case. 
However, the techniques of \cite{bunthaler, sherstovnew}
break down for the case of depth three or greater;
 to the best of our knowledge, the best lower bound
that follows from prior work is
$\Omega(n^{1/4+1/2d})$, which can be derived by combining the depth-two lower bound \cite{sherstovnew, bunthaler} with an earlier direct-sum theorem of Sherstov \cite[Theorem 3.1]{sherstovFOCS}. 

In this section, we extend the methods of our prior work \cite{bunthaler} to prove
an $\Omega\left(n^{1/2}/\log^{(d-2)/2} n\right)$ lower bound on the approximate degree of $\ANDOR_{d, n}$ for any constant $d>0$.

Up to a $\log^{(d-2)/2}n$ factor, this matches an upper bound of $O(n^{1/2})$ which was
established for AND-OR trees of any depth via a line of
work on quantum query algorithms \cite{hoyer, spalekandor, reichardt}.
Specifically, H\o yer, Mosca, and De Wolf \cite{hoyer} proved an upper bound
of $O(c^{d-1} n^{1/2})$ for some constant $c$
on the approximate degree of any depth-$d$ AND-OR tree
in which all gates at any given layer have the same fan-in. 
Subsequent work by Ambainis et al. \cite{spalekandor} established an upper bound of 
$n^{1/2}2^{O\left(\sqrt{\log n}\right)}$ for any depth, and further refinements
by Reichardt culminated in an $O(n^{1/2})$ upper bound for any depth \cite{reichardt}.
A remarkable result of Sherstov \cite{sherstovrobust}, on making polynomials robust to noisy inputs, yields a very different proof of
H\o yer, Mosca, and De Wolf's $O(c^{d-1}n^{1/2})$ upper bound. 

\thmandor*

\eat{
\subsection{Proof Outline.}
To introduce our proof technique, we first describe the method used in \cite{bunthaler} to
construct an optimal dual polynomial in the case $d=2$, and we identify why this method breaks down 
when trying to extend to the case $d=3$.
We then explain how to use our hardness amplification result (Theorem \ref{thm:big})
to construct a different dual polynomial that does extend to the case $d=3$. 

Let $m=n^{1/2}$ denote the fan-in of all gates in $\ORAND_{2, n}$. 
In our earlier work \cite{bunthaler}, we constructed a dual polynomial for $\ORAND_{2, n}$ as follows.\footnote{We actually constructed a dual polynomial for $\ANDOR_{2,n}$, but the analysis for the case of $\ORAND_{2, n}$ is entirely analogous.}
By Fact \ref{fact:odegand} there is a dual polynomial $\gamma_1$ witnessing the fact that $\odeg(\AND_{m})=\Omega\left(m^{1/2}\right)$, and a dual polynomial $\gamma_2$ witnessing the fact that $\adeg(\OR_m)=\Omega\left(m^{1/2}\right)$.
We then combined the dual witnesses $\gamma_1$ and $\gamma_2$, using 
the same ``combining'' technique as in \eqref{eq:zeta}, 
to obtain a function $\gamma_3:\{-1,1\}^{m^2}\rightarrow \mathbb{R}$
defined via:

\begin{equation*} \gamma_3(x_1, \dots, x_{m}) := 2^m \gamma_2(\dots, \sgn(\gamma_1(x_i)), \dots) \prod_{i=1}^{m} |\gamma_1(x_i)|,\end{equation*}
where $x_i = (x_{i, 1}, \dots, x_{i, m})$.
\eat{
It followed from earlier work \cite{sherstovFOCS} that $\psi_3$ has pure high degree equal to the product
of the pure high degree of $\psi_1$ and the pure high degree of $\psi_2$ (this follows
from the analysis in Appendix \ref{app:proof1}). 
The new ingredient of our analysis was to use the one-sided error 
of the ``inner'' dual witness $\psi$ to argue that $\zeta$ had good correlation with $\ANDOR_2$. 
This analysis proceeds as follows.

When analyzing the correlation of $\zeta$ with $\ANDOR_{2, n}$, we strive to show that 
\begin{equation} \label{eq:rough} \sum_{x \in \{-1, 1\}^{m^2}} \zeta(x) \ANDOR_{2, n}(x) \approx \sum_{y \in \{-1, 1\}^m} \Psi(y) \AND(y) = 1/3.
\end{equation} 
Very roughly speaking, when analyzing the left hand size of Equation \eqref{eq:rough}, we think of each
copy of $\psi$ as feed in ``potentially faulty'' bits into $\Psi$. 
}
It followed from earlier work \cite{sherstovFOCS} that $\gamma_3$ has pure high degree equal to the product
of the pure high degree of $\gamma_1$ and the pure high degree of $\gamma_2$, 
yielding an $\Omega(m)$ lower bound on the pure high degree of $\gamma_3$. 
The new ingredient of the analysis in \cite{bunthaler} was to use the one-sided error 
of the ``inner'' dual witness $\gamma_1$ to argue that $\gamma_3$ also had good correlation with $\ORAND_2$. 

\medskip
\noindent \textbf{Extending to Depth Three.}
Let $M=n^{1/3}$ denote the fan-in of all gates in $\ANDOR_{3, n}$. 
In constructing a dual witness for $\ANDOR_{3, n}=\AND_{M}(\ORAND_{2, M^2}, \dots, \ORAND_{2, M^2})$, it is natural to try the following approach. 
Let $\gamma_4$ be a dual polynomial witnessing the fact that the approximate degree
of $\AND_{M}=\Omega(\sqrt{M})$. Then we can combine $\gamma_3$ and $\gamma_4$ in the same manner as above
to obtain a dual function $\gamma_5$:

\begin{equation} \label{eq:gammaprime} \gamma_5(x_1, \dots, x_{M}) := 2^{M} \gamma_4(\dots, \sgn(\gamma_3(x_i)), \dots) \prod_{i=1}^{M} |\gamma_3(x_i)|,\end{equation}
where $x_i = (x_{i, 1}, \dots, x_{i, M^2})$.
The difficulty in establishing that $\gamma_5$ is a dual witness to the high approximate degree of $\ANDOR_{3, n}$
is in showing that $\gamma_5$ has good correlation with $\ANDOR_3$. In our earlier work,
we showed $\gamma_3$ has large correlation with $\ORAND_{2, M^2}$ by exploiting the fact that the inner dual witness
$\gamma_1$ had one-sided error, i.e., $\gamma_1(y)$ agrees in sign with $\AND_{M}$ whenever $y \in \AND^{-1}_M(-1)$ .
 However, $\gamma_3$ itself does not satisfy an analogous 
 property: there are inputs $x_i \in \ORAND^{-1}_{2, M^2}(-1)$ such that $\gamma_3(x_i) > 0$,
 \emph{and} there are inputs $x_i \in \ORAND^{-1}_{2, M^2}(1)$ such that $\gamma_3(x_i) < 0$.
 
 To circumvent this issue, we use a different inner dual witness $\gamma'_3$ within \eqref{eq:gammaprime}.
 Our construction of $\gamma'_3$ will utilize our hardness amplification analysis to achieve the following:
 while $\gamma'_3$ will have error ``on both sides'', the error from the ``wrong side'' will be very small. 
 The hardness amplification step will cause $\gamma'_3$ to have pure high degree that is lower than that of
 the dual witness $\gamma_3$ constructed in \cite{bunthaler} by a $\sqrt{\log n}$ factor. 
 However, the hardness amplification step will permit us to prove the desired lower bound on the correlation of $\gamma_5$ with $\ANDOR_{3, n}$. 
 }
 \begin{proof}
 We begin by proving the claimed lower bound for $\ANDOR_{3, n}$ before explaining how to extend the argument to 
 $\ANDOR_{d, n}$ for an arbitrary depth $d>0$. 

\paragraph{Notation.} There will be a total of seven intermediate dual witnesses that arise in our construction
 of a dual witness $\psi_7$ for $\ANDOR_{3, n}$. We will denote these
 seven dual witnesses as $\psi_1, \dots, \psi_7$.
  Let $M=n^{1/3}$ denote the fan-in of all gates in $\ANDOR_{3, n}$. Our goal is to construct a dual witness $\psi_7$
 to demonstrate that $\adeg(\ANDOR_{3, n})=\Omega\left(n^{1/2}/\log^{1/2}n\right)$.
 
To this end, let $\psi_6$ be a dual polynomial witnessing the fact that $\odeg_{.99}(\AND_{M})=\Omega(\sqrt{M})$. By \thmref{thm:prelim}, there is some $d_6 = \Omega(\sqrt{M})$ such that $\psi_6$ satisfies:

\begin{equation} \label{eq:andorproof0} \sum_{a \in \{-1, 1\}^M} \psi_6(a)\AND_{M}(a) > .99, \end{equation}
\begin{equation} \label{eq:andorproof1} \sum_{a \in \{-1, 1\}^M} |\psi_6(a)| = 1,\end{equation} 
\begin{equation} \label{eq:andorproof2} \sum_{a \in \{-1, 1\}^M} \psi_6(a) \chi_S(a)=0   \text{ for each } |S| \leq d_6 \text{ and} \end{equation}
\begin{equation} \label{eq:andorproof100} \psi_6(-\mathbf{1}) \le 0.\end{equation}

As stated in the proof outline (Section \ref{sec:introANDOR}), we are ultimately going to construct a function $\psi_5:\{-1, 1\}^{M^2} \rightarrow \mathbb{R}$ that serves as a dual witness to the high approximate
degree of $\ORAND_{2, M^2}$ while having ``almost no error on the wrong side''. More formally, we will show

\begin{claim} \label{claim:psi5}
There exists a dual witness $\psi_5$ for the fact that the $0.98$-approximate degree of $\ORAND_{2, M^2}$ is at least $\Omega(M/\sqrt{\log n})$ with the following property. If $A_{-1} = \{z \in \{-1, 1\}^{M^2} : \psi_5(z) < 0 \text{ and } \ORAND_{2, M^2}(z) = 1\}$, then
\[\sum_{z \in A_{-1}} |\psi_5(z)| \le n^{-2}.\]
\end{claim}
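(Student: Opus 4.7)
The plan is to construct $\psi_5$ by a two-stage application of Sherstov's combining method. In the first stage, I apply Theorem \ref{thm:big} with $f = \AND_M$ and block count $t = C \log n$ for a sufficiently large constant $C$, starting from the given $\psi_6$ and the two-point outer dual $\Psi$ supported on $\{\mathbf{1}, -\mathbf{1}\}$. This produces $\zeta$, a dual witness for $F := \OR_t(\AND_M, \dots, \AND_M)$ with pure high degree $d_\zeta = \Omega(\sqrt{M})$, correlation at least $1 - 2^{-t}$ with $F$, and (as established in the proof of Theorem \ref{thm:big}) the property $\zeta(z) \leq 0$ for every $z \in F^{-1}(-1)$. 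Combining this one-sidedness with the correlation bound shows that the ``wrong-side'' mass $b^+_\zeta := \sum_{z : \zeta(z) < 0,\, F(z) = +1} |\zeta(z)|$ is at most $2^{-t-1} \leq n^{-C}/2$.

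In the second stage, let $T = M/t$, and take $\Phi : \{-1, 1\}^T \to \R$ to be a dual witness for $\adeg_{0.99}(\OR_T)$. A standard Markov inequality argument applied to a symmetrized polynomial shows that $\adeg_{0.99}(\OR_T) = \Omega(\sqrt{T}) = \Omega(\sqrt{M/\log n})$, so $\Phi$ can be chosen with pure high degree $d_\Phi = \Omega(\sqrt{M/\log n})$, unit $\ell_1$ norm, and correlation $\sum_y \Phi(y) \OR_T(y) > 0.99$. Using $\sum_y \Phi(y) = 0$ (which holds because $d_\Phi \geq 0$), this correlation simplifies to $2\Phi(\mathbf{1})$, so in particular $\Phi(\mathbf{1}) > 0.495$. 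I then define
\[
\psi_5(z_1, \dots, z_T) := 2^T \cdot \Phi\bigl(\sgn \zeta(z_1), \dots, \sgn \zeta(z_T)\bigr) \prod_{i=1}^T |\zeta(z_i)|,
\]
using the combining recipe from the proof of Theorem \ref{thm:big}. The standard analysis of this method gives $\|\psi_5\|_1 = 1$ and pure high degree at least $(d_\zeta + 1)(d_\Phi + 1) - 1 = \Omega(M/\sqrt{\log n})$, as required.

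It remains to verify the correlation and wrong-side error bounds. Let $Z_i \sim |\zeta|$ and $Y_i := \sgn \zeta(Z_i)$; the one-sidedness of $\zeta$ on $F^{-1}(-1)$ forces $F(Z_i) = +1$ whenever $Y_i = +1$, while $\Pr[F(Z_i) = +1 \mid Y_i = -1] = 2 b^+_\zeta =: \eta \leq n^{-C}$. Conditioning on $Y = y$ and using $\sum_y \Phi(y) = 0$, the correlation of $\psi_5$ with $\ORAND_{2, M^2}$ telescopes to $2\Phi(\mathbf{1}) + 2\sum_{y \neq \mathbf{1}} \Phi(y)\, \eta^{|y|_{-1}}$. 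The first term exceeds $0.99$, and the correction term has absolute value at most $2\eta$, so the combined correlation exceeds $0.99 - 2\eta > 0.98$ for $C$ large. The wrong-side error $\sum_{z \in A_{-1}} |\psi_5(z)|$ decomposes analogously as $\sum_{y : \Phi(y) < 0} |\Phi(y)|\, \eta^{|y|_{-1}}$; since $\Phi(\mathbf{1}) > 0$ the $y = \mathbf{1}$ term contributes nothing, and every remaining term picks up at least one factor of $\eta$, giving total wrong-side error at most $\eta \leq n^{-C} \leq n^{-2}$ for $C \geq 2$.

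The main technical step is verifying the correlation ``telescoping'' identity in the last paragraph, which is the adaptation of the Theorem \ref{thm:big} analysis to the two-stage setting; once this identity is in hand, both the correlation lower bound and the wrong-side error upper bound drop out cleanly by tracking the single small parameter $\eta$.
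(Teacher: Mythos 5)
Your construction coincides with the paper's: split $\ORAND_{2,M^2}$ as $\OR_{M/t}(F, \dots, F)$ with $F = \OR_t(\AND_M, \dots, \AND_M)$ and $t = \Theta(\log n)$, take the inner dual witness from Theorem~\ref{thm:big} applied to $\AND_M$ and an outer dual witness for $\OR_{M/t}$, and combine them with the standard recipe. Your explicit telescoping identities for both the correlation and the wrong-side mass (driven by the single parameter $\eta$) and your derivation of $\Phi(\mathbf{1}) > 0.495$ from balance plus the $0.99$ correlation are correct and make self-contained the details the paper delegates to \cite{bunthaler} and to Fact~\ref{fact:odegand}; the only slip is calling the inner $\AND_M$ dual ``$\psi_6$,'' a name the paper reserves for the top-level dual later combined into $\psi_7$, but this is harmless since any witness for $\odeg_{1/2}(\AND_M) = \Omega(\sqrt{M})$ serves.
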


We will then define our final dual witness $\psi_7$ via
\begin{equation} \label{eq:zetaprimerepeat} \psi_7(x_1, \dots, x_{M}) := 2^{M} \psi_6(\dots, \sgn(\psi_5(x_i)), \dots) \prod_{i=1}^{M} |\psi_5(x_i)|,\end{equation}
where $x_i = (x_{i, 1}, \dots, x_{i, M^2})$. 

We now prove the existence of $\psi_5$ (Claim \ref{claim:psi5}) before returning to the analysis of the combined dual witness $\psi_7$.

\begin{proof}[Proof of Claim \ref{claim:psi5}]
As discussed in the introduction, the construction of $\psi_5$ combines our hardness amplification technique (Theorem \ref{thm:big}) with the technique of combining dual witnesses in our earlier work \cite{bunthaler}.

\medskip
\noindent \textbf{Construction of $\psi_5$}. 
Consider the function $\ORAND_{2, M^2}$. Let $t=2\log n$. We view the root OR gate as an OR of ORs, where the top OR has fan-in $M/t$ and the bottom OR gates each have fan-in $t$. Thus, we are now thinking of the two-level OR-AND tree as a three-level circuit, where the top two levels consist of OR gates, and the bottom level consists of AND gates. 
Consider the function $F=\OR_t(\AND_{M}, \dots, \AND_{M})$, which allows us to write $\ORAND_{2, M^2} = \OR_{M/t}(F, \dots, F)$. By our hardness amplification technique, there is a dual witness $\psi_3$ for the high one-sided approximate degree of $F$, even with error inverse polynomially close to $1$. We will construct $\psi_5$ by combining $\psi_3$ with a dual witness $\psi_4$ for the high approximate degree of $\OR_{M/t}$.

In more detail, applying Theorem \ref{thm:big} to the $\AND_M$ function (see Fact \ref{fact:odegand}) yields a dual witness $\psi_3$ demonstrating that there is some $d_3 = \Omega(\sqrt{M})$ such that
$\odeg_{1-2^{-t}}(F) \geq d_3$ (see the Remark following the proof of Theorem \ref{thm:big}). For the case of depth $d = 3$, we may use $\psi_3$ as a black box. However, to enable induction in the case of general $d$, we recall that the dual witness $\psi_3$ was defined via:

\begin{equation*} \psi_3(b_1, \dots, b_{t}) := 2^{t} \psi_2(\dots, \sgn(\psi_1(b_i)), \dots) \prod_{i=1}^{M} |\psi_1(b_i)|,\end{equation*}
 where $b_i = (b_{i, 1}, \dots, b_{i, M})$,
 $\psi_1$ was a dual witness to the high one-sided approximate degree of 
$\AND_M$, and $\psi_2$ was defined such that $\psi_2(\mathbf{1})=1/2$, $\psi_2(\mathbf{-1})=-1/2$, and $\psi_2$ evaluates to 0 for all other inputs in $\{-1, 1\}^{t}.$

By Theorem \ref{thm:oprelim} the dual witness $\psi_3$ satisfies:

\begin{equation} \label{eq:andorproof3} \sum_{b \in \{-1, 1\}^{t\cdot M}} \psi_3(b)F(b) > 1-2^{-t} = 1-n^{-2}, \end{equation}
\begin{equation} \label{eq:andorproof4} \sum_{b \in \{-1, 1\}^{t \cdot M}} |\psi_3(b)| = 1,\end{equation} 
\begin{equation} \label{eq:andorproof5} \sum_{b \in \{-1, 1\}^{t\cdot M}} \psi_3(b) \chi_S(b)=0   \text{ for each } |S| \leq d_3 \text{ and } \end{equation}
\begin{equation} \label{eq:andorproof6} \psi_3(b) \le 0   \text{ for each } b \in F^{-1}(-1).\end{equation}

Now let $\psi_4$ denote a dual witness to the fact that $\adeg_{.99}(\OR_{M/t})=\Omega(\sqrt{M/t})$. By Fact \ref{fact:odegand}, this dual witness has one-sided error, but on the side opposite from the one we used to define $\odeg$.
Thus there is some $d_4 = \Omega(\sqrt{M/t})$ such that the following equations hold:

\begin{equation} \label{eq:andorproof7} \sum_{w \in \{-1, 1\}^{M/t}} \psi_4(w)\OR_{M/t}(w) > .99, \end{equation}
\begin{equation} \label{eq:andorproof8} \sum_{w \in \{-1, 1\}^{M/t}} |\psi_4(w)| = 1,\end{equation} 
\begin{equation} \label{eq:andorproof9} \sum_{w \in \{-1, 1\}^{M/t}} \psi_4(w) \chi_S(w)=0   \text{ for each } |S| \leq d_4 \text{ and } \end{equation}
\begin{equation} \label{eq:andorproof10} \psi_4(\mathbf{1}) \ge 0.\end{equation}

Finally, we combine the dual witnesses $\psi_4$ and $\psi_3$ to obtain the desired function $\psi_5$:
\begin{equation} \label{eq:psiprime} \psi_5(z_1, \dots, z_{M/t}) := 2^{M/t} \psi_4(\dots, \sgn(\psi_3(z_i)), \dots) \prod_{i=1}^{M/t} |\psi_3(z_i)|,\end{equation}
where $z_i = (z_{i, 1}, \dots, z_{i, t \cdot M})$. 

\paragraph{Analysis of $\psi_5$.} 
The analysis in \cite{bunthaler}
immediately implies that $\psi_5$ has $L_1$-norm equal to 1, has pure high degree 
at least $d_3 \cdot d_4 = \Omega\left(M/\sqrt{t}\right)=\Omega\left(M/\sqrt{\log n}\right)$, and
that the correlation of $\psi_5$ with $\ORAND_{2, M^2}$ is at least $.99-2^{-t}\geq .98$.
What remains is to show that $\psi_5$ has ``almost no error on the wrong side''.
Recall that $A_{-1}=\{z \in \{-1, 1\}^{M^2} : \psi_5(z) < 0,  \ORAND_{2, M^2}(z)=1\}.$
We will show that:

\begin{equation} \label{eq:baderror} \sum_{z \in A_{-1}} |\psi_5(z)| \leq n^{-2}. \end{equation}

To establish \eqref{eq:baderror}, we first collect some observations. Let $B_{-1} = \{z_i \in \{-1, 1\}^{M\cdot t}: \psi_3(z_i) < 0, F(z_i) = 1\}$.  

\begin{itemize}
\item \noindent Observation 1:
For every $z=(z_1, \dots, z_{M/t}) \in \left(\{-1, 1\}^{t\cdot M}\right)^{M/t}$ 
in $A_{-1}$, the following property must hold:
$z_i \in B_{-1}$ for every $i$ such that $\psi_3(z_i)<0$.
This holds because $F(z_i)=1$ for \emph{all} $i \in \{1, \dots, M/t\}$, since
$\ORAND_{2, M^2}(z)=1$.

\item \noindent Observation 2: For every $z=(z_1, \dots, z_{M/t}) \in \left(\{-1, 1\}^{t\cdot M}\right)^{M/t} \in A_{-1}$, there must exist a $z_i$ such that $\psi_3(z_i)<0$.
This is because, if $\psi_3(z_i) \geq 0$ for all $i \in \{1, \dots, M/t\}$, then
$\psi_5(z)$ agrees in sign with $\psi_4(\mathbf{1}) > 0$ (see \eqref{eq:andorproof10}), contradicting the assumption
that $z \in A_{-1}$. 

\item \noindent Observation 3: Let $\mu$ be the distribution on $\{-1, 1\}^{M^2}$ 
defined via: $\mu(z_1, \dots, z_{M/t}) = \prod_{i=1}^{M/t} |\psi_3(z_i)|$. 
Since $\psi_3$ is balanced, the string $(\dots, \sgn(\psi_3(z_i)), \dots)$ is distributed uniformly in $\{-1, 1\}^{M/t}$
when one samples $z=(z_1, \dots, z_{M/t})$ according to $\mu$.  

\item \noindent Observation 4: Because
 $\psi_3$ has correlation $1-n^{-2}$ with $F$ (see \eqref{eq:andorproof3}),
 the following equation holds:
\begin{equation*}
\sum_{z_i \in B_{-1}} |\psi_3(z_i)| \leq \frac{1}{2}n^{-2}.\end{equation*}

\item \noindent Observation 5: As in the proof of Theorem \ref{thm:big},
let $\mu(z|w)$ denote the probability of $z$ under $\mu$, conditioned on $(\dots, \sgn(\psi_3(z_i)), \dots)=w$.
If $z \sim \mu(\cdot | w)$ for some string $w$ where $w_i = -1$, then the probability that $F(z_i) = 1$ when $\sgn(\psi_3(z_i)) = w_i$ is $2\sum_{z_i \in B_{-1}} |\psi_3(z_i)|$.

\eat{ the following two random variables are identically distributed:
\begin{itemize}
\item The string $(\dots, F(z_i), \dots)$ when one chooses $(\dots, z_i, \dots)$ from the conditional distribution 
$\mu(\cdot|w)$. 
\item The string $(\dots, y_iw_i, \dots)$, where $y \in \{-1, 1\}^{M}$ is a random string whose $i$th bit independently
takes on value $-1$ with probability $2 \sum_{z_i \in B_{w_i}} |\psi_3(z_i)|$.
\end{itemize}
}
\end{itemize}

Thus, we may write: 

\begin{align*}  \sum_{z \in A_{-1}} |\psi_5(z)| = \sum_{z \in A_{-1}} 2^{M/t} |\psi_4(\dots, \sgn(\psi_3(z_i)), \dots)| \prod_i |\psi_3(z_i)|   \end{align*}
\begin{align*} \leq \sum_{w \in \{-1, 1\}^{M/t}, w \neq \mathbf{1}} |\psi_4(w)|  \cdot \Pr_{z \sim \mu(\cdot|w)}[z_i \in B_{-1} \ \forall i: w_i=-1] \end{align*}
\begin{align*} \leq  \sum_{w \in \{-1, 1\}^{M/t}} |\psi_4(w)|  \cdot n^{-2} \leq n^{-2}.\end{align*}

Here, the equality holds by definition of $\psi_5$ (see \eqref{eq:psiprime}),
the first inequality holds by Observations 1, 2 and 3,
the second inequality holds by Observations 4 and 5, and the fourth inequality holds  because the $L_1$
norm of $\psi_4$ is 1 (see \eqref{eq:andorproof8}).

\end{proof}

Recall that we defined the combined dual witness
\[\psi_7(x_1, \dots, x_{M}) := 2^{M} \psi_6(\dots, \sgn(\psi_5(x_i)), \dots) \prod_{i=1}^{M} |\psi_5(x_i)|,\]
where $\psi_6$ is a dual polynomial for the high one-sided approximate degree of the top $\AND_M$ function. In the remainder of the proof, we show that $\psi_7$ is a dual witness for $\ANDOR_{3,n}$.

\paragraph{Bounding the Correlation of $\psi_7$ with $\ANDOR_{3, n}$.}
Using Equation \eqref{eq:baderror}, it is possible to adapt
the analysis of \cite{bunthaler} to show
\begin{claim} \label{claim:psi7}
$$\sum_{x} \psi_7(x) \ANDOR_{3, n}(x) > .97.$$
\end{claim}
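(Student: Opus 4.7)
The plan is to directly mimic the correlation analysis from the proof of Theorem \ref{thm:big}, treating $\psi_6$ as the outer witness and $\psi_5$ as the inner witness. The new wrinkle, compared to Theorem \ref{thm:big}, is that $\psi_5$ no longer has genuine one-sided error; what Claim \ref{claim:psi5} provides is that the ``wrong-side'' mass $\sum_{A_{-1}}|\psi_5(z)|$ is bounded by $n^{-2}$ rather than being zero, which will nevertheless turn out to be enough.

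First I would rewrite the correlation in block form. Let $\mu(x_1,\dots,x_M) := \prod_i |\psi_5(x_i)|$. Since $\psi_5$ has pure high degree $\ge 0$ and $L_1$-norm $1$, the signature $w := (\sgn\psi_5(x_i))_{i=1}^M$ is uniform on $\{-1,1\}^M$ under $\mu$, and conditionally on $w$ the values $\ORAND_{2,M^2}(x_i)$ are independent and equidistributed with $y_i w_i$, where $y_i$ flips to the ``wrong'' sign with probability $\alpha := 2\sum_{z \in B_+}|\psi_5(z)|$ when $w_i = +1$, and with probability $\beta := 2\sum_{z \in A_{-1}}|\psi_5(z)|$ when $w_i = -1$. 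Here $B_+ := \{z : \psi_5(z) > 0,\,\ORAND_{2,M^2}(z) = -1\}$ and $A_{-1}$ is as in Claim \ref{claim:psi5}. Exactly as in the proof of Theorem \ref{thm:big}, this gives
\[\sum_x \psi_7(x)\,\ANDOR_{3,n}(x) = \sum_{w \in \{-1,1\}^M} \psi_6(w)\,\mathbb{E}_y[\AND_M(y_1 w_1,\dots,y_M w_M)].\]

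Next I would combine the parameter bounds. From $\sum_z \psi_5(z)\ORAND_{2,M^2}(z) \ge 0.99 - n^{-2}$ together with $\|\psi_5\|_1 = 1$, one finds $\alpha + \beta \le 0.01 + n^{-2}$, so in particular $\alpha \le 0.01 + n^{-2}$; and Claim \ref{claim:psi5} gives $\beta \le 2n^{-2}$. Using that $\AND_M(u) = -1$ iff $u = -\mathbf{1}$, we have $\mathbb{E}_y[\AND_M(yw)] = 1 - 2\pi(w)$, where $\pi(w) := \alpha^{k(w)}(1-\beta)^{M-k(w)}$ and $k(w)$ counts the $+1$ coordinates of $w$. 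Since $\psi_6$ has pure high degree, $\sum_w \psi_6(w) = 0$, and the correlation equals $-2\sum_w \psi_6(w)\pi(w)$, which I split into $w = -\mathbf{1}$ and $w \ne -\mathbf{1}$. The $w = -\mathbf{1}$ contribution is $-2\psi_6(-\mathbf{1})(1-\beta)^M \ge -2\psi_6(-\mathbf{1}) - 2M\beta > 0.99 - 4n^{-5/3}$, using $-2\psi_6(-\mathbf{1}) = \sum_w \psi_6(w)\AND_M(w) > 0.99$ and $|\psi_6(-\mathbf{1})| \le 1$. For $w \ne -\mathbf{1}$, $k(w) \ge 1$ forces $\pi(w) \le \alpha$, so this piece has absolute value at most $2\alpha \sum_{w \ne -\mathbf{1}}|\psi_6(w)| \le 2\alpha \le 0.02 + 2n^{-2}$. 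Adding the two bounds yields $\sum_x \psi_7(x)\ANDOR_{3,n}(x) > 0.99 - 0.02 - o(1) > 0.97$ for $n$ large enough.

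The hard part is the $w = -\mathbf{1}$ term, where the ``wrong-side'' error $\beta$ of $\psi_5$ is amplified by the factor $M = n^{1/3}$ coming from the outer $\AND_M$. The bound $\beta = O(n^{-2})$ supplied by Claim \ref{claim:psi5} is exactly what is needed to absorb this amplification; merely knowing $\beta = O(1)$ (as we do for $\alpha$) would make this term a constant and the conclusion would fail. By contrast, on the $w \ne -\mathbf{1}$ side the analysis is more forgiving because each block contributes a factor $\alpha$ through $\alpha^{k(w)}$, so only the crude correlation bound on $\psi_5$ is required.
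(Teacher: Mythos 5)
Your proposal is structurally correct and follows the same approach as the paper: decompose the correlation over signature strings $w$, use the $\beta = O(n^{-2})$ bound from Claim \ref{claim:psi5} to defeat the $M$-fold amplification at $w = -\mathbf{1}$, and absorb the $w \ne -\mathbf{1}$ terms using the $O(1)$ two-sided error of $\psi_5$. Your reformulation via the balanced property $\sum_w \psi_6(w) = 0$ together with the explicit product formula $\pi(w) = \alpha^{k(w)}(1-\beta)^{M-k(w)}$ for $\Pr[\AND_M(yw) = -1]$ is a clean, elementary substitute for the paper's appeal to the block-sensitivity bound (Proposition \ref{finalprop}) on the $w \ne -\mathbf{1}$ side.

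One numerical tightening is needed: as written, your estimate is $0.99 - 4n^{-5/3} - (0.02 + 2n^{-2}) = 0.97 - o(1)$, which does not actually exceed $0.97$. The fix is to use the balanced property of $\psi_6$ one more time: when upper-bounding $\sum_{w \ne -\mathbf{1}} \psi_6(w)\pi(w)$, only the terms with $\psi_6(w) > 0$ contribute positively (and since $\psi_6(-\mathbf{1}) \le 0$, all such $w$ are already $\ne -\mathbf{1}$), and they carry total $\psi_6$-mass exactly $1/2$. This improves the loss on that side from $2\alpha$ to $\alpha \le 0.01 + n^{-2}$, yielding a final bound of $0.98 - o(1) > 0.97$. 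The paper's own write-up of this step is similarly loose with constants, so this is a polish issue rather than a conceptual one.
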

\begin{proof}[Proof of Claim \ref{claim:psi7}]
The idea is to show that 
\begin{equation} \label{fuckingend} \sum_{x} \psi_7(x) \ANDOR_{3, n}(x) \approx \sum_{a \in \{-1, 1\}^M} \psi_6(a) \AND_{M}(a) > .99.\end{equation}
To this end,
let $A_{-1} = \{z \in \{-1, 1\}^{M^2}: \psi_5(z) < 0, \ORAND_{2, M^2}(z) = 1\}$ as above, 
and let $A_{1} = \{z \in \{-1, 1\}^{2, M^2}: \psi_5(z) \geq 0, \ORAND_{2, M^2}(z) = -1\}$.
Notice that
$A_1 \cup A_{-1}$ is the set of all inputs $z$ where the sign of $\psi_5(z)$ disagrees with $\ORAND_{2, M^2}(z)$. Notice that 
$\sum_{z \in A_1 \cup A_{-1}} |\psi_
5(z)| \leq .01$ because $\psi_5$ has correlation at least $.98$ with $\ORAND_{2, M^2}$. 

Let $\nu$ be the distribution on $\left(\{-1, 1\}^{M^2}\right)^M$ given by $\nu(x_1, \dots, x_M) = \prod_{i=1}^{M} |\nu(x_i)|$. Since $\nu$ is orthogonal to the constant polynomial, it has expected value 0, and hence the string $(\dots, \sgn(\psi_5(x_i)), \dots)$ is distributed uniformly in $\{-1, 1\}^{M}$
when one samples $(x_1, \dots, x_{M})$ according to $\nu$. 
Let $\nu(x_i|a)$ denote the probability of $x_i$ under $\nu$, conditioned on $(\dots, \sgn(\psi_5(x_i)), \dots)=a$.

For any given $a \in \{-1, 1\}^{M}$, the following two random variables are identically distributed:
\begin{itemize}
\item The string $(\dots, \ORAND_{2, M^2}(x_i), \dots)$ when one chooses $(\dots, x_i, \dots)$ from the conditional distribution 
$\nu(\cdot|a)$. 
\item The string $(\dots, y_ia_i, \dots)$, where $y \in \{-1, 1\}^{M}$ is a random string whose $i$th bit independently
takes on value $-1$ with probability $2 \sum_{x_i \in A_{a_i}} |\nu(x_i)| \le .02$. 
\end{itemize}

Thus, the left hand side of Expression (\ref{fuckingend}) equals

\begin{equation} \label{fuckingend2} \sum_{a \in \{-1, 1\}^{M}} \psi_6(a) \cdot \mathbf{E}[\text{AND}_M(\dots, y_ia_i, \dots)],\end{equation}

where $y \in \{-1, 1\}^{M}$ is a random string whose $i$th bit independently
takes on value $-1$ with probability $2 \sum_{x_i \in A_{a_i}} |\psi(x_i)| \le .02$. 

All $a \neq -\mathbf{1}_{M}$ can be handled exactly as in \cite{bunthaler} and \cite{sherstovFOCS}
to argue that they contribute at least $(1-.02)\psi_6(a)$ to the sum. The key
property exploited here is that $\AND_M$ has low \emph{block-sensitivity}
on these points, allowing us to apply the following proposition.

\begin{proposition}[\cite{sherstovFOCS}]\label{finalprop} Let $f: \{-1, 1\}^{M} \rightarrow \{-1, 1\}$ be a given Boolean function.
Let $y \in \{-1, 1\}^{M}$ be a random string whose $i$th bit is set to $-1$ with probability at most $\gamma \in [0, 1]$, 
and to $+1$ otherwise, independently for each $i$. Then for every $a \in \{-1, 1\}^{M}$, 
$$\mathbf{P}_y[f(a_1, \dots, a_{M}) \ne f(a_1y_1, \dots, a_Ma_M)] \leq 2\gamma\operatorname{bs}_a(f).$$ 
\end{proposition}

In particular, since $\text{bs}_a(\text{AND}_M) = 1$ for all $a \neq - \mathbf{1}_M$, Proposition \ref{finalprop} implies that for all $a \neq -\mathbf{1}_{M}$,
and $a=\text{AND}_{M}$, $\mathbf{P}_y[f(a_1, \dots, a_{M})= f(a_1y_1, \dots, a_My_M)] \geq 1-.02$.

We next argue that the term corresponding to $a=-\mathbf{1}_{M}$ contributes at least $(1-2Mn^{-2}) \psi_6(a)$ to Expression (\ref{fuckingend2}).
By \eqref{eq:baderror} and a union bound, for $a=-\mathbf{1}_{M}$, the $y_i$'s are \emph{all} $1$ with probability $1-2Mn^{-2}$, and hence $\mathbf{E}_y[\text{AND}_M\left(\dots, y_ia_i, \dots\right)] \le (1-2Mn^{-2})\text{AND}_M\left(-\mathbf{1}_{M}\right) = -(1-2Mn^{-2})$.
By \eqref{eq:andorproof100}, $\sgn(\psi_6(-\mathbf{1}_{M})) = -1,$ and thus the term corresponding to $a=-\mathbf{1}_{M}$ contributes at least $(1-2Mn^{-2})\psi_6(a)$ to Expression (\ref{eq2andor}) as claimed.  We conclude that $\sum_x \psi_7(x) \ANDOR_{3, n} \geq .97$.
\end{proof}

\eat{We remark that the preceding paragraph in fact establishes the following
fact, which will be important when extending the above argument to $\ANDOR_{d, n}$
for an arbitrary depth $d > 0$. Let $C_{1}=\{x: \psi_7(x) > 0 \cap \ANDOR_{3, n}(x) = 1\}$.
Then $W:=\sum_{x \in C_1} |\psi_7(x)| \leq 1-2Mn^{-2}$. 
the quantity $W$ should be thought of as the error of $\psi_7$ ``on the wrong side'',
and it is important when extending to larger depths $d$ that $W$ be small.}

\medskip
\noindent \textbf{Completing the proof for $d=3$.}
The proof that $\psi_7$ has $L_1$-norm 1 and has pure high degree at least
$d_5 \cdot d_6 = \Omega\left(n^{1/2}/\log^{1/2}(n)\right)$ is identical
to prior work \cite{sherstovFOCS} (see also Appendix \ref{app:proof1}).
Combined with Claim \ref{claim:psi7} showing that $\sum_x \psi_7(x) \ANDOR_{3, n} \geq .97$,
we conclude that $\psi_7$ is a dual witness to the fact that 
$\adeg_{.97}(\ANDOR_{3, n})=\Omega\left(n^{1/2}/\log^{1/2}(n)\right)$. 

\medskip
\noindent \textbf{Extending to general $d$.}
For ease of exposition, we focus on the case where $d$ is odd;
the case of even $d$ is similar. To enable a proof by induction, we will show that

\begin{claim}\label{claim:induction}
For $d$ odd, there exists a dual witness $\psi_5$ showing that the $.99$-approximate degree of $\ANDOR_{d, n}$ is $\Omega(n^{1/2}/\log^{(d-1)/2}(n))$. Moreover,
\begin{equation}  \label{eq:minbaderror} \sum_{y \in A_1} |\psi_5(y)| \leq 2n^{-2},\end{equation}
where $A_{1} = \{y: \psi_5(y) > 0, \ANDOR_{d, n}(y) = -1\}$.

For $d$ even, the same statement holds for the approximate degree of $\ORAND_{d, n}$ where we replace \eqref{eq:minbaderror}
with the corresponding bound on $\sum_{y \in A_{-1}} |\psi_5(y)|$, 
where $A_{-1} = \{y:\psi_5(y) < 0, \ORAND_{d, n}(y) = 1\}$.
\end{claim}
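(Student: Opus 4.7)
The plan is to prove Claim~\ref{claim:induction} by induction on $d$, with the base case $d = 2$ supplied by the construction of $\psi_5$ in the proof of Claim~\ref{claim:psi5}. For the inductive step, I mirror that base construction: to produce $\psi_5^{(d+1)}$ from $\psi_5^{(d)}$, I view the depth-$(d+1)$ tree with fan-in $M = n^{1/(d+1)}$ as a three-level composition obtained by splitting its root gate into two layers of fan-in $M/t$ and $t$ for $t = \Theta(\log n)$. I then apply the template of Theorem~\ref{thm:big} at the inner split using $\psi_5^{(d)}$ as the inner dual witness, and finally combine the amplified witness with a standard dual witness for the outer $M/t$-ary gate.

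Concretely, in the even-$d$-to-odd-$(d+1)$ case I write $\ANDOR_{d+1, n} = \AND_{M/t} \circ \AND_t \circ \ORAND_{d, M^d}$. Using $\psi_5^{(d)}$ (with wrong-side mass $\leq 2n^{-2}$ on $\{\psi_5^{(d)} < 0,\ \ORAND = +1\}$) as the inner witness together with an outer witness $\Psi$ for $\AND_t$ supported on $\{\pm \mathbf{1}\}$, the construction of Theorem~\ref{thm:big} produces $\psi^{\mathrm{amp}}$ for $\AND_t \circ \ORAND_{d, M^d}$ of pure high degree $d^* = \Omega(M^{d/2}/\log^{(d-1)/2} n)$ and correlation $\geq 1 - 2^{-t} - O(tn^{-2})$. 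Crucially, the side on which $\psi_5^{(d)}$'s wrong-side error sits matches the side where $\Psi$ places its ``true'' weight ($-\mathbf{1}$), so the inductive hypothesis provides exactly the bound needed for the amplification analysis. In the second stage I combine $\psi^{\mathrm{amp}}$ with a dual witness $\gamma$ for $\AND_{M/t}$ of pure high degree $\Omega(\sqrt{M/t})$ (which has exact one-sided error at $-\mathbf{1}$ by Fact~\ref{fact:odegand}). The resulting $\psi_5^{(d+1)}$ has pure high degree $d^* \cdot \Omega(\sqrt{M/t}) = \Omega(n^{1/2}/\log^{d/2} n)$ as required, correlation $> .99$ with $\ANDOR_{d+1, n}$, and small wrong-side mass on $A_1^{(d+1)}$. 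The odd-$d$-to-even-$(d+1)$ case is symmetric with the roles of $\AND$ and $\OR$ (and of $\pm \mathbf{1}$) interchanged.

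The main obstacle is the amplification step, because Theorem~\ref{thm:big} as stated requires the inner dual witness to have \emph{exact} one-sided error, whereas our inductive hypothesis supplies only total wrong-side mass bounded by $2n^{-2}$. The fix is to revisit the analysis of Expression~\eqref{eq2andor} from the proof of Theorem~\ref{thm:big} with one extra error term: on the special $\Psi$-coordinate where the original argument uses $A_1 = \emptyset$ to conclude that each $y_i$ is deterministically $+1$, we now have $\Pr[y_i \neq +1] \leq 4n^{-2}$, so a union bound over the $t$ coordinates gives expected contribution at least $\frac{1}{2}(1 - 4tn^{-2})$ from that term, and an overall additive loss of $O(tn^{-2})$ in the correlation of $\psi^{\mathrm{amp}}$. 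With $t = \Theta(\log n)$ this loss is $n^{-2+o(1)}$, so negligible. This is precisely the error-propagation argument already carried out once in the proof of Claim~\ref{claim:psi5}; we just iterate it.

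The remaining verifications are routine adaptations of arguments already in the paper. The unit $L_1$-norm and the multiplicativity of pure high degree for $\psi_5^{(d+1)}$ follow from the generic combining analysis of Appendix~\ref{app:proof1}. The correlation bound $> .99$ follows by repeating the argument of Claim~\ref{claim:psi7} with $\psi^{\mathrm{amp}}$ (whose wrong-side mass is $O(tn^{-2})$) playing the role of $\psi_5$. Finally, the small wrong-side mass of $\psi_5^{(d+1)}$ itself is inherited from $\gamma$'s one-sided error: a positive (respectively negative) value of $\psi_5^{(d+1)}$ on an input where $\ANDOR_{d+1,n}$ is $-1$ (respectively $\ORAND_{d+1,n}$ is $+1$) forces at least one coordinate of $\psi^{\mathrm{amp}}$ to lie in its own wrong-side set, which has mass $O(tn^{-2})$. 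Choosing the hidden constant in $t = \Theta(\log n)$ sufficiently large at each step keeps the bound at $2n^{-2}$ throughout, closing the induction.
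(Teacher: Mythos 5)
Your proposal takes essentially the same route as the paper's proof: induct on depth, split the root gate of the next-level tree into two layers of fan-in $M/t$ and $t$ with $t = \Theta(\log n)$, run the amplification template of Theorem~\ref{thm:big} with the previous level's dual witness as the inner witness (absorbing its ``almost'' one-sided error via a union bound over the $t$ blocks), and finally combine with a one-sided dual witness for the $M/t$-ary outer gate, with pure high degree multiplying across the two combinations. You spell out the $\ORAND_d \to \ANDOR_{d+1}$ half of the inductive step while the paper spells out the mirror-image $\ANDOR_{d-2} \to \ORAND_{d-1}$ half and declares the other case similar, but the construction and error-propagation accounting are the same; the only blemish is that the small wrong-side mass of $\psi_5^{(d+1)}$ actually comes from the \emph{exponentially} small wrong-side set $B_1(\psi^{\mathrm{amp}})$ (mass $2^{-\Omega(t)}$, which decreases with $t$), not the $O(tn^{-2})$ set as you write, and this is why enlarging $t$ helps rather than hurts.
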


\eqref{eq:minbaderror} intuitively captures the property that
$\psi_1'$ has ``almost no error on the wrong side''. 

\begin{proof}[Proof of Claim \ref{claim:induction}]
As a base case of the induction, the dual witness $\psi_1$ 
that we used in the case $d=3$ clearly satisfies the above 
properties (in fact, $\psi_1$ had one-sided error, and therefore
satisfied an even stronger  condition than \eqref{eq:minbaderror}).

As suggested by our choice of $\psi_5$ as the name of the function we want to construct, the inductive case mimics the proof of Claim \ref{claim:psi5}. To emphasize the similarity between this argument and the proof of Claim \ref{claim:psi5}, we will show that assuming the induction hypothesis at level $d-2$ implies the induction hypothesis at level $d-1$, for $d$ odd. The case of $d$ even is similar.

To construct a dual witness $\psi_5$ proving that $\adeg(\ORAND_{d-1, n^{1-1/d}}) = \Omega(n^{(1-1/d)/2}/\log^{(d-2)/2}(n))$ with ``almost no error on the wrong side,'' we inductively assume
that there exists a dual witness $\psi'_1$ for the high approximate degree of the function $G=\ANDOR_{d-2, n^{1-2/d}}$ with almost no error on the wrong side.
That is, there exists a $\psi_1'$ and a $d_1' = \Omega(n^{(1-2/d)/2}/\log^{(d-3)/2}(n))$ such that

\begin{equation} \label{eq:andorproofgend0} \sum_{y \in \{-1, 1\}^{n^{1-2/d}}} \psi'_1(y)G(y) > .99, \end{equation}
\begin{equation} \label{eq:andorproofgend1} \sum_{y \in \{-1, 1\}^{n^{1-2/d}}} |\psi'_1(y)| = 1, \text{ and} \end{equation} 
\begin{equation} \label{eq:andorproofgend2} \sum_{y \in \{-1, 1\}^{n^{1-2/d}}} \psi'_1(y) \chi_S(y)=0   \text{ for each } |S| \leq d_1',\end{equation}
in addition to having ``almost no error on the wrong side.''

Now we set $M=n^{1/d}$, and define $\psi_2, \psi_3, \dots, \psi_5$ exactly as in the case $d=3$,
but with the dual witness $\psi'_1$ in place of the dual witness
$\psi_1$.
That is, we let $\psi_2:\{-1, 1\}^t \rightarrow \mathbb{R}$ be defined via $\psi_2(\mathbf{1})=1/2$, $\psi_2(\mathbf{-1})=-1/2$, and $\psi_2(b_i)=0$ for all other $b_i \in \{-1, 1\}^{t}.$
We define \begin{equation*}\psi_3(b_1, \dots, b_{t}) := 2^{t} \psi_2(\dots, \sgn(\psi'_1(b_i)), \dots) \prod_{i=1}^{M} |\psi'_1(b_i)|,\end{equation*}
 where $b_i = (b_{i, 1}, \dots, x_{i, M})$.
 We define $\psi_4$ to be a dual witness
 to the fact that $\adeg_{.99}(\OR_{M/t})=\Omega(\sqrt{M/t})$ for $t = 2\log n$. 
 We define $\psi_5$ exactly as in \eqref{eq:psiprime}.

The analysis of $\psi_5$ proceeds as in the proof of Claim \ref{claim:psi5}, with one modification. 
In the case of $d=3$, $\psi_1$ had one-sided error, 
so we could directly invoke our hardness amplification
result (Theorem \ref{thm:big}) to conclude that 
$\psi_3$
also had one-sided error, as well as correlation $1-2^{-t}$ with the target
function $\OR_t(G, \dots, G)$. 
In the case of general $d$, $\psi'_1$ does not have one-sided error.
However, $\psi_1'$ ``almost'' has one-sided error, 
as formalized by \eqref{eq:minbaderror}. 
It is straightforward to modify the proof of Theorem \ref{thm:big}
to show though $\psi'_1$ satisfies a weaker condition than did
$\psi_1$,
the dual witness
$\psi_3$ nonetheless
satisfies the following properties.

Let $B_{-1} = \{z_i \in \{-1, 1\}^{n^{1-2/d}\cdot t}: \psi_3(z_i) < 0, \OR_t(G, \dots, G)(z_i) = 1\}$,
and let $B_{1} = \{z_i \in \{-1, 1\}^{n^{1-2/d}\cdot t}: \psi_3(z_i) > 0, \OR_t(G, \dots, G)(z_i) = -1\}$.
Then:
\begin{itemize}
\item $\sum_{z_i \in B_{-1}} |\psi_3(z_i)| \leq 2^{-t}$.
\item $\sum_{z_i \in B_{1}} |\psi_3(z_i)| \leq t \cdot 2n^{1-2/d}/n^{2}$.
\end{itemize}

That is, $\psi_3$ has error exponentially small in $t$ on one side, and the error
on the other side blows up by at most a factor of $t$ relative to $\psi_1$. This 
permits us to obtain a variant of \eqref{eq:baderror}, namely:

\begin{equation} \label{eq:inductive} \sum_{z \in A_{-1}}|\psi_5(z)| \leq 2tM/n^{2},\end{equation}
where as above $A_{-1}$ is defined via:
$$A_{-1} = \{z \in \{-1, 1\}^{n^{1-1/d}}: \psi_5(z) < 0, \ORAND_{d-1, n^{1-1/d}}(z) = 1\}.$$
This completes the induction and the proof.

\end{proof}

With Claim \ref{claim:induction} in hand, we can construct $\psi_7$ as in the proof of the $d = 3$ case to obtain Theorem \ref{thm:andor}. That is, we define $\psi_6$ to be a dual witness to the high one-sided approximate
 degree of $\AND_M$, and we define $\psi_7$ exactly as in 
 \eqref{eq:zetaprimerepeat}.
 
 As before, $\psi_7$ has $L_1$-norm 1 and pure high degree at least
$d_5 \cdot d_6 = \Omega\left(n^{1/2}/\log^{(d-2)/2}(n)\right)$.
Here, $d_5=\Omega\left(n^{(1-1/d)/2}/\log^{(d-2)/2}(n)\right)$
denotes the pure high degree of $\psi_5$ and $d_6=\Omega\left(M^{1/2}\right)$ denotes
the pure high degree of $\psi_6$. Finally, the analysis establishing that $\psi_7$ has high correlation with $\ANDOR_{d, n}$ is the same as in the case of $d=3$.

 \end{proof}

\section{Lower Bounds for Read-Once DNFs}
\label{sec:readonce}
In this section we derive new approximate degree and degree-$d$ threshold weight lower bounds for read-once DNF formulas. The lower bounds we prove are essentially identical to those proved by Beigel \cite{beigel93} and Servedio et al. \cite{thalercolt} for the \emph{decision list} \omb, which is not computable by a read-once DNF. Our first construction (Corollary \ref{cor:colt}) yields a degree-$d$ threshold weight lower bound of $2^{\Omega(\sqrt{n/d})}$, matching the lower bound proved by Servedio et al. for the decision list \omb. In \secref{sec:dnf-tightness}, we show that this is essentially optimal in the ``high-degree'' regime where $d = \Omega(n^{1/3})$.

Our second lower bound (Corollary \ref{cor:DNF-adeg}) exhibits a DNF with $(1-2^{-n/d^2})$-approximate degree $\Omega(d)$, matching Beigel's lower bound
for \omb. 
As we remarked in Section \ref{sec:readonceintro}, for $d < n^{1/3}$, 
Corollary \ref{cor:DNF-adeg} is subsumed
by Minsky and Papert's seminal result exhibiting a read-once
DNF $F$ with threshold degree $\Omega(n^{1/3})$.
However, for $d > n^{1/3}$, it is not subsumed
by Minsky and Papert's result, nor by
Corollary \ref{cor:colt}. While
Corollary \ref{cor:colt} yields a lower bound on the degree-$d$
threshold weight of read-once DNFs, it does not yield a lower bound
on the \emph{approximate-degree} of read-once DNFs.
As described in Section \ref{sec:weightrelations},
while $\adeg_{1 - \frac{1}{w}}(F) > d$
implies that $W(F, d) > w$,
the reverse implication does \emph{not} hold when $w \ll {n \choose d}$
(and in fact the read-once DNF considered in Corollary \ref{cor:colt}
is an explicit example of the reverse implication failing badly).

\eat{As an immediate consequence, \lemref{lem:weight-relations} implies a degree-$d$ threshold weight lower bound of $2^{\Omega(n/d^2)}$ for read-once DNFs. However, when $d < n^{1/3}$, this lower bound is actually subsumed by Minsky and Papert's threshold \emph{degree} lower bound of $n^{1/3}$ for read-once DNFs \cite{mp}; that is, for $d < n^{1/3}$, there is a read-once DNF which does not admit a degree-$d$ PTF of any weight. On the other hand, in the case where $d \ge n^{1/3}$, this threshold weight lower bound is again subsumed by our first new lower bound of $2^{\Omega(\sqrt{n/d})}$.

While the consequences of this lower bound for threshold weight are subsumed by other results, the fact that we actually have an \emph{approximate degree} lower bound yields new insight in its own right. While as discussed in \secref{sec:weight-relations}, approximate degree with high error is closely related to threshold weight, these two notions are \emph{not} equivalent when $w \ll {n \choose d}$. As an example of a separation between these quantities, let $f$ denote the two-level OR-AND tree on $n$ variables, defined as the blockwise composition $\OR_{\sqrt{n}}(\AND_{\sqrt{n}}, \dots, \AND_{\sqrt{n}})$. It is known \cite{hoyer} that $\adeg_{1/3}(f) \le cn^{1/2}$ for some constant $c$, which if the converse of \lemref{lem:weight-relations} were true, would correspond to a constant degree-$(cn^{1/2})$ threshold weight upper bound. Yet our new threshold weight lower bound, \corref{cor:colt}, shows that we in fact have $W(f, cn^{1/2}) = 2^{\tilde{\Omega}(n^{1/4})}$.
}

\eat{As an example of a separation between these quantities, let $f$ denote the \omb\ function on $n$ variables. It is known \cite{ks} that $\deg_{1/3}(f) = O(n^{1/2}\log n)$, which if the converse of \lemref{lem:weight-relations} were true, would correspond to a constant degree-$(n^{1/2}\log n)$ threshold weight upper bound. Yet a result of Servedio et al. \cite{thalercolt} shows that in fact $W(f, n^{1/2}\log n) = 2^{\tilde{\Omega}(n^{1/4})}$.}

\eat{Note that the first
lower bound is stronger when $d < n^{1/3}$, and the second
lower bound is stronger when $d > n^{1/3}$. We thus view our first lower bound as being tight (in a way made precise in \secref{sec:dnf-tightness}) in the ``low-degree'' regime where $d = O(n^{1/3})$, whereas our second lower bound is tight in the ``high-degree'' regime with $d = \Omega(n^{1/3})$.
}

\subsection{Extending the Lower Bound of Servedio et al.
to Read-Once DNFs}

\subsubsection{Hardness Amplification for Approximate Weight}
We now extend our hardness amplification techniques
from approximate degree
to approximate weight.
This extension forms the technical heart of our proof
that the lower bound of Servedio et al. applies to read-once DNFs.

\begin{customthm}{\ref{thm:big-weights}}
Let $f: \{-1, 1\}^m \to \{-1, 1\}$ be a function with one-sided non-constant approximate weight $W_{3/4}^*(f, d) > w$. Let $F: \{-1, 1\}^{mt} \to \{-1, 1\}$ denote the function $\OR_{t}(f, \dots, f)$. Then $F$ has degree-$d$ $(1-2^{-t})$-approximate weight $W_{1-2^{-t}}(F, d) > 2^{-5t}w$.
\end{customthm}

\begin{proof} 
Let $\psi$ be a dual polynomial for $f$ with one-sided error whose existence is guaranteed by the assumption that $W_{3/4}^*(f, d) > w$. 
Then by \thmref{thm:owprelim}, $\psi$ satisfies:

\begin{equation} \label{eq:wproof0} \sum_{x \in \{-1, 1\}^m} \psi(x)f(x) - \frac{3}{4}\sum_{x \in \{-1, 1\}^m} |\psi(x)| > w, \end{equation}
\begin{equation} \label{eq:wproof1} \left| \sum_{x \in \{-1, 1\}^m} \psi(x) \chi_S(x) \right| \le 1   \text{ for each } 0 < |S| \leq d, \end{equation}
\begin{equation} \label{eq:wproof2} \sum_{x \in \{-1, 1\}^m} \psi(x) = 0,
 \text{ and } \end{equation}
\begin{equation} \label{eq:wproof3} \psi(x) \le 0   \text{ for each } x \in f^{-1}(-1).\end{equation}

We will construct a dual solution $\zeta$ that witnesses the fact that $W_{1-2^{-t}}(F, d) > 2^{-5t}w$. Specifically, by \thmref{thm:wprelim}, $\zeta$ must satisfy the following conditions:

\begin{equation} \label{eq:wshow1} \sum_{(x_1, \dots, x_{t}) \in \left(\{-1, 1\}^{m}\right)^t} \zeta(x_1, \dots, x_{t}) F(x_1, \dots, x_{t}) - (1-2^{-t})|\zeta(x_1, \dots, x_{t})| > 2^{-5t}w. \end{equation}
\begin{equation} \label{eq:wshow2}  \left| \sum_{(x_1, \dots, x_{t}) \in \left(\{-1, 1\}^{m}\right)^t}  \zeta(x_1, \dots, x_{t})\chi_S(x_1, \dots, x_t ) \right| \le 1   \text{ for each } |S| \leq d.\end{equation}

As before, let $\Psi : \{-1, 1\}^t \rightarrow \{-1, 1\}$ be defined such that $\Psi(\mathbf{1})=1/2$, $\Psi(-\mathbf{1})=-1/2$, and $\Psi(x)=0$ for all other $x$, where $\mathbf{1}$ denotes the all-ones vector. We define $\zeta: \left(\{-1, 1\}^{m}\right)^{t} \rightarrow \mathbb{R}$ by
\begin{equation} \label{eq:wzeta} \zeta(x_1, \dots, x_{t}) := M_t \Psi(\dots, \sgn(\psi(x_i)), \dots) \prod_{i=1}^{t} |\psi(x_i)|,\end{equation}
where $x_i = (x_{i, 1}, \dots, x_{i, m})$ and $M_t$ is a normalization term to be determined later.

We start with \eqref{eq:wshow2} to determine an appropriate choice of $M_t$. Notice that since $\Psi$ is orthogonal on $\{-1, 1\}^t$ to constant functions, its expected value is 0. Thus, we may write the Fourier representation for $\Psi$ as
$$\Psi(z) =	\sum_{\substack{T \subseteq \{1, \dots, t\} \\ T \neq \emptyset}}	\hat{\Psi}(T) \chi_T(z)$$
for some real numbers $\hat{\Psi}(T)$. We can thus write
\[\zeta(x_1, \dots, x_t) = M_t \sum_{T \neq \emptyset} \hat{\Psi}(T) \prod_{i \in T} \psi(x_i)\prod_{i \notin T} |\psi(x_i)|.\]
Given a subset $S \subseteq \{1, \dots, t\} \times \{1, \dots, m\}$ with $|S| \le d$, partition $S = (\{1\} \times S_1) \cup \dots \cup (\{t\} \times S_t)$ where each $S_i \subseteq \{1, \dots, m\}$. Then
\begin{align*}
\sum_{(x_1, \dots, x_{t}) \in \left(\{-1, 1\}^{m}\right)^t}& \zeta(x_1, \dots, x_{t})\chi_S(x_1, \dots, x_t) \\
&= M_t \sum_{T \neq \emptyset} \hat{\Psi}(T) \prod_{i \in T} \underbrace{\left( \sum_{x_i \in \{-1, 1\}^m} \psi(x_i) \chi_{S_i}(x_i)\right)} \prod_{i \notin T} \left(\sum_{x_i \in \{-1, 1\}^m} |\psi(x_i)| \chi_{S_i}(x_i)\right).
\end{align*}
Since $|S| \le d$, we have that $|S_i| \le d$ for every index $i \in \{1, \dots, t\}$. For each set $T$, each of the underbraced factors is bounded in absolute value by $1$ by (\ref{eq:wproof1}). Writing
\[\|\psi\|_1 := \sum_{x \in \bits^m} |\psi(x)|\]
for notational convenience, we see that
\[\left|\sum_{(x_1, \dots, x_{t}) \in \left(\{-1, 1\}^{m}\right)^t} \zeta(x_1, \dots, x_{t})\chi_S(x_1, \dots, x_t)\right| \le M_t \sum_{T \ne \emptyset} \hat{\Psi}(T) \|\psi\|_1^{t - |T|} \le M_t\cdot t2^{t-1}\|\psi\|_1^{t-1}.\]
Taking $M_t = 2^{-2t}\|\psi\|_1^{1-t}$ gives (\ref{eq:wshow2}).

We now proceed to verify (\ref{eq:wshow1}). Let $\mu$ be the distribution on $\left(\{-1, 1\}^{m}\right)^t$ given by $\mu(x_1, \dots, x_t) = \|\psi\|_1^{-t}\prod_{i=1}^{t} |\psi(x_i)|$. 
Since $\psi$ is orthogonal to the constant polynomial, it has expected value 0, and hence the string $(\dots, \sgn(\psi(x_i)), \dots)$ is distributed uniformly in $\{-1, 1\}^{t}$
when one samples $(x_1, \dots, x_{t})$ according to $\mu$. 
 Observe that
$$ \sum_{(x_1, \dots, x_{t}) \in \left(\{-1, 1\}^{m}\right)^t} \zeta(x_1, \dots, x_{t}) F(x_1, \dots, x_{t})$$
$$= M_t \|\psi\|_1^t\mathbf{E}_\mu [\Psi( \dots, \sgn(\psi(x_i)), \dots) \OR_t\left( \dots, f(x_i), \dots\right)]$$
\begin{equation} \label{weq1andor} = 2^{-3t}\|\psi\|_1 \sum_{z \in \{-1, 1\}^{t}} \Psi(z) \left(\sum_{(x_1, \dots, x_{t}) \in \left(\{-1, 1\}^m\right)^t} \OR_t\left( \dots,  f(x_i), \dots\right) \mu(x_1, \dots, x_{t}|z)\right),\end{equation}
where $\mu(\mathbf{x}|z)$ denotes the probability of $\mathbf{x}$ under $\mu$, conditioned on $(\dots, \sgn(\psi(x_i)), \dots)=z$.

Let $A_{1} = \{x \in \{-1, 1\}^{m}: \psi(x) > 0, f(x) = -1\}$ and $A_{-1} = \{x \in \{-1, 1\}^{m}: \psi(x) < 0, f(x) = 1\}$. Then 
$2\sum_{x \in A_1 \cup A_{-1}} |\psi(x)| < \frac{1}{4} \|\psi\|_1 - w$ because $\psi$ has correlation at least $w + \frac{3}{4} \|\psi\|_1$ with $f$. 

As before, for any $z \in \{-1, 1\}^{t}$, the following two random variables are identically distributed:

\begin{itemize}
\item The string $(\dots, f(x_i), \dots)$ when one chooses $(\dots, x_i, \dots)$ from the conditional distribution 
$\mu(\cdot|z)$. 
\item The string $(\dots, y_iz_i, \dots)$, where $y \in \{-1, 1\}^{t}$ is a random string whose $i$th bit independently
takes on value $-1$ with probability $\frac{2}{\|\psi\|_1} \sum_{x \in A_{z_i}} |\psi(x)| < 1/4 - w/\|\psi\|_1$. 
\end{itemize}

Thus, the correlation is

\begin{equation} \label{weq2andor} 2^{-3t}\|\psi\|_1 \sum_{z \in \{-1, 1\}^{t}} \Psi(z) \cdot \mathbf{E}[\OR_t(\dots, y_iz_i, \dots)],\end{equation}

where $y \in \{-1, 1\}^{t}$ is a random string whose $i$th bit independently
takes on value $-1$ with probability $2 \sum_{x \in A_{z_i}} |\psi(x)| < 1/4 - w/\|\psi\|_1$.
As in the proof of \thmref{thm:big}, the one-sided error (\ref{eq:wproof3}) of the dual witness $\psi$ implies that the input $z=\mathbf{1}$ contributes $\Psi(z) = 1/2$ to Expression (\ref{weq2andor}).
All $z \not\in \{\mathbf{1}, \mathbf{-1}\}$ are given zero weight by $\Psi$ and hence contribute nothing to the sum. 
All that remains is to show that the contribution of the term $z=-\mathbf{1}$ to the sum is $\frac{1}{2} (1- 2^{-2t+1})$.
Since each $y_i=1$ independently with probability at least $3/4 + w/\|\psi\|_1$, and $\OR_t(\dots, -y_i, \dots)=-1$ as long as there is at least one $y_i \neq -1$,
we conclude that  $\mathbf{E}[\OR_t(\dots, y_iz_i, \dots)] \geq 1-2^{-2t+1}$. It follows that  the term corresponding to
$z=-\mathbf{1}$ contributes at least $\frac{1}{2}(1 - 2^{-2t+1})$ to the sum. Thus,
$$2^{-3t}\|\psi\|_1 \sum_{z \in \{-1, 1\}^{t}} \Psi(z) \cdot \mathbf{E}[\OR_t(\dots, y_iz_i, \dots)] \geq 2^{-3t}\|\psi\|_1 \left( \frac{1}{2} + \frac{1}{2} (1-2^{-2t+1}) \right) = 2^{-3t}(1-2^{-2t})\|\psi\|_1.$$

Since $\psi$ is orthogonal to the constant polynomial by \eqref{eq:wproof2}, it has expected value 0, and hence the string $(\dots, \sgn(\psi(x_i)), \dots)$ is distributed uniformly in $\{-1, 1\}^{t}$
when one samples $(x_1, \dots, x_t)$ according to $\mu$. 
Thus,
$$ \sum_{(x_1, \dots, x_t) \in \left(\{-1, 1\}^{m}\right)^t} |\zeta(x_1, \dots, x_t)| = 2^{-3t}\|\psi\|_1 \sum_{z \in \{-1, 1\}^t} |\Psi(z)| = 2^{-3t}\|\psi\|_1,$$.

Now the left-hand side of Expression (\ref{eq:wshow1}) is at least
\[2^{-3t}(1-2^{-2t})\|\psi\|_1 - (1-2^{-t}) \cdot 2^{-3t}\|\psi\|_1 > 2^{-5t}\|\psi\|_1 > 2^{-5t}w,\]
where the last inequality follows from condition (\ref{eq:wproof0}). This completes the proof.

\end{proof}

\subsubsection{Completing the Proof of Corollary \ref{cor:colt}}
We adapt an argument of Servedio et al. to prove the following
one-sided approximate weight lower bound for the function $\AND_n$.

\begin{lemma}\label{lem:and-approx-weight}
Let $d = o(n/\log^2 n)$. Then the function $\AND_n$ has one-sided non-constant approximate weight $W^*_{3/4}(\AND_n, d) = 2^{\Omega(n/d)}$.
\end{lemma}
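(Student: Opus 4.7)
The plan is to adapt the proof technique of Servedio, Tan, and Thaler for the degree-$d$ threshold weight lower bound on the \omb\ function. The adaptation is to the one-sided approximate weight setting for $\AND_n$, and the bound will be polynomially stronger ($2^{\Omega(n/d)}$ rather than $2^{\Omega(\sqrt{n/d})}$) because the one-sided approximation requirement is substantially more restrictive than merely sign-representing.

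First, I would take an arbitrary degree-$d$ polynomial $p$ that is a $3/4$-one-sided approximation to $\AND_n$ and assume for contradiction that its non-constant weight is less than $w = 2^{cn/d}$ for a small constant $c>0$. Exploiting the symmetry of $\AND_n$, I would symmetrize $p$ by averaging over all permutations of the input coordinates; by convexity of the $\ell_1$ norm on coefficients, the symmetrized polynomial $p^{\mathrm{sym}}$ has non-constant weight no larger than that of $p$, and by Claim \ref{claim:sym-approx} it remains a $3/4$-one-sided approximation. So I may assume $p$ is symmetric.

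Second, since $p$ is a symmetric multilinear polynomial of degree $d$ on $\bits^n$, it has the form $p(x)=q(k)$ where $k$ counts the $-1$'s in $x$ and $q$ is a univariate polynomial of degree at most $d$. The one-sided approximation conditions translate to
\[
q(k) \in [1/4,\,7/4]\quad \text{for }k=0,1,\dots,n-1, \qquad q(n)\le -1/4.
\]
Writing $q$ in the binomial basis $q(k) = \sum_{j=0}^d b_j\binom{k}{j}$, and using the identity $\binom{k}{j} = \sum_{|S|=j}\prod_{i\in S}\tfrac{1-x_i}{2}$, I would get a clean formula expressing each multilinear coefficient $c_T$ of $p$ (for $|T|=t\ge 1$) in terms of the $b_j$'s and binomial factors $\binom{n-t}{j-t}$. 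In particular, the non-constant weight of $p$ equals $\sum_{t\ge 1}\binom{n}{t}|c_T|$, and I can relate this sum to $\sum_j |b_j|$ up to polynomial factors.

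The main obstacle, and the technical heart of the argument, is to lower bound this weight from the univariate constraints on $q$. The approach is a Markov--Bernstein-type inequality for polynomials bounded on discrete sets (in the spirit of Coppersmith--Rivlin and Ehlich--Zeller): a degree-$d$ polynomial $q$ that stays in $[1/4,7/4]$ on the $n$ consecutive integers $0,1,\dots,n-1$ and then plunges below $-1/4$ at $n$ must have coefficient $\ell_1$ weight (in the binomial basis, or equivalently in the monomial basis after rescaling) at least $2^{\Omega(n/d)}$. Intuitively, the only way a degree-$d$ polynomial can drop by a constant amount across a single unit step while being nearly constant on the previous $n-1$ unit steps is by behaving like a rescaled Chebyshev polynomial extrapolated slightly outside its interval of boundedness; quantitatively, this forces exponentially large oscillation of $q$ on the real line and hence exponentially large coefficients. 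The transfer from this univariate coefficient bound to the multilinear weight of $p$ then goes through the explicit relationship derived above, yielding $w\ge 2^{\Omega(n/d)}$, which contradicts our assumption and completes the proof. The finicky part will be tracking constants carefully so that the exponent $cn/d$ survives both the symmetrization step and the basis change between $b_j$'s and multilinear coefficients.
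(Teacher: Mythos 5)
Your high-level instinct is in the right spirit---the paper's argument is indeed a weighted Markov--Bernstein bound adapted from Servedio, Tan, and Thaler---but the specific route you sketch is different from the paper's and, as written, has a real gap.

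The crux is your basis change. You symmetrize by averaging over permutations, pass to $p(x)=q(k)$ where $k$ is the Hamming weight, and then try to transfer a coefficient lower bound on $q$ (in the binomial or monomial basis in $k$) back to the multilinear $\ell_1$ weight $\sum_{t\ge 1}\binom{n}{t}|c_t|$ of $p$. That transfer is the unsupported step. The map between the two coefficient systems is essentially the Kravchuk transform, and the $\binom{n}{t}$ factors in the multilinear weight have logarithm up to $d\log n$, which for $d$ near the allowed ceiling $n/\log^2 n$ is $\tilde\Theta(n)$---as large as, or larger than, the $n/d$ you are trying to extract. So ``related up to polynomial factors'' is not something you can assert; the cancellations in that basis change are exactly what you would need to control, and nothing in the sketch controls them. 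A second, related difficulty is that a Coppersmith--Rivlin bound will not help you in the regime $d\gg\sqrt n$: the polynomial $q$ is already bounded by $O(1)$ on all of $\{0,\dots,n\}$ (since $p$ is a $3/4$-approximation), so there is no ``blow-up outside the grid'' to exploit, and passing from discrete to continuous boundedness costs a factor $e^{O(d^2/n)}$ which is itself exponentially large when $d$ approaches $n/\log^2 n$.

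The paper sidesteps both problems by choosing a different univariate reduction: define $P(t) := \mathbb{E}_{x\sim\mu_t}[p(x)]$, i.e.\ substitute $t$ for \emph{every} variable $x_i$. Then the degree-$j$ coefficient of $P$ is $\sum_{|T|=j}c_T$, so $P$ has non-constant weight at most $w$ \emph{automatically}---no Kravchuk transfer, no polynomial-factor bookkeeping. Moreover $|P(t)|\le 7/4$ on all of $[-1,1]$ by the probabilistic (convexity) interpretation, so one is already on a continuous interval and Lemma~\ref{lem:weighted-markov} applies directly: $|P'(t)|=O(d\max\{\log w,\log d\})$ on $[-1,1]$. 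The punch line is to evaluate $P$ at $t_0=-1+2/n$, where $\Pr_{x\sim\mu_{t_0}}[x=-\mathbf{1}]<1/e$ forces $P(t_0)\ge 1-2/e$, while $P(-1)=p(-\mathbf{1})\le -1/4$, so the mean value theorem gives a point with $P'\ge n/4$, yielding $d\max\{\log w,\log d\}=\Omega(n)$.

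Two smaller issues in your sketch: (i) you never handle the constant term---the paper first normalizes $p$ (rescaling if $p(-\mathbf{1})\le -7/4$) so that $p$ is a genuine $3/4$-approximation, which is needed both to bound $|P|$ on $[-1,1]$ and to bound the constant coefficient of $P$ by $w+7/4$; your sketch keeps only the non-constant weight assumption and cannot invoke Lemma~\ref{lem:weighted-markov}, which needs an absolute-value bound on \emph{all} coefficients. (ii) Your framing that the stronger $2^{\Omega(n/d)}$ exponent ``because the one-sided approximation requirement is substantially more restrictive than sign-representing'' is not the operative reason; the improvement over the \omb\ bound comes from $\AND_n$ being a single conjunction whose symmetrization has a single sharp drop, not from one-sidedness per se.
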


Our proof of Lemma \ref{lem:and-approx-weight} follows a symmetrization argument due to Servedio et al. \cite{thalercolt}. The key in their proof is the following Markov-type inequality that gives a sharp bound on the derivative of a bounded polynomial in terms of both its degree and weight.

\begin{lemma}[\cite{thalercolt}, Lemma 1]\label{lem:weighted-markov}
Let $P: \R \to \R$ be a degree-$d$ polynomial such that
\begin{enumerate}
\item The coefficients of $P$ each have absolute value at most $w$, and
\item $1/2 \le \max_{x \in [-1, 1]} |p(x)| \le R$.
\end{enumerate}
Then $\max_{x \in [-1, 1]} |p'(x)| = O(d\cdot R\cdot\max\{\log W, \log d\})$.
\end{lemma}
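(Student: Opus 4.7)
The plan is to prove a weighted Markov-type inequality that refines the classical bound $\|p'\|_\infty \le d^2 R$ by exploiting the coefficient bound, replacing $d^2$ with $d \log(\max\{W, d\})$. The improvement is heuristically reasonable because the Chebyshev polynomial $T_d$, which attains equality in Markov's inequality, has coefficients as large as $2^{d-1}$; restricting each $|a_i|$ to $W$ should therefore preclude the extremal behavior unless $W$ is already exponentially large in $d$.

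First I would reduce to bounding $|P'(x_0)|$ for $x_0 \in [0, 1]$ by symmetry, and further reduce to the endpoint $x_0 = 1$ by the standard observation that the extremum of $|P'|$ on $[-1,1]$ (among polynomials normalized by $\|P\|_\infty$) is attained at an endpoint. Set $L := \max\{\log W, \log d\}$ and split the analysis at distance $\delta := 1/L^2$ from the endpoint. In the interior regime $|x| \le 1-\delta$, Bernstein's inequality $|P'(x)| \le dR/\sqrt{1-x^2}$ already gives $|P'(x)| = O(dRL)$, so the interior case is routine.

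The interesting regime is the boundary $x \in (1-\delta, 1]$, where Bernstein is too weak. Here the plan is to decompose $P = P_{\le k} + P_{> k}$ with $k = \Theta(L)$ chosen so that $2^k \gtrsim Wd$. The low-frequency piece $P_{\le k}$ has degree $k$, and a short Chebyshev-expansion argument (using $\|P\|_\infty \le R$ together with the coefficient bound to control the tail $\|P_{> k}\|_\infty$) yields $\|P_{\le k}\|_\infty = O(R)$. Applying the classical Markov inequality to $P_{\le k}$ then gives $|P_{\le k}'(1)| = O(k^2 R) = O(L^2 R)$, which is $O(dRL)$ in the only nontrivial regime $L \le d$. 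The high-frequency tail $P_{> k}'(1) = \sum_{i > k} i\, a_i$ is bounded directly by exploiting the coefficient hypothesis $|a_i| \le W$ together with cancellation implicit in the constraint $\|P\|_\infty \le R$: because the Chebyshev coefficients of $P$ are bounded by $2R$, the monomial coefficients $a_i$ for $i > k$ must themselves have geometrically decaying envelope once $k$ is past a threshold determined by $\log W$, and summing $\sum_{i>k} i \cdot |a_i|$ against this decay contributes $O(dR)$.

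The main obstacle is reconciling the two bases in the endpoint regime: the bound on $\|P_{\le k}\|_\infty$ is naturally phrased in the Chebyshev basis, while the coefficient hypothesis and the derivative formula $P'(1) = \sum i\, a_i$ live in the monomial basis. Translating between the two without losing the $\log W$ savings is delicate, and the cutoff $k$ must be tuned precisely so that the monomial-side tail estimate on $P_{>k}'(1)$ matches the Chebyshev-side smallness of $\|P_{>k}\|_\infty$. An alternative route that avoids this mismatch is to work entirely with a dyadic decomposition of the interval $[1-\delta, 1]$ into $O(L)$ scales, applying Bernstein at each scale to $P$ restricted (via rescaling) to a sub-interval where its $\ell_\infty$ norm is controlled by the coefficient bound; this produces the same $O(dRL)$ estimate and is, I suspect, the cleanest way to carry out the plan in practice.
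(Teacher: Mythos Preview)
The paper does not prove this lemma; it is quoted verbatim from \cite{thalercolt} and used as a black box in the proof of Lemma~\ref{lem:and-approx-weight}. So there is no ``paper's own proof'' to compare against, and the relevant question is simply whether your sketch would go through.

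Your interior estimate is fine: Bernstein on $[-1,1]$ with $\delta=1/L^{2}$ indeed gives $|P'(x)|\le dR/\sqrt{1-x^2}=O(dRL)$ for $|x|\le 1-\delta$. The difficulty is entirely at the endpoint, and here both of your proposed routes have real gaps.

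In the main route you assert that the monomial truncation $P_{\le k}=\sum_{i\le k}a_i x^i$ satisfies $\|P_{\le k}\|_\infty=O(R)$, arguing that the tail $P_{>k}$ is small. But the only control you have on $P_{>k}$ at $x=1$ is $|P_{>k}(1)|\le |P(1)|+|P_{\le k}(1)|\le R+(k{+}1)W$, which is not $O(R)$ when $W\gg R$ (and the regime $W\gg R$ is exactly the interesting one). Likewise, the claim that the monomial coefficients $a_i$ acquire a ``geometrically decaying envelope once $i$ passes a threshold determined by $\log W$'' does not follow from $|c_j|\le 2R$ in the Chebyshev basis: converting Chebyshev to monomial coefficients introduces factors as large as $2^{j}$, so bounded Chebyshev coefficients are compatible with monomial coefficients that are flat (all equal to $W$) right up to degree $d$. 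Without that decay you cannot bound $\sum_{i>k} i\,a_i$ by anything better than $Wd^{2}$.

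The ``alternative route'' via dyadic scales does not obviously repair this. On each dyadic sub-interval of $[1-\delta,1]$ you only know $\|P\|\le R$ (the sub-interval sits inside $[-1,1]$), so rescaled Bernstein at distance $\eta$ from the right endpoint still gives $|P'|\lesssim dR/\sqrt{\eta}$, which blows up as $\eta\to 0$. Making the last sub-interval extend past $1$ and invoking the coefficient bound there yields $\|P\|_{[1,1+\varepsilon]}\le W(d{+}1)(1+\varepsilon)^d$, and plugging this into Bernstein at $x=1$ produces a bound of order $d^{5/2}W$ rather than $dRL$. So the sketch, as written, does not close.
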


\begin{proof}[Proof of \lemref{lem:and-approx-weight}]
Let $p: \R^n \to \R$ be a real polynomial with degree $d$ and non-constant weight $w$ that has one-sided distance at most $3/4$ from $\AND_n$. Specifically, $p(-\mathbf{1}) \le -1/4$ and $1/4 \le p(x) \le 7/4$ at all other Boolean inputs. We will show that $w = 2^{\Omega(n/d)}$. First observe that if $p(-\mathbf{1}) \le -7/4$, then the polynomial
\[q(x) = \frac{2(p(x) - 1)}{|p(-\mathbf{1}) - 1|} + 1\]
is a true $(3/4)$-approximation to $\AND_n$ with weight smaller than $w + 1$, so we can assume without loss of generality that $p$ is in fact a $(3/4)$-approximation to $\AND_n$.

Define the univariate polynomial
\[P(t) := \E_{x \gets \mu_t}[p(x)]\]
where $\mu_t$ is the product distribution over $\bits^n$ where each coordinate $x_j$ is independently set to $1$ with probability $(1+t)/2$. Notice that $P(t)$ is obtained from the multivariate expansion of $p(x_1, \dots, x_n)$ by replacing each variable $x_i$ with $t$. It is readily verified that $P$ satisfies the following properties.

\begin{enumerate}
\item $P(-1) = p(-\mathbf{1})$ and $P(1) = p(\mathbf{1})$,
\item $|P(t)| \le \frac{7}{4}$ for all $t \in [-1, 1]$, and
\item $\deg P \le \deg p = d$.
\item $P$ has non-constant weight at most $w$.
\end{enumerate}
By combining properties (1) and (4), we additionally see that the constant term $P(0)$ has absolute value at most $w + \frac{7}{4}$.  We can then verify that $P$ satisfies the conditions of \lemref{lem:weighted-markov}.
\begin{enumerate}
\item The coefficients of $P$ each have absolute value at most $w + \frac{7}{4}$ and
\item $1/2 \le \max_{x \in [-1, 1]}|P(t)| \le \frac{7}{4}$.
\end{enumerate}
Thus we conclude that $|P'(t)| = O(d\max\{\log w, \log d\})$ for $t \in [-1, 1]$. On the other hand, at $t_0 = -1 + 2/n$, we have $\Pr_{x \gets \mu_{t_0}}[x = -1^n] = (1 - \frac{1}{n})^n < 1/e$, so $P(t_0) \ge 1 - \frac{2}{e}$. Since $P(-1) = p(-\mathbf{1}) \le -\frac{1}{4}$, by the mean value theorem, there is some $t \in [-1, t_0]$ where $P'(t) \ge \frac{n}{4}$. Thus we have $d\max\{\log w, \log d\} = \Omega(n)$, and hence $w = 2^{\Omega(n/d)}$ as long as $d = o(n/\log^2 n)$.
\end{proof}

Finally, we are in a position to prove Corollary \ref{cor:colt}, restated
here for the reader's convenience.

\corcolt*

\begin{proof}
Set $m = \alpha\sqrt{nd}$ where $\alpha$ is a constant to be determined later, and let $t = n/m = \Omega(\sqrt{n/d})$. Let $F = \OR_t(\AND_m, \dots, \AND_m)$. By \lemref{lem:and-approx-weight}, the inner function $\AND_m$ has degree-$d$ one-sided non-constant approximate weight $W^*_{3/4}(\AND_m, d) = 2^{\beta m/d}$ for some constant $\beta$. Since $d = o(m/\log^2 m)$, by \thmref{thm:big-weights} the composed function $F$ has degree-$d$ approximate weight 
$$W_{1-2^{-t}}(F, d) = 2^{-5t + \beta m/d} = 2^{(-5/\alpha + \beta)\sqrt{n/d}}.$$
Setting $\alpha > 5/\beta$, we get that this approximate weight is greater than $1$. By \lemref{lem:weight-relations}, we have that $W(F, d) > 2^{-t} = 2^{\Omega(\sqrt{n/d})}$.
\end{proof}

\subsection{Extending Beigel's Lower Bound to Read-Once DNFs}

\corDNFadeg*

\begin{proof}
Let $m=d^2$, $t = n/d^2$, and $f=\AND_{m}$. Then Theorem \ref{thm:big} guarantees that $$\adeg_{1-2^{-t}}\left(\OR_{t}(\AND_m, \dots, \AND_m)\right) > \odeg(f).$$
By Fact \ref{fact:odegand}, the one-sided approximate degree of $f$ is $\Omega(\sqrt{m})$. This completes the proof.
\end{proof}

\eat{A threshold weight lower bound matching the original statement of Beigel's result for \omb\ follows immediately from Corollary \ref{cor:DNF-adeg}
and Lemma \ref{lem:weight-relations}.

\begin{corollary} \label{cor:beigel}
For each $d>0$, there is a read-once DNF $F$
satisfying $W(F, d) = \exp\left(\Omega\left(n/d^2\right)\right)$. 
\end{corollary}
}

\subsection{On the Tightness of Corollaries \ref{cor:colt} and \ref{cor:DNF-adeg}}\label{sec:dnf-tightness}

In \secref{sec:tightness}, we showed that \corref{cor:ed-amp} is essentially tight by exhibiting a nearly-matching upper bound
based on rational approximations. A similar construction
shows that any DNF of top fan-in $t$ is computed by a PTF
of degree $\tilde{O}(t)$ and weight $\exp\left(\tilde{O}(t)\right)$.
This construction immediately shows that Corollary \ref{cor:colt}
is tight (up to logarithmic factors) for all $d > n^{1/3}$. Indeed, the DNF $F$ 
for which Corollary \ref{cor:colt}
demonstrates $W(F, d) \geq \exp(\Omega(\sqrt{n/d}))$ has top fan-in
$t=\sqrt{n/d}$, which is less than $d$ for all $d > n^{1/3}$. 
This construction also reveals
a sharp thresholding phenomenon for the read-once DNFs 
considered in Corollaries \ref{cor:colt} and \ref{cor:DNF-adeg} that is similar to the one observed for the depth-three circuit
considered in \secref{sec:sharp-threshold}.

However, we can provide an alternative construction that demonstrates the tightness of both Corollaries \ref{cor:colt} and \ref{cor:DNF-adeg}. Specifically, rather than utilizing rational approximation techniques, we can construct a PTF for a read-once DNF by composing a PTF for the top $\OR$ gate with low-degree (polynomial, rather than rational) pointwise approximations to each of the individual terms. We provide this 
construction because of its power to explain why the lower bounds
of Corollaries \ref{cor:colt} and \ref{cor:DNF-adeg} take their particular forms.

Fix any function $f: \{-1, 1\}^m \rightarrow \{-1, 1\}$, and let $p: \{-1, 1\}^m \rightarrow \{-1, 1\}$ be a polynomial of degree $d$
and weight $w$ such that $|p(x) - f(x)| < 1/t$ for all $x \in \bits^m$. 
Let $F(x_1, \dots, x_t)=\OR_t(f(x_1), \dots, f(x_t))$.
Then for $(x_1, \dots, x_t) \in \{-1, 1\}^{m \cdot t}$, the identity $F(x_1, \dots x_t) = \operatorname{sgn}(1 - t + \sum_{i=1}^t p(x_i))$ yields a PTF for 
$F$ of degree at most $d$ and weight at most $tw + t+1$.

Recall that Corollary \ref{cor:colt}
yields a lower bound of $W(F, d) = \exp\left(\Omega(\sqrt{n/d})\right)$, where $F$ is the read-once DNF with top fan-in roughly $t=\sqrt{n/d}$ and bottom fan-in roughly $m=\sqrt{nd}$. Servedio et al. \cite{thalercolt} showed that for any $d > m^{1/2}$, there is a polynomial $p$
of degree $\tilde{O}(d)$ and weight $\exp\left(\tilde{O}(m/d + \log t)\right)=\exp\left(\tilde{O}\left(\sqrt{n/d}\right)\right)$
that approximates the function $\AND_m$ to error $1/t^2$. 
Hence, as long as $d > n^{1/3}$, the polynomial
$1 - t + \sum_{i=1}^t p(x_i)$ is a 
PTF for $F$ of degree $\tilde{O}(d)$ and weight $\exp\left(\tilde{O}(\sqrt{n/d})\right)$, showing that \corref{cor:colt} is tight up to logarithmic factors.

Similarly, recall that Corollary \ref{cor:DNF-adeg} yields a lower bound of $\adeg_{1-2^{n/d^2}}(F) = \Omega(d)$, where $F$ is the read-once DNF with top fan $t=n/d^2$ and bottom fan-in $m=d^2$. It is well-known that a transformation of the Chebyshev polynomials 
yields a polynomial $p$ of degree $\tilde{O}(m^{1/2})$ and weight $\exp\left(\tilde{O}(m^{1/2} + \log t)\right)$ that approximates $\AND_m$ to error better than
$1/t^2$ (see e.g. \cite{ksomb}). Hence, $1 - t + \sum_{i=1}^t p(x_i)$ is a 
PTF for $F$ of degree $\tilde{O}(d)$ and weight $\exp(\tilde{O}(d + \log t)) = \exp(\tilde{O}(n/d^2))$ when $d < n^{1/3}$. The transformation of \lemref{lem:weight-relations} then shows that \corref{cor:DNF-adeg} is tight up to logarithmic factors in this parameter range.

\eat{
In \secref{sec:tightness}, we showed that \corref{cor:ed-amp} is essentially tight by exhibiting a nearly-matching upper bound
based on rational approximations. A similar construction
shows that any DNF of top fan-in $t$ is computed by a PTF
of degree $\tilde{O}(t)$ and weight $\exp\left(\tilde{O}(t)\right)$.
This immediately shows that
Corollary \ref{cor:DNF-adeg} is tight (up to logarithmic factors)
for all $d < n^{1/3}$. Indeed, the DNF $F$ for which Corollary \ref{cor:DNF-adeg}
guarantees $W(F, d) \geq \exp(\Omega(n/d^2))$ has top fan-in
$n/d^2$, which is at most $d$ for $d < n^{1/3}$. 
Similarly, such a construction shows that Corollary \ref{cor:colt}
is tight (up to logarithmic factors) for all $d > n^{1/3}$, as the DNF $F$ 
for which Corollary \ref{cor:colt}
demonstrates $W(F, d) \geq \exp(\Omega(\sqrt{n/d}))$ has top fan-in
$\sqrt{n/d}$, which is less than $d$ for all $d > n^{1/3}$. 
This construction also reveals
a sharp thresholding phenomenon for the read-once DNFs 
considered in Corollaries \ref{cor:DNF-adeg} and \ref{cor:colt} that is similar to the one observed for the depth-three circuit
considered in \secref{sec:sharp-threshold}.

However, we can provide an alternative construction that demonstrates the tightness of Corollaries \ref{cor:DNF-adeg} and \ref{cor:colt}. Rather than going by way of a rational approximation, we can construct a PTF for a read-once DNF by composing a PTF for the top $\OR$ gate with true \emph{polynomial} pointwise approximations to the individual terms. We provide this 
construction because of its power to explain why the lower bounds
of Corollaries \ref{cor:DNF-adeg} and \ref{cor:colt} take the form that they do.

Fix any function $f: \{-1, 1\}^m \rightarrow \{-1, 1\}$, and let $p: \{-1, 1\}^m \rightarrow \{-1, 1\}$ be a polynomial of degree $d$
and weight $w$ such that $|p(x) - f(x)| < 1/t$ for all $x \in \bits^m$. 
Let $F(x_1, \dots, x_t)=\OR_t(f(x_1), \dots, f(x_t))$.
Then for $(x_1, \dots, x_t) \in \{-1, 1\}^{m \cdot t}$, the identity $F(x_1, \dots x_t) = \operatorname{sgn}(1 - t + \sum_{i=1}^t p(x_i))$ yields a PTF for 
$F$ of degree at most $d$ and weight at most $tw + t+1$.

Recall that Corollary \ref{cor:DNF-adeg} yields a lower bound of $W(F, d) = \exp\left(\Omega(n/d^2)\right)$, where $F$ is the read-once DNF with top fan $t=n/d^2$ and bottom fan-in $m=d^2$. It is well-known that a transformation of the Chebyshev polynomials 
yields a polynomial $p$ of degree $\tilde{O}(m^{1/2})$ and weight $\exp\left(\tilde{O}(m^{1/2} + \log t)\right)$ that approximates $\AND_m$ to error better than
$1/t$ (see e.g. \cite{ksomb}). Hence, $1 - t + \sum_{i=1}^t p(x_i)$ is a 
PTF for $F$ of degree $\tilde{O}(d)$ and weight $\exp(\tilde{O}(d + \log t)) = \exp(\tilde{O}(n/d^2))$ when $d < n^{1/3}$,
showing that \corref{cor:DNF-adeg} is tight up to logarithmic factors in this parameter range.

Similarly, recall that Corollary \ref{cor:colt}
yields a lower bound of $W(F, d) = \exp\left(\Omega(\sqrt{n/d})\right)$, where $F$ is the read-once DNF with top fan-in roughly $t=\sqrt{n/d}$ and bottom fan-in roughly $m=\sqrt{nd}$. Servedio, Tan, and Thaler \cite{thalercolt} showed that for any $d > m^{1/2}$, there is a polynomial $p$
of degree $\tilde{O}(d)$ and weight $\exp\left(\tilde{O}(m/d + \log t)\right)=\exp\left(\tilde{O}\left(\sqrt{n/d}\right)\right)$
that approximates the function $\AND_m$ to error $1/t$. 
Hence, as long as $d > n^{1/3}$, the polynomial
$1 - t + \sum_{i=1}^t p(x_i)$ is a 
PTF for $F$ of degree $\tilde{O}(d)$ and weight $\exp\left(\tilde{O}(\sqrt{n/d})\right)$, showing that \corref{cor:colt} is tight up to logarithmic factors.
}

\section{Applications}
\label{sec:applications}
In this section, we detail applications of the results described above
to communication complexity, circuit complexity, and computational
learning theory.

\subsection{Communication Complexity}
Let $f: X \times Y \rightarrow \{-1, 1\}$, where $X$ and $Y$ are finite sets. Consider a two-party communication
problem in which
Alice is given an input $x \in X$, Bob is given an input $y \in Y$, and their goal is to compute
$f(x, y)$ with probability $1/2 + \beta$ for some bias $\beta > 0$. Alice and Bob each have access to an arbitrarily
long sequence of private random bits, and the cost $C(P)$ of a protocol $P$ is the worst-case number of 
bits they must exchange over all inputs $(x, y) \in X \times Y$. Babai et al. \cite{bfs} defined the \emph{PP communication} model to capture the complexity of computing $f$ with small bias. The PP communication
complexity of $f$, denoted by $\PP(f)$, is the minimum value of $C(P) + \log(1/\beta(P))$ over all protocols $P$ that compute $f$ with positive bias.

It is well known \cite{klauckquantum} that PP communication is essentially characterized by discrepancy: If $f: \bits^n \times \bits^n \to \bits$, then $\PP(f) = \Theta\left(\log \left(1/\disc(f)\right) + \log n\right)$.
It follows immediately that our $\exp\left(-\tilde{\Omega}(n^{2/5})\right)$ upper bound on
the discrepancy of an \acz\ function $f$ implies an
$\tilde{\Omega}(n^{2/5})$ lower bound on $\text{PP}(f)$. 
The previous best lower bound on $\text{PP}(f)$ for an \acz\ function
$f$ was $\Omega(n^{1/3})$ \cite{patternmatrix, bvdw}.

\subsection{Circuit Complexity}
Constant-depth circuits of majority gates are known to be surprisingly powerful. 
Most strikingly, Allender \cite{allender} showed that any function in \acz\ can be computed
by a depth three circuit of majority gates of quasipolynomial size. This
prompted Krause and Pudl{\'a}k \cite{krausepudlak} to ask whether 
every \acz\ function could be computed by depth \emph{two} majority 
gates of polynomial size. This question was resolved in the negative by Sherstov \cite{majmaj},
who exhibited an \acz\ function that cannot be computed even by majority-of-threshold circuits
of size $\exp(n^{1/5})$ (later sharpened to $\exp(n^{1/3})$ \cite{patternmatrix}), and
 independently by Buhrman, Vereshchagin, and de Wolf \cite{bvdw}, who obtained an $\exp(n^{1/3})$
lower bound on the size of majority-of-threshold circuits computing a different \acz\ function.

It is well-known that a discrepancy upper bound for $F$ yields a lower bound on the size of majority-of-threshold circuits
computing $F$ \cite{ghr, hajnal, nisan, majmaj}, and indeed, the circuit lower bounds 
of \cite{majmaj, patternmatrix,bvdw} are all proved using discrepancy. Through this connection, our discrepancy upper bound of Corollary \ref{cor:discrepancy} sharpens the previous lower bounds by yielding a depth-four Boolean circuit $F$ of polynomial
size
such that any majority-of-threshold circuit computing $F$ requires size $\exp\left(\tilde{\Omega}(n^{2/5})\right)$.

\begin{restatable}{corollary}{corcircuits}\label{cor:majthresh} There is a depth-four Boolean circuit $F: \{-1, 1\}^n \rightarrow \{-1, 1\}$ 
 of size $\poly(n)$ such that every majority-of-threshold circuit computing $F$ 
 has size $\exp\left(\tilde{\Omega}(n^{2/5})\right)$.
\end{restatable}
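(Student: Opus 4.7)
The plan is to combine Corollary \ref{cor:discrepancy} with the classical translation of discrepancy upper bounds into lower bounds on the size of majority-of-threshold circuits. Let $F':\bits^{8n}\to\bits$ be the depth-four polynomial-size Boolean function produced by Corollary \ref{cor:discrepancy}; splitting its input variables into the $4n$-vs-$4n$ halves that the pattern-matrix method assigns to Alice and Bob, we have $\disc(F') \le \exp(-\tilde\Omega(n^{2/5}))$. The first step is simply to hand off $F'$ as a candidate for the depth-four function $F$ promised by the corollary.

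The second step is to invoke the standard implication, due to Hajnal, Maass, Pudl\'ak, Szegedy, and Tur\'an \cite{hajnal} and independently Nisan \cite{nisan} (see also \cite{ghr, majmaj, patternmatrix}), that any Boolean function computable by a majority of $s$ halfspace (threshold) gates has discrepancy at least $\Omega(1/s)$ under every partition of its input variables into two halves. The intuition is that if $F=\mathrm{MAJ}(h_1,\dots,h_s)$ with each $h_i$ a halfspace, then for any distribution $\mu$ some $h_i$ must correlate with $F$ to bias at least $1/s$, and a single halfspace has sign-rank one so its correlation with $F$ over any rectangle is controlled by $\disc_\mu(F)$. Plugging in the bound on $\disc(F')$ immediately forces $s \ge \exp(\tilde\Omega(n^{2/5}))$.

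There is essentially no obstacle here beyond citing the right previously established lemma: the hard work has already been done in proving Corollary \ref{cor:discrepancy}, and the discrepancy-to-circuit-size reduction has been applied in precisely this way by Sherstov \cite{majmaj, patternmatrix} and Buhrman, Vereshchagin, and de Wolf \cite{bvdw} to obtain the earlier $\exp(\Omega(n^{1/3}))$ lower bounds. Our quantitative improvement is inherited directly from the improved discrepancy bound, and the fact that $F'$ is in \acz\ of depth four (rather than three, as in \cite{majmaj,patternmatrix,bvdw}) is inherited from the depth blow-up incurred by one application of the pattern-matrix method (Theorem \ref{thm:patternmatrix}) to the depth-three circuit of Corollary \ref{cor:ed-amp}.
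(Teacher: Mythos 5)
Your proof takes the same route as the paper: cite Corollary~\ref{cor:discrepancy} and invoke the standard reduction from discrepancy upper bounds to majority-of-threshold circuit size lower bounds \cite{hajnal,nisan,ghr,majmaj}. One inaccuracy in the stated intuition: a halfspace over a split input does \emph{not} have sign-rank one (the natural factorization of $\operatorname{sgn}(\alpha(x)+\beta(y))$ has rank two, and in any case low sign-rank alone does not bound the halfspace's correlation with $F$ against $\disc_\mu(F)$); the cited reduction actually combines the discriminator lemma with Nisan's $O(\log n)$-bit randomized communication protocol for a single threshold gate, which converts a size-$s$ majority-of-threshold circuit into a small-bias protocol of cost $O(\log s + \log n)$ and hence yields $s \geq 1/(\poly(n)\cdot \disc(F'))$.
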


\subsection{Learning Theory} \label{sec:learning}
Our results have a number of consequences in computational learning theory. We discuss them below.

\paragraph{Technical Background: The Generalized Winnow Algorithm.}

The Generalized Winnow algorithm is one of the most powerful known algorithms for online learning \cite{littlestone, ksomb, thalercolt}. Suppose we are given a concept class $\mathcal{C}$ of functions mapping $n$-bit inputs to $\{-1, 1\}$, as well
as a collection of polynomial-time computable ``feature'' functions $\mathcal{F}$. The Generalized Winnow algorithm learns a concept in $\mathcal{C}$
by maintaining as a hypothesis a low-weight linear threshold function of features in $\mathcal{F}$.

Suppose that each $f \in \mathcal{C}$ has a low-weight linear threshold representation
$$f(x) = \operatorname{sgn}\left(\sum_{h_i \in \mathcal{F}} w_i h_i(x)\right),$$
where each $w_i$ is an integer, and $\sum_i |w_i| \leq W$.
A remarkable property of the Generalized Winnow algorithm is that
its mistake bound depends only \emph{logarithmically} on
the size of the feature set $\mathcal{F}$, and polynomially on the weight bound
$W$ (here the mistake bound refers to the worst-case number of mistakes an online learning algorithm makes over any sequence of examples). Meanwhile, its running time per example is polynomial in the size of the feature set. Standard techniques
can be used to transform any online learning algorithm into a PAC learning algorithm whose sample complexity is 
proportional to the mistake bound. 

\paragraph{PAC Learning \acz\ via Generalized Winnow.}
Valiant famously posed the problem of PAC learning
DNF formulas in his original
paper \cite{valiant} introducing the PAC model. The fastest known algorithm for this problem is due to Klivans and Servedio.
It is based on linear programming,
and takes time $\exp\left(\tilde{O}(n^{1/3})\right)$ \cite{ks}. At the core of this algorithm is a fundamental
structural result for DNFs: Klivans and Servedio
showed that every DNF of size $s$ can be computed
by a polynomial threshold function of degree $O(n^{1/3} \log s)$. 
However, the weight of the PTF arising in 
this construction can grow doubly-exponentially with $n$.
Klivans and Servedio asked whether it is possible that every
polynomial-size DNF has a PTF of degree $\tilde{O}(n^{1/3})$,
and weight $\exp\left(\tilde{O}(n^{1/3})\right)$ -- an affirmative
answer to this question would imply that the 
Generalized Winnow Algorithm (run over the feature set
of all low-degree parities) can also PAC learn DNFs in time  
 $\exp\left(\tilde{O}(n^{1/3})\right)$. Such a result would be attractive,
 as the Generalized Winnow algorithm is substantially simpler 
 than the linear programming algorithm of Klivans and Servedio.
 
 While we do not resolve the question of Klivans and Servedio
 for DNFs, we do resolve it in the negative for depth-three circuits.
 In fact, we rule out the possibility of 
 the Generalized Winnow algorithm PAC learning depth-three Boolean
 circuits in time $\exp\left(\tilde{O}(n^{2/5})\right)$ \emph{regardless}
 of the underlying feature set. That is,
 our lower bound holds even on feature sets
 that are not low-degree parities. 
 
 Specifically, Corollary \ref{cor:ed-ampthresh} implies the following
 result. The proof is identical to \cite[Theorem 8.1]{majmaj}
 and is omitted for brevity.
 
 \eat{\begin{corollary} The Generalized Winnow Algorithm
 requires time $\exp\left(\tilde{\Omega}(n^{2/5})\right)$
 to PAC learn the concept class of 
 polynomial-size depth-three Boolean circuits.
 \end{corollary}
 }
 
 \begin{corollary}
 Let $\mathcal{C}$ denote the concept class of polynomial-size depth-three
 Boolean circuits.
Let $\mathcal{F} = \{h_1, \dots, h_{m} : \{-1,1\}^n \rightarrow
\{-1, 1\}\}$ be arbitrary Boolean functions such that every
$f \in \mathcal{C}$ can be expressed as $f(x) = \operatorname{sgn}\left(\sum_{i=1}^{m} w_i h_i(x)\right)$ for some integers $w_1, \dots , w_m$ with
$|w_1| + \dots+ |w_m| \leq W$. Then
$m \cdot W > \exp\left(\tilde{\Omega}(n^{2/5})\right)$.
\end{corollary}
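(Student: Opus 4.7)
The plan is to invoke Corollary \ref{cor:ed-ampthresh} to furnish an explicit depth-three Boolean circuit $F \in \mathcal{C}$ with $W(F, n^{2/5}) \ge \exp(\tilde{\Omega}(n^{2/5}))$, and then argue that any representation of $F$ of the hypothesized form forces $m \cdot W$ to be exponentially large. Since $F$ lies in the concept class $\mathcal{C}$, by the hypothesis of the corollary it must admit a representation $F(x) = \operatorname{sgn}\bigl(\sum_{i=1}^m w_i h_i(x)\bigr)$ with $\sum_i |w_i| \le W$; the task is to extract from this representation a polynomial threshold function for $F$ whose weight contradicts the degree-$n^{2/5}$ threshold weight lower bound.

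My first step is a reduction to the unweighted case: by duplicating each feature $h_i$ a total of $|w_i|$ times and attaching the sign $\operatorname{sgn}(w_i)$, one may rewrite $F = \operatorname{sgn}\bigl(\sum_{j=1}^W \epsilon_j g_j\bigr)$, where each $\epsilon_j \in \bits$ and each $g_j$ is drawn from the original pool of at most $m$ Boolean features. The polynomial $p(x) := \sum_j \epsilon_j g_j(x)$ is integer-valued, satisfies $|p(x)| \ge 1$ on $\bits^n$, and has $\|p\|_\infty \le W$; hence $F$ is sign-represented by $p$ with margin at least $1$.

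The second and main step is to convert this representation into a degree-$n^{2/5}$ polynomial threshold function in the parity basis whose weight is controlled by $m$ and $W$. Expanding each feature in its Fourier expansion $g_j = \sum_S \hat g_j(S)\chi_S$ yields $p = \sum_S c_S \chi_S$ with $|c_S| \le W$, and a standard rounding/truncation procedure (keeping only coefficients of degree at most $n^{2/5}$, together with appropriate integer rescaling) produces a degree-$n^{2/5}$ PTF for $F$ whose total weight is polynomial in $m$, $W$, and the combinatorial parameters of the truncation. Substituting this weight bound into the lower bound $W(F, n^{2/5}) \ge \exp(\tilde{\Omega}(n^{2/5}))$ from Corollary \ref{cor:ed-ampthresh} then yields $m \cdot W \ge \exp(\tilde{\Omega}(n^{2/5}))$.

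The main obstacle will be the truncation step: an arbitrary Boolean feature can have Fourier $\ell_1$ mass as large as $2^{n/2}$, so one cannot naively sum the Fourier expansions of all features without incurring a catastrophic weight blow-up. The cleanest way to circumvent this, following Sherstov's proof of \cite[Theorem 8.1]{majmaj}, is to route the argument through discrepancy rather than through a direct PTF construction: an integer sign-representation of $F$ with parameters $(m, W)$ yields a cheap PP-style communication protocol for the pattern matrix $F'$ associated to $F$, and the discrepancy upper bound implied by $W(F, n^{2/5})$ via Theorem \ref{thm:patternmatrix} forces $\log(mW) = \tilde{\Omega}(n^{2/5})$, giving the claimed bound.
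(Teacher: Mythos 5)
Your framing---``argue that any representation of $F$ of the hypothesized form forces $mW$ to be exponentially large''---cannot work on its own terms: sign-representing a \emph{single} function is vacuous (take $m=1$, $h_1 = F$, $w_1 = 1$, giving $mW = 1$), so any correct proof must use the full hypothesis that \emph{every} polynomial-size depth-three circuit is simultaneously representable over the one feature set $\mathcal{F}$. Your first two paragraphs never invoke this simultaneity, and the Fourier-truncation idea is a dead end independently of that, as you yourself note: an arbitrary Boolean feature $h_i$ can carry $2^{\Omega(n)}$ Fourier $\ell_1$ mass, so no low-weight, low-degree PTF for $F$ can be extracted from an arbitrary $\mathcal{F}$-representation. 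There is no ``cleanest way to circumvent'' this obstacle; the first two paragraphs are simply a wrong approach and the argument must be rebuilt on the correct foundation.

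Your final paragraph does gesture at that foundation (the route the paper takes by citing \cite[Theorem~8.1]{majmaj}), but still says ``an integer sign-representation of $F$ with parameters $(m,W)$ yields a cheap PP-style communication protocol,'' which again speaks only of the one function $F$. The actual argument forms the pattern-matrix function $F'(x,y) = F(\ldots, \lor_{j=1}^4(x_{i,j}\land y_{i,j}), \ldots)$, uses Theorem~\ref{thm:patternmatrix} together with Corollary~\ref{cor:ed-ampthresh} (via Lemma~\ref{lem:weight-relations}) to get $\disc(F') \le \exp(-\tilde{\Omega}(n^{2/5}))$, and---this is the key point your sketch omits---observes that for every fixed $y$ the row $F'(\cdot, y)$ is itself a polynomial-size depth-three circuit: with $y$ fixed, each inner gadget is a fan-in-$\le 4$ OR of $x$-variables, which merges into $F$'s bottom OR layer. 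The hypothesis therefore hands Bob, for his input $y$, integer weights $w_i^{(y)}$ of $\ell_1$-norm at most $W$ over the $m$ features; Bob samples an index $i$ with probability proportional to $|w_i^{(y)}|$ and announces it, and Alice outputs $\operatorname{sgn}(w_i^{(y)}) h_i(x)$. This is a $(\log m)$-bit protocol computing $F'$ with bias at least $1/(2W)$, so $\PP(F') = O(\log(mW) + \log n)$, and comparing against $\log(1/\disc(F')) = \tilde{\Omega}(n^{2/5})$ gives $\log(mW) = \tilde{\Omega}(n^{2/5})$. Representability of \emph{every} row $F'(\cdot, y)$, not of $F$ alone, is what powers the protocol; this is the idea missing from your proposal.
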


\paragraph{PAC Learning \acz\ via Boosting.}
While an $\exp\left(\tilde{\Omega}(n^{1/3})\right)$-time
algorithm is known for PAC learning polynomial-size DNF formulas,
no $\exp\left(o(n)\right)$-time algorithm is known 
even for learning polynomial-size depth-three Boolean circuits.
A natural approach to this problem is as follows.
Suppose that every function $f$ in a concept class $\mathcal{C}$ can be 
computed by a PTF (of arbitrary degree) over $\{0, 1\}^n$
with weight at most $W$. The well-known \emph{discriminator lemma}
 of Hajnal et al. \cite{hajnal} implies that under \emph{any} distribution, there is some conjunction (possibly
 of width $\Omega(n)$) that has correlation at least $1/W$ with $f$.
 One can then apply an agnostic learning algorithm
 for conjunctions (such as the $\exp\left(\tilde{O}(n^{1/2})\right)$-time
 polynomial regression algorithm of Kalai et al. \cite{kalaiagnostic}), combined
 with standard boosting techniques, to PAC-learn
 $\mathcal{C}$ in time polynomial in $\max\left(\exp\left(\tilde{O}(n^{1/2})\right), W\right)$.
 
 Thus, if one could prove an $\exp(\tilde{O}(n^{1/2}))$ upper bound
(for PTFs over $\{0, 1\}^n$) on the threshold weight of \acz, one would obtain an 
$\exp(\tilde{O}(n^{1/2}))$-time algorithm for PAC learning \acz. 
While our $\exp(\tilde{\Omega}(n^{2/5}))$ threshold weight lower bound for \acz\
does not rule out this possibility, it does establish new limitations for this technique. 
In particular, our threshold weight lower bound
implies that even if faster algorithms for agnostically learning
conjunctions are discovered, this boosting-based
approach to learning \acz\ cannot run in time
better than $\exp\left(\tilde{\Omega}(n^{2/5})\right)$.

\paragraph{Attribute-Efficient Learning.}
Attribute-efficient learning is a clean framework that captures the challenging and important problem of learning in the presence of irrelevant information \cite{blum}. 
A class $\mathcal{C}$ of Boolean functions over $\bits^n$ is said to be attribute-efficiently learnable if there is a $\poly(n)$-time online algorithm that learns any $f \in \mathcal{C}$ with mistake bound polynomial in the representation size of $f$. For example, the concept class
of read-once DNFs that depend on $k \ll n$ of their input
variables is attribute-efficiently learnable
if there is an online learning algorithm for this class
that runs in time $\poly(n)$ per example and
achieves mistake bound $\poly(k, \log n)$. 

Attribute-efficient learning is a challenging
problem, and many simple concept classes are not known
to be attribute-efficiently learnable, including decision lists and 
read-once DNFs. 
The Generalized Winnow algorithm, run over
the feature-space of low-degree parities, marks the best progress toward attribute-efficient learning of 
these concept classes (see e.g. \cite{ksomb, thalercolt}). Prior to our work,
it was unknown whether this approach could learn read-once
DNFs depending on $k$ variables in time $\exp\left(\tilde{O}(n^{1/3})\right)$ per example
and with mistake bound $\poly(k, \log n)$, as such a guarantee would
hold if every read-once DNF on $n$ variables were computed by a polynomial
threshold function of degree $\tilde{O}(n^{1/3})$ and weight $\poly(n)$.
Corollary \ref{cor:colt} rules out this possibility in a very strong sense,
as it implies the existence of 
a read-once DNF that cannot be computed by any PTF
of $\poly(n)$ weight, unless the degree is $\tilde{\Omega}(n)$.
\eat{\begin{corollary}
The Generalized Winnow algorithm, run over
the feature-space of parities of degree at most $d$,
requires time $\Omega({n \choose d})$ 
to achieve mistake bound $\max\left(\exp\left(\Omega(n/d^2)\right), \exp\left(\Omega(\sqrt{n/d})\right)\right)$.
\end{corollary}
}
Similarly, Corollary \ref{cor:ed-amp} establishes new limitations on
the efficiency of the Generalized Winnow algorithm in
the context of attribute-efficient learning of depth-three Boolean circuits.

\eat{

\section{Conclusion}
\label{sec:conclusion}
Approximate degree is an important measure of the complexity of a Boolean function, and as highlighted above, it has numerous applications throughout theoretical computer science. We have  established a generic form of hardness amplification
for approximate degree: a way of taking
a Boolean circuit that cannot be 
pointwise approximated by low-degree polynomials to within constant error 
in a certain one-sided sense, and constructing a deeper circuit that cannot be pointwise approximated even with very high error.
We used this hardness amplification result to 
obtain new bounds on the discrepancy and threshold weight of
\acz, as well as to obtain new lower bounds for read-once DNFs and AND-OR trees
of constant depth. Moreover, our hardness amplification techniques pave the way for
further progress -- they will automatically translate new lower bounds on the one-sided approximate degree of \acz\ into
new bounds on the threshold weight and discrepancy of \acz.
For example, our techniques show that an $\tilde{\Omega}(n)$ lower bound on 
the one-sided approximate degree of \acz\ would imply
an $\exp\left(\tilde{\Omega}(\sqrt{n})\right)$ lower bound on the threshold weight
of \acz\ and an $\exp\left(-\tilde{\Omega}(\sqrt{n})\right)$ upper bound
on the discrepancy of \acz. 

\eat{Our results naturally open a number of important directions for future work.
In this paper, we exhibited a depth-three circuit $F$ (consisting of an $\OR$ of disjoint copies of \ED) with threshold weight
$W(F, n^{2/5}) = \exp\left(\tilde{\Omega}\left(n^{2/5}\right)\right)$.
This bound is tight in the sense that there exists a 
PTF of degree $\tilde{O}(n^{2/5})$ and weight $\exp\left(\tilde{O}(n^{2/5})\right)$ that computes $F$. However, 
we conjecture that $F$ in fact has \emph{threshold degree}
$\tilde{\Omega}(n^{2/5})$; that is, for a sufficiently small constant $c$,
we conjecture that $W(F, c n^{2/5}/\log n) = \infty$. Such a lower bound would 
represent the first super-polylogarithmic improvement over
Minsky and Papert's seminal $\Omega(n^{1/3})$ 
lower bound on the threshold degree of \acz\ from 1968 \cite{mp, os}.}

A remaining interesting problem is to determine
the discrepancy of polynomial-size DNF formulas.
We showed an $\exp\left(-\tilde{\Omega}(n^{2/5})\right)$
upper bound for the discrepancy of polynomial-size depth-three circuits,
but for DNFs the best known upper bound remains
$\exp\left(-\Omega(n^{1/3})\right)$,  while
the best known lower bound is $\exp\left(-\tilde{O}(n^{1/2})\right)$
(this follows from an intermediate result of 
Klivans and Servedio \cite{ks}).
Closing this gap would settle O'Donnell and Servedio's question of whether 
the Generalized Winnow or Perceptron algorithms can 
learn DNF formulas in time $\exp\left(\tilde{O}(n^{1/3})\right)$.
}
\medskip
\noindent \textbf{Acknowledgements.} We are grateful 
to Sasha Sherstov, Robert \v{S}palek, Li-Yang Tan, and the anonymous reviewers for valuable feedback
on earlier versions of this manuscript.

\appendix

\section{Final Details of the Proof of \thmref{thm:big}} \label{app:proof1}

\subsection{Proof of Claim \ref{claim:show2}}
Let $\mu$ be the distribution on $\left(\{-1, 1\}^{m}\right)^t$ given by $\mu(x_1, \dots, x_t) = \prod_{i=1}^{t} |\psi(x_i)|$. Since $\psi$ is orthogonal to the constant polynomial, it has expected value 0, and hence the string $(\dots, \sgn(\psi(x_i)), \dots)$ is distributed uniformly in $\{-1, 1\}^{t}$
when one samples $(x_1, \dots, x_t)$ according to $\mu$. 
Thus,
$$ \sum_{(x_1, \dots, x_t) \in \left(\{-1, 1\}^{m}\right)^t} |\zeta(x_1, \dots, x_t)| = \sum_{z \in \{-1, 1\}^t} |\Psi(z)| = |\Psi(\mathbf{1})| + |\Psi(\mathbf{-1})| = 1,$$ proving \eqref{eq:show2}. 
\qed

\subsection{Proof of Claim \ref{claim:show3}}
We prove that the polynomial $\zeta$ defined in \eqref{eq:zeta} satisfies \eqref{eq:show3}, reproduced here for convenience.

$$ \sum_{(x_1, \dots, x_{t}) \in \left(\{-1, 1\}^{m}\right)^t}  \zeta(x_1, \dots, x_{t})\chi_S(x_1, \dots, x_t) =0   \text{ for each } |S| \le d.\quad \quad (\ref{eq:show3})$$

To prove \eqref{eq:show3}, notice that since $\Psi$ is orthogonal on $\{-1, 1\}^t$ to constant functions, we have the Fourier representation
$$\Psi(z) =	\sum_{\substack{T \subseteq \{1, \dots, t\} \\ T \neq \emptyset}}	\hat{\Psi}(T) \chi_T(z)$$
for some reals $\hat{\Psi}(T)$. We can thus write
\[\zeta(x_1, \dots, x_t) = 2^t \sum_{T \neq \emptyset} \hat{\Psi}(T) \prod_{i \in T} \psi(x_i)\prod_{i \notin T} |\psi(x_i)|.\]
Given a subset $S \subseteq \{1, \dots, t\} \times \{1, \dots, m\}$ with $|S| \le d$, partition $S = (\{1\} \times S_1) \cup \dots \cup (\{t\} \times S_t)$ where each $S_i \subseteq \{1, \dots, m\}$. Then
\begin{align*}
\sum_{(x_1, \dots, x_{t}) \in \left(\{-1, 1\}^{m}\right)^t}& \zeta(x_1, \dots, x_{t})\chi_S(x_1, \dots, x_t) \\
&= 2^t \sum_{T \neq \emptyset} \hat{\Psi}(T) \prod_{i \in T} \underbrace{\left( \sum_{x_i \in \{-1, 1\}^m} \psi(x_i) \chi_{S_i}(x_i)\right)} \prod_{i \notin T} \left(\sum_{x_i \in \{-1, 1\}^m} |\psi(x_i)| \chi_{S_i}(x_i)\right).
\end{align*}
Since $|S| \le d$, we have that $|S_i| \le d$ for every index $i \in \{1, \dots, t\}$. Thus for each set $T$, at least one of the underbraced factors is zero, as $\chi_{S_i}$ is orthogonal to $\psi$ whenever $|S_i| \le d$.

\section{Proof of Symmetrization Lemma \ref{lem:ambainis-sym}} \label{app:ambainis}

We now give a proof of Lemma \ref{lem:ambainis-sym}, which roughly shows that the symmetrization map $p \mapsto p^{\mathrm{sym}}$ does not increase the degree of $p$ by too much. The notation in the lemma and proof is defined in Section \ref{sec:intro-odeg-acz}. We also use the shorthand $\sigma \cdot x \cdot \pi$ to denote the boolean vector $y$ for which $g_y = \sigma \circ g_x \circ \pi$.

\lemambainissym*

The proof proceeds in three steps. In the first step, we perform a change of variables showing that $p(x)$ can be written as another polynomial $q(t)$, where $\deg q \le \deg p$. An input $t \in \{0, 1\}^{N \cdot R}$ to the new polynomial $q$ offers a different representation of a function $g_t : [N] \to [R]$ as follows: the variable $t_{ij} = 1$ if $g_t(i) = j$, and $t_{ij} = 0$ otherwise.

In the second step, we apply a lemma of Ambainis \cite{ambainis}, which shows that $q$ can be partially symmetrized to yield a polynomial $Q$ over yet a different set of variables, again without increasing its degree. This symmetrization yields a polynomial $Q$ whose input now represents a function in a manner invariant under permutations of the function's domain. Specifically, the inputs to the polynomial $Q$ are now variables $z_{j}$, where $z_j$ counts the number of $i \in [N]$ for which some function $g_z(i) = j$. Notice that if $g_w = g_z \circ \pi$ for a permutation $\pi$, then $w = z$ and hence $Q(w) = Q(z)$.

The third and final step is to symmetrize the polynomial $Q$ once again (without increasing its degree) so that it is invariant under permutations of its input variables $z_j$. Again interpreting each $z_j$ as the number of $i$ for which some function $g_z(i) = j$, the resulting polynomial $Q^{\mathrm{sym}}$ is now invariant under permutations of both the domain \emph{and} codomain of the function $g_z$.

However, in terms of the original variables $x$, the new variables $z_j$ are each polynomials of degree $\log_2 R$. Therefore, converting the fully symmetrized polynomial $Q^{\mathrm{sym}}$ back into a polynomial $p^{\mathrm{sym}}$ over $x$ potentially incurs a $\log_2 R$ factor blow-up in degree.

\begin{proof}
Let $p : \{-1, 1\}^m \to \R$ be a polynomial of degree $d$. Recall that our proof proceeds in three stages. In the first, we define a new polynomial $q$ over a different set of variables $t \in \{0, 1\}^{N \cdot R}$, and show that $\deg q \le \deg p$. To this end, define a map $T : \{-1, 1\}^m \to \{0, 1\}^{N\cdot R}$ by $T_{ij}(x) = 1$ if $g_x(i) = j$, and $T_{ij}(x) = 0$ otherwise. We claim that there is a polynomial $q : \{0, 1\}^{N \cdot R} \to \R$ of degree $d$ such that $q(T(x)) = p(x)$ for all $x \in \{-1, 1\}^m$. To see this, write $x$ as a list of blocks $x = (x_1, \dots, x_N)$ where each block has length $\log_2 R$, and let $x_{ik}$ denote the $k$'th bit of block $x_i$. Then
\[x_{ik} = 1 - 2\sum_{j : j_k = -1} T_{ij}(x),\]
where $j_k$ is the $k$'th bit of the binary representation of $j \in [N]$. Hence we can set
\[q(\dots, t_{ij}, \dots) = p \left(\dots,  1 - 2\sum_{j : j_k = -1} t_{ij}, \dots \right)\]
and have $q(T(x)) = p(x)$, where $\deg q \le \deg p = d$.

Recall that we can think of the variables $t_{ij}$ themselves as representing functions $g : [N] \to [R]$ whenever $t$ is the image of a boolean vector under the map $T$. That is, if $t = T(x)$ for some $x \in \{-1, 1\}^m$, then we can define the function $g_t: [N] \to [R]$ by $g_t \equiv g_x$. Specifically, $g_t(i)$ is the unique $j$ where $t_{ij} = 1$.

In the second step, we apply a lemma of Ambainis \cite{ambainis} shows that we can symmetrize the polynomial $q$, again without increasing its degree.

\begin{lemma}[\cite{ambainis}, Lemma 3.4]
Let $q : \{0, 1\}^{N \cdot R} \to \R$ be a polynomial of degree $d$ . Then there is a polynomial $Q: \{0, 1, \dots, N\}^R \to \R$ of degree $d$ such that
\[Q(\dots, \sum_{i = 1}^n t_{ij}, \dots) = \E_{\pi}[q(t \cdot \pi)]\]
whenever $t = T(x)$ for some $x \in \{-1, 1\}^m$. Here, $t \cdot \pi$ is shorthand for the $s$ for which $g_s = g_t \circ \pi$.
\end{lemma}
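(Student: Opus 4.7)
The plan is to compute $\mathbb{E}_\pi[q(t\cdot\pi)]$ monomial-by-monomial and exhibit it explicitly as a polynomial in the column sums $z_j := \sum_{i=1}^N t_{ij}$ when $t$ is in the image of $T$. Write $q$ as a multilinear sum $q(t) = \sum_{S} a_S \prod_{(i,j) \in S} t_{ij}$, where the sum is over subsets $S \subseteq [N] \times [R]$ with $|S| \le d$. By linearity of expectation,
\[\mathbb{E}_\pi[q(t \cdot \pi)] = \sum_S a_S \cdot \mathbb{E}_\pi\left[\prod_{(i,j) \in S} t_{\pi(i), j}\right].\]
It thus suffices to show that, restricted to inputs $t$ in the image of $T$, each expectation on the right is a polynomial in $z_1,\dots,z_R$ of degree at most $|S|$.

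I will classify each index set $S$ as \emph{valid} if its first coordinates are distinct (so $S$ is the graph of a partial function $\phi_S : I(S) \to [R]$ for some $I(S) \subseteq [N]$) and \emph{invalid} otherwise. For invalid $S$, the monomial $\prod_{(i,j)\in S} t_{\pi(i),j}$ vanishes identically for every $\pi$ when $t$ lies in the image of $T$, since each row of such $t$ has a unique nonzero entry. For valid $S$, associate to it the \emph{type} $(k_1,\dots,k_R)$ where $k_j := |\phi_S^{-1}(j)|$, so that $\sum_j k_j = |S|$.

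The key calculation is to evaluate $\mathbb{E}_\pi[\prod_{(i,j)\in S} t_{\pi(i),j}]$ for valid $S$ and $t = T(x)$. Here the monomial takes values in $\{0,1\}$, equaling $1$ precisely when $g_t(\pi(i)) = j$ for every $(i,j) \in S$. To count such permutations: for each $j \in [R]$, one must inject the $k_j$ indices in $\phi_S^{-1}(j)$ into the $z_j$-element preimage $g_t^{-1}(j)$, which contributes $\prod_j (z_j)_{k_j}$ choices (where $(x)_k := x(x-1)\cdots(x-k+1)$ denotes the falling factorial); since the preimages for distinct $j$ are disjoint, these choices are independent. One then extends $\pi$ freely on $[N] \setminus I(S)$ in $(N-|S|)!$ ways. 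Dividing by $N!$ yields
\[\mathbb{E}_\pi\left[\prod_{(i,j) \in S} t_{\pi(i),j}\right] = \frac{\prod_{j=1}^R (z_j)_{k_j}}{(N)_{|S|}},\]
a polynomial in $z_1,\dots,z_R$ of degree exactly $|S|$.

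Substituting back into the sum over $S$ and taking $Q(z_1,\dots,z_R)$ to be the formal polynomial obtained in this way produces the desired $Q$ of degree at most $d$, defined (as a polynomial in indeterminates) on all of $\{0,1,\dots,N\}^R$. The main obstacle is the combinatorial counting argument in the display above; everything else is routine bookkeeping. One minor subtlety to confirm is that the identity we actually derive only holds on the simplex $\{z : \sum_j z_j = N\}$, but since $Q$ is a polynomial, its extension to the full grid is automatic and consistent with the stated conclusion.
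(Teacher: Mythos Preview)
Your proof is correct. The paper itself does not prove this lemma; it is quoted as a black box from Ambainis~\cite{ambainis} and applied directly in the proof of Lemma~\ref{lem:ambainis-sym}. Your argument --- expanding $q$ multilinearly, discarding monomials with repeated row indices (which vanish on the image of $T$), and computing the expectation of each surviving monomial as $\prod_j (z_j)_{k_j}/(N)_{|S|}$ via the injection-counting argument --- is the standard derivation and matches what Ambainis does in the original reference. The counting is right: the targets $g_t^{-1}(j)$ are disjoint because $g_t$ is a function, so the falling-factorial factors multiply independently, and the $(N-|S|)!$ extensions give the denominator $(N)_{|S|}$. Your remark about the simplex is also handled correctly, since $Q$ is \emph{defined} as the formal polynomial and the identity is only asserted on inputs in the image of $T$.
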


Applying the lemma yields a polynomial $Q$ such that 
\[Q(\dots, \sum_{i = 1}^n T_{ij}(x), \dots) = \E_{\pi}[q(T(x) \cdot \pi)] = \E_{\pi}[p(x \cdot \pi)].\]
For each $j = 1, \dots, R$, let $Z_{j}(x) = \sum_{i = 1}^n T_{ij}(x)$. Notice that $Z_{j}(x)$ counts the number of inputs $i \in [N]$ for which $g_x(i) = j$. This implies that for any permutation $\sigma$ on the range $[R]$, we also have that $Z_{\sigma^{-1}(j)}(x)$ counts the number of $i \in N$ for which $(\sigma \circ g_x)(i) = j$. Hence for any fixed $\sigma$,
\[Q(Z_{\sigma^{-1}(1)}(x), \dots, Z_{\sigma^{-1}(R)}(x)) = \E_\pi[p(\sigma \cdot x \cdot \pi)].\]

This observation allows us to complete the third step of the proof, which is symmetrization of the polynomial $Q$. That is, if we let
\[Q^{\mathrm{sym}}(z) = \E_{\sigma}[Q(z_{\sigma(1)}, \dots, z_{\sigma(R)})],\]
then
\[Q^{\mathrm{sym}}(Z_1(x), \dots, Z_R(x)) = \E_{\sigma, \pi}[p(\sigma \cdot x \cdot \pi)] = p^{\mathrm{sym}}(x).\]

Notice that since $Q$ is a polynomial of degree $d$, the symmetrization $Q^{\mathrm{sym}}$ clearly also has degree $d$. To complete the argument, we need to show that each function $Z_j(x)$ is a polynomial of degree $\log_2 R$. To see this, recall that $Z_j$ is a linear combination of the functions $T_{ij}$, so it suffices to show that $T_{ij}$ has degree $\log_2 R$. The function $T_{ij}(x)$ evaluates to $1$ if the block $x_i$ equals to binary representation of $j$, and evaluates to $0$ otherwise. Thus we can write
\[T_{ij}(x) = R\prod_{k = 1}^{\log_2 R} (j_kx_{ik} + 1),\]
where $x_{ik}$ is the $k$'th bit of block $x_i$, and $j_k$ is the $k$'th bit of the binary representation of $j$. This expression shows that $T_{ij}$ has degree $\log_2 R$, so the polynomial $p^{\mathrm{sym}}$ has degree $d \log_2 R$.

\eat{
Now write
\[p^{\mathrm{sym}}(x) = \E_{y \sim x} [p(y)] = \E_{y \sim x} [q(T(y))].\]
We need to show that $\deg p^{\mathrm{sym}} \le d$. Since $\deg q \le d$, by linearity of expectations it suffices to show that if $I$ is a single monomial in the variables $t_{ij}$ of degree $d$, then $\E[I(T(y))]$ also has degree $d$. To this end, let $I = t_{i_1j_1}t_{i_2j_2}\dots t_{i_dj_d}$ be a monomial. We may assume that the $i_\ell$ for $\ell = 1, 2, \dots, d$ are distinct, since any product $t_{ij}^2 = t_{ij}$ has a redundant variable, and $T_{ij}(x)T_{ij'}(x) = 0$ for any $j \ne j'$, since $g_x(i)$ cannot be both $j$ and $j'$ at the same time. We can then write
\[\E_{y \sim x}[I(T(y))] = \Pr_{y \sim x}[T_{i_1j_1}(y) = 1] \prod_{\ell = 2}^d \Pr_{y \sim x}[T_{i_\ell j_\ell}(y) = 1 | T_{i_1j_1}(y) = \dots = T_{i_{\ell-1} j_{\ell-1}}(y) = 1]\]
}
\end{proof}

\section{One-Sided Approximate Degree of \ED: Alternative Proof of Corollary \ref{cor:ed}}\label{app:as}

Improving on results of Aaronson and Shi \cite{aaronsonshi}, Ambainis \cite{ambainis} showed that the \ED\ problem with small range has approximate degree $\tilde{\Omega}(m^{2/3})$. Recall that the \ED\ problem on input size $m = N \log_2 N$, where $N$ is a power of $2$, takes as input $N$ blocks of length $\log_2 N$ and evaluates to $-1$ if and only if the blocks are distinct. We will show that there is a dual witness $\Psi$ for the high approximate degree of \ED\ having one-sided error. Hence,
this dual witness actually demonstrates that \ED\ has high
\emph{one-sided} approximate degree.

The idea is that any dual witness for \ED\ can be ``symmetrized'' to produce a new dual witness $\Psi$ that is constant on inputs $x \in T$, where $T$ is the set of inputs for which \ED\ evaluates to true. We then use the fact that $\Psi$ is balanced to argue
that the \emph{total} correlation of $\Psi$ with \ED\ is a constant
multiple of the correlation restricted to inputs in $T$. Since $\Psi$
has positive correlation with \ED, it follows that $\Psi$
must have the correct sign on all inputs in $T$, as desired.

Formally, let $\psi$ be a dual witness for the fact that $f = $ \ED\ has $\eps$-approximate degree $d = \tilde{\Omega}(m^{2/3})$ for some constant $\eps$. By \thmref{thm:prelim},

\begin{equation} \label{eq:ed0} \sum_{x \in \{-1, 1\}^m} f(x) \psi(x) > \eps, \end{equation}
\begin{equation} \label{eq:ed1} \sum_{x \in \{-1, 1\}^m} |\psi(x)| = 1,\end{equation}  and
\begin{equation} \label{eq:ed2} \sum_{x \in \{-1, 1\}^m} \psi(x) \chi_S(x)=0   \text{ for each } |S| \leq d.\end{equation}

For any permutation $\sigma \in S^N$, and $x = (x_1, \dots, x_N) \in \{-1, 1\}^m$, define
\[\sigma(x) = (x_{\sigma(1)}, \dots, x_{\sigma(N)}).\]
That is, $\sigma$ acts on $\{-1, 1\}^m$ by permuting the $N$ blocks of length $\log N$. Observe that for every permutation $\sigma$ and every $x \in \{-1, 1\}^m$, 
\begin{equation} \label{eq:ed-inv}
f(\sigma(x)) = f(x).
\end{equation}
Now define the symmetrized dual witness
\[\Psi(x) = \E_{\sigma \in S^N}[\psi(\sigma(x))].\]
We will show that $\Psi$ is a dual witness for $f$ with one-sided error by checking the conditions of \thmref{thm:oprelim}. First,
\begin{align*}
\sum_{x \in \{-1, 1\}^m} \Psi(x) f(x) &= \E_{\sigma \in S^N} \left[\sum_x \psi(\sigma(x)) f(x)\right]\\
&= \E_{\sigma \in S^N} \left[\sum_x \psi(x) f(x)\right] & \text{by \eqref{eq:ed-inv}}\\
&> \epsilon & \text{by (\ref{eq:ed0}),}
\end{align*}
verifying (\ref{eq:oprelim0}). Condition (\ref{eq:oprelim1}) is immediate from (\ref{eq:ed1}). Condition (\ref{eq:oprelim2}) follows because
\[\sum_{x \in \{-1, 1\}^m} \Psi(x) \chi_S(x) = \E_{\sigma \in S^N} \left[\sum_x \psi(x) \chi_{\sigma(S)}(x)\right]\]
where $\sigma(S) = \{\sigma(i) : i \in S\}$ and from (\ref{eq:ed2}).

Finally, we check the one-sided error condition (\ref{eq:oprelim3}). We will first show that $\Psi$ is constant on $f^{-1}(-1)$. Let $x^* = (x^*_1, \dots, x^*_N)$ where $x^*_i$ is the binary encoding of $i$. Since there are only $N$ distinct strings of length $\log N$, $f(x) = -1$ if and only if $x = \sigma_x(x^*)$ for some $\sigma_x \in S^N$. Therefore, if $f(x) = -1$, then
\[\Psi(x) = \E_{\sigma \in S^N}[\psi(\sigma(x))] = \E_{\sigma \in S^N}[\psi((\sigma \circ \sigma_x)(x^*))] = \Psi(x^*),\]
so $\Psi$ is constant on $f^{-1}(-1)$.

By condition (\ref{eq:oprelim0}) it holds that
\[\sum_{x \in f^{-1}(1)} \Psi(x) - \sum_{x \in f^{-1}(-1)} \Psi(x) > \eps,\]
 and by condition  (\ref{eq:oprelim1}) applied to $\chi_S$ for $S = \emptyset$ it holds that
\[\sum_{x \in f^{-1}(1)} \Psi(x) + \sum_{x \in f^{-1}(-1)} \Psi(x) = 0.\]
Subtracting the second equation from the first, we conclude that
\[-2 \sum_{x \in f^{-1}(-1)} \Psi(x) > \eps.\]
Since $\Psi$ is constant on $f^{-1}(-1)$, this implies that $\Psi(x) < 0$ whenever $x \in f^{-1}(-1)$, proving (\ref{eq:oprelim3}).

\section{Degree Independent Threshold Weight Bounds via Duality} \label{app:threshold-weight}

In this section, we use the dual characterization of threshold weight to give a new proof of a version of Krause's result translating degree-$d$ threshold weight lower bounds for a function $F$ into degree independent threshold weight lower bounds for a related function $F'$. Specifically, we prove the lemma

\lemthresholdweight*

\begin{proof}
By \thmref{thm:twdual} (condition (\ref{eq:twprelim0})), it suffices to exhibit a distribution $\mu'$ over $\bits^{3n}$ for which
\begin{equation*} |\E_{(x, y, z) \sim \mu'} [F'(x, y, z) \chi_S(x, y, z)]| \le \max \left\{\left(\frac{2n}{W(F, d)}\right)^{1/2}, 2^{-d/2}\right\} \text{ for all } S \subseteq \{1, \dots, 3n\}.\end{equation*}

We construct the distribution $\mu'$ as follows. By condition (\ref{eq:twprelim1}) of \thmref{thm:twdual}, there is a probability distribution $\mu$ over $\bits^n$ such that
\begin{equation} \label{eq:muinappendix} |\E_{w \sim \mu} [F(w) \chi_S(w)]| \le \left(\frac{2n}{W(F, d)}\right)^{1/2} \text{ for each } |S| \le d.\end{equation}
Define $\mu'(x, y, z) = 2^{-2n}\mu(\sel_z(x, y))$, where $\sel_z(x, y) = (\dots, (\bar{z}_i \land x_i) \lor (z_i \land y_i), \dots)$ selects for each index in $[n]$ a bit from either $x$ or $y$ according to $z$. The distribution $\mu'$ has a natural interpretation as follows: it first selects the string $z$ uniformly at random from $\{-1, 1\}^n$. Next, it sets the values of the variables in $(x, y)$ that are selected by $z$ so that they are distributed according to the distribution $\mu$. Finally, it sets the values of the unselected variables in $(x, y)$ uniformly at random. 

Note that $\mu'$ is indeed a probability distribution, as for every string $w \in \bits^n$, there are exactly $2^{2n}$ strings $(x, y, z)$ for which $\sel_z(x, y) = w$. Moreover, this observation allows us to write
\[\E_{(x, y, z) \sim \mu'} [F'(x, y, z) \chi_S(x, y, z)] = 2^{-2n} \sum_{w \in \bits^n} F(w) \mu(w) \sum_{(x, y, z): \sel_z(x, y) = w} \chi_S(x, y, z). \]
Write $S$ as the disjoint union $(\{1\} \times S_1) \cup (\{2\} \times S_2) \cup (\{3\} \times S_3)$ where $S_1, S_2, S_3$ correspond to indices in $x, y, z$ respectively. Then the expectation becomes
\[2^{-2n} \sum_{z \in \bits^n} \chi_{S_3}(z) \underbrace{\sum_{w \in \bits^n} F(w) \mu(w) \sum_{(x, y): \sel_z(x, y) = w} \chi_{S_1}(x)\chi_{S_2}(y)}\]
Let $G(z)$ denote the underbraced sum. 

Suppose there is an index $i \in S_3$ that is not contained in $S_1 \cup S_2$. Then for every $z \in \bits^n$, the string $z^i$ obtained from $z$ by flipping the bit at index $i$ satisfies $\chi_{S_3}(z^i) = -\chi_{S_3}(z)$. On the other hand,  
for any $(x, y) \in \{-1, 1\}^{2n}$, if we set $x' = (x_1, \dots, x_{i-1}, y_i, x_{i+1}, \dots, x_n)$ and analogously set $y' = (y_1, \dots, y_{i-1}, x_i, y_{i+1}, \dots, y_n)$, then $\sel_z(x, y) = \sel_{z^i}(x', y')$. Moreover,
because $i \not\in S_1 \cup S_2$, it holds that $\chi_{S_1}(x')\chi_{S_2}(y') = \chi_{S_1}(x)\chi_{S_2}(y)$. It follows that $G(z) = G(z^i)$, as each term $(x, y)$ in the underbraced sum defining $G(z)$ is ``matched''
by term $(x', y')$ in the underbraced sum defining $G(z^i)$. 
When combined with the fact that $\chi_{S_3}(z^i) = -\chi_{S_3}(z)$, we
 see that the terms corresponding to $z$ and $z^i$ in the outer sum 
 cancel out, and hence the entire outer sum evaluates to zero. We conclude that for the expectation to be nonzero, we must have $S_3 \subseteq S_1 \cup S_2$, and we assume this
 holds for the remainder of the proof.

Consider any $i \in S_1$. Then we claim that $G(z) = 0$ whenever $z_i$ selects $y_i$, i.e., for any $z$ such that $z_i = -1$. This can be seen by another pairing argument: If $\sel_z(x, y) = w$ but $z_i$ selects $y_i$, then $\sel_z(x^i, y) = w$ as well. However, $\chi_{S_1}(x) = - \chi_{S_1}(x^i)$ because $i \in S_1$. This ensures that the innermost sum is zero and hence $G(z) = 0$. The analogous statement holds also for any $i \in S_2$, so  for $G(z)$ to be nonzero, it must hold that
$z_i = 1$ for all $i \in S_1$ and $z_i = -1$ for all $i \in S_2$. Below, we refer to such a $z$ as a ``contributing'' $z$, and all other
values of $z$ as ``non-contributing''. In particular, we must have $S_1 \cap S_2 = \emptyset$ for $z$ to be contributing.

For any fixed contributing $z$, it holds that
\[\sum_{(x, y): \sel_z(x, y) = w} \chi_{S_1}(x)\chi_{S_2}(y) = 2^n\chi_{S_1 \cup S_2}(w).\]
Therefore, it holds that
\begin{align}
\notag
|\E_{(x, y, z) \sim \mu'} [F'(x, y, z) \chi_S(x, y, z)]| &= 2^{-2n}\left|\sum_{z \in \bits^n} \chi_{S_3}(z) G(z)\right| \\
&\notag \le 2^{-n} \sum_{z:G(z) \ne 0} \left|\sum_{w \in \bits^n} F(w) \mu(w) \chi_{S_1 \cup S_2}(w)\right| \\
&\le 2^{-|S_1| -|S_2|}\left|\sum_{w \in \bits^n} F(w) \mu(w) \chi_{S_1 \cup S_2}(w)\right|, \label{finaldamnineq}
\end{align}
where inequality (\ref{finaldamnineq}) used the fact that $G(z)=0$
for any non-contributing $z$.

Now we consider two cases for the size of $S$. First suppose $|S| \le d$, so in particular, $|S_1 \cup S_2| \le d$. Then \eqref{eq:muinappendix} and inequality (\ref{finaldamnineq}) implies that
\[|\E_{(x, y, z) \sim \mu'} [F'(x, y, z) \chi_S(x, y, z)]| \le \left(\frac{2n}{W(f, d)}\right)^{1/2}.\]
Second, suppose that $|S| > d$. We have argued that if
$\E_{(x, y, z) \sim \mu'} [F'(x, y, z) \chi_S(x, y, z)]  \neq 0$, then
 $S_3 \subseteq S_1 \cup S_2$.
Hence, it must be the case that $|S_1| + |S_2| \ge |S|/2 > d/2$. Therefore, 
inequality (\ref{finaldamnineq}) implies that $\E_{(x, y, z) \sim \mu'} [F'(x, y, z) \chi_S(x, y, z)]  \leq 2^{-d/2}$.
This completes the proof.

\eat{
Now we argue that for the expectation to be nonzero, we must have $S_1 \cap S_2 = \emptyset$. 

\begin{align*}
|\E[\chi_S(x, y, z) | \sel_z(x, y) = w]| &= 2^{-2n} \left|\sum_{z \in \bits^n} \chi_{S_3}(z) \sum_{(x, y) : \sel_z(x, y) = w} \chi_{S_1}(x) \chi_{S_2}(y)\right| \\
&\le 2^{-2n} \sum_{z \in \bits^n} \chi_{S_3}(z) \underbrace{\left|\sum_{x : \exists y' \sel_z(x, y') = w} \chi_{S_1}(x)\right|} \cdot \left|\sum_{y : \exists x' \sel_z(x', y) = w} \chi_{S_2}(y)\right| \\
\end{align*}
since $x$ and $y$ are conditionally independent given $\sel_z(x, y) = w$ for any fixed $w, z$. We now analyze the underbraced factor. Let $X = \{x : \exists y' \sel_z(x, y') = w\}$ be the support of the sum, and let $T \subset [n]$ be the set of indices in $x$ that are selected by $z$, i.e. the set of $i$ for which $z_i = 1$. Observe that if index $i \notin T$ is not selected, then for every $x \in X$, the input $x^i$ obtained by flipping the bit at index $i$ is also contained in $X$. Therefore, if there is some $i \in S_1 \setminus T$, then every $x \in X$ pairs up with $x^i \in X$ so that $\chi_{S_1}(x) + \chi_{S_1}(x^i) = 0$, and thus the entire underbraced factor is zero. So in order for this factor to be nonzero, we must have $T \supset S_1$, and hence $z_i = 1$ for all $i \in S_1$. A similar argument shows that we must have $z_i = -1$ for all $i \in S_2$ in order for the factor depending on $y$'s to be nonzero. Thus the expectation is at most
\[2^{-2n} \sum_{z: z_i = 1 \forall i \in S_1, z_i = -1 \forall i \in S_2} \chi_{S_3}(z) \cdot 2^{n - |z|} \cdot 2^{|z|} = 2^{-n} \sum_{z: z_i = 1 \forall i \in S_1, z_i = -1 \forall i \in S_2} \chi_{S_3}(z).\]
Note that this sum is empty unless $S_1$ and $S_2$ are disjoint. Moreover, another pairing argument shows that this sum is zero unless $S_3$ is empty. Therefore, the sum is maximized when $S_1$ and $S_2$ partition $S$,

On the other hand, if $T \supset S_1$, then at least $|S_1|$ indices in $x$ are fixed by $z$, so the underbraced factor is at most $|X| \le 2^{n-|S_1|}$. An identical argument holds for the other inner sum, so the expectation is at most
\[2^{-2n} \sum_{z \in \bits^n} \chi_{S_3}(z) 2^{n-|S_1|} \cdot 2^{n-|S_2|}\]
}
\end{proof}

\end{document}